\def\singlespace{\def\baselinestretch{1}\@normalsize}
\def\singlespace{\def\baselinestretch{1}\@normalsize}
\theoremstyle{plain}
\newtheorem{thm}{Theorem}[section]
\newtheorem{cor}{Corollary}[section]
\newtheorem{rem}{Remark}[section]
\newtheorem{lem}{Lemma}[section]
\newtheorem{ass}{Assumption}[section]
\newcommand{\bfm}[1]{\ensuremath{\mathbf{#1}}}
\def\ba{\bfm a}     \def\bA{\bfm A}          
\def\bb{\bfm b}     \def\bB{\bfm B}          
\def\bc{\bfm c}               
     \def\bD{\bfm D}          
\def\be{\bfm e}     \def\bE{\bfm E}     \def\cE{{\cal  E}}     
\def\bff{\bfm f}    \def\bF{\bfm F}
     \def\bI{\bfm I}
     \def\bM{\bfm M}          
          \def\cN{{\cal  N}}     
\def\bp{\bfm p}     \def\bP{\bfm P}          
     \def\bQ{\bfm Q}          
     \def\bR{\bfm R}          
     \def\bS{\bfm S}     \def\cS{{\cal  S}}     
\def\bu{\bfm u}     \def\bU{\bfm U}          
\def\bv{\bfm v}     \def\bV{\bfm V}          
\def\bw{\bfm w}     \def\bW{\bfm W}          
\def\bx{\bfm x}     \def\bX{\bfm X}          
\def\by{\bfm y}               
\def\bz{\bfm z}               
\def\bzero{\bfm 0}
\newcommand{\bfsym}[1]{\ensuremath{\boldsymbol{#1}}}
\newcommand{\sam}{\mathrm{sam}}
\def \balpha   {\bfsym{\alpha}}       \def \bbeta    {\bfsym{\beta}}
\def \bgamma   {\bfsym{\gamma}}       
\def \bvarepsilon {\bfsym{\varepsilon}}
\def \bvarepsilon {\bfsym{\varepsilon}}
         \def \btheta   {\bfsym{\theta}}
      \def \bmu      {\bfsym{\mu}}
          \def \bxi      {\bfsym{\xi}}
\def \bTheta   {\bfsym{\Theta}}       
\def \bSigma   {\bfsym{\Sigma}}
\renewcommand{\hat}{\widehat}
\def \heps     {\hat{\heps}}
\DeclareMathOperator*{\argmin}{argmin}
\DeclareMathOperator{\cov}{cov}
\DeclareMathOperator{\diag}{diag}
\DeclareMathOperator{\E}{E}
\DeclareMathOperator{\rank}{rank}
\DeclareMathOperator{\sgn}{sgn}
\DeclareMathOperator{\supp}{supp}
\DeclareMathOperator{\sr}{sr}
\def \sgn   {\mbox{sgn}}
\newcommand{\row}{\mathrm{row}}
\newcommand{\coln}{\mathrm{col}}
\newcommand{\xx}{\text{\boldmath $x$}}
\newcommand{\yy}{\text{\boldmath $y$}}
\newcounter{CondCounter}
\newcommand{\ltwonorm}[1]{\lVert#1\rVert_2}
\newcommand{\opnorm}[1]{\lVert#1\rVert_{2}}
\newcommand{\fnorm}[1]{\lVert#1\rVert_F}
\def \RR	{\mathbb{R}}
\newcommand{\R}{\mathbb{R}}
\def \PP {\mathbb{P}}
\def \bone {{\bf 1}}
\def\R{{\mathbb R}}
\def\E{{\mathbb E}}
\def\P{{\mathbb P}}
\def\diag{{\rm diag}}
\def\cov{{\rm cov}}
\def\supp{{\rm supp}}
\def\argmax{{\rm argmax}}
\def\argmin{{\rm argmin}}
\newcommand{\beq}  {\begin{equation}}
\newcommand{\eeq}  {\end{equation}}
\renewcommand{\tilde}{\widetilde}
\begin{document}

\title{\bf Robust high dimensional factor models with applications to statistical machine learning\footnote{The authors gratefully acknowledge \textit{NSF grants DMS-1712591 and DMS-1662139 and NIH grant R01-GM072611.} }}

\author{Jianqing Fan\footnote{Department of Operations Research and Financial Engineering, Sherrerd Hall, Princeton University, NJ 08544, USA (Email: jqfan@princeton.edu, kaizheng@princeton.edu, yiqiaoz@princeton.edu, zzw9348ustc@gmail.com)} \footnote{Fudan University} , Kaizheng Wang$^{\dag}$, Yiqiao Zhong$^{\dag}$, Ziwei Zhu$^{\dag}$}
\maketitle

\begin{abstract}
Factor models are a class of powerful statistical models that have been widely used to deal with dependent measurements that arise frequently from various applications from genomics and neuroscience to economics and finance.  As data are collected at an ever-growing scale, statistical machine learning faces some new challenges: high dimensionality, strong dependence among observed variables, heavy-tailed variables and heterogeneity. High-dimensional robust factor analysis serves as a powerful toolkit to conquer these challenges.

This paper gives a selective overview on recent advance on high-dimensional factor models and their applications to statistics including Factor-Adjusted Robust Model selection (FarmSelect) and Factor-Adjusted Robust Multiple testing (FarmTest). We show that classical methods, especially principal component analysis (PCA), can be tailored to many new problems and provide powerful tools for statistical estimation and inference. We highlight PCA and its connections to matrix perturbation theory, robust statistics, random projection, false discovery rate, etc., and illustrate through several applications how insights from these fields yield solutions to modern challenges. We also present far-reaching connections between factor models and popular statistical learning problems, including network analysis and low-rank matrix recovery.

\end{abstract}

{\bf Key Words:} Factor model, PCA, covariance estimation,  perturbation bounds, robustness, random sketch, FarmSelect, FarmTest
%\kwd{\LaTeXe}

\sloppy

\section{Introduction}

In modern data analytics, dependence across high-dimensional outcomes or measurements is ubiquitous. For example, stocks within the same industry exhibit significantly correlated returns, housing prices of a country depend on various economic factors, gene expressions can be stimulated by cytokines. Ignoring such dependence structure can produce significant systematic bias and yields inefficient statistical results and misleading insights. The problems are more severe for high-dimensional big data, where dependence, non-Gaussianity and heterogeneity of measurements are common.

Factor models aim to capture such dependence by assuming several variates or ``factors'', usually much fewer than the outcomes, that drive the dependence of the entire outcomes \citep{LMa62,SWa02}. Stemming from the early works on measuring human abilities \citep{Spe27}, factor models have become one of the most popular and powerful tools in multivariate analysis  and have made profound impact in the past century on psychology \citep{Bar38,MJo92}, economics and finance \citep{CRo82, FFr93, SWa02, BNg02}, biology \citep{HHC02, HCO06, LSt08}, etc. Suppose $\bx_1, \ldots, \bx_n$ are $n$ i.i.d. $p$-dimensional random vectors, which may represent financial returns, housing prices, gene expressions, etc. The generic factor model assumes that
\beq
	\label{fm}
	\begin{aligned}
		\bx_i = \bmu + \bB \bff_i + \bu_i, \quad  \textnormal{or in matrix form, } \quad \bX = \bmu \bone^\top_n + \bB\bF^\top + \bU,
    	\end{aligned}
\eeq
where $\bX = (\bx_1, \ldots, \bx_n) \in \RR^{p \times n}$, $\bmu=(\mu_1,\ldots, \mu_p)^\top$ is the mean vector, $\bB = (\bb_1, \ldots, \bb_p)^\top \in \RR^{p \times K}$ is the matrix of factor loadings, $\bF = (\bff_1, \ldots, \bff_n)^\top \in \RR^{n \times K}$ stores $K$-dimensional vectors of common factors with $\E \bff_i = \bzero $, and $\bU = (\bu_1, \ldots, \bu_n) \in \RR^{p \times n}$ represents the error terms (a.k.a.\ idiosyncratic components), which has mean zero and is uncorrelated with or independent of $\bF$. We emphasize that, for most of our discussions in the paper (except Section~\ref{sec-cov-est}), only $\{\bx_i\}_{i=1}^n$ are observable, and the goal is to infer $\bB$ and $\{\bff_i\}_{i=1}^n$ through $\{\bx_i\}_{i=1}^n$. Here we use the name ``factor model'' to refer to a general concept where the idiosyncratic components $\bu_i$ are allowed to be weakly correlated. This is also known as the ``approximate factor model'' in the literature, in contrast to the ``strict factor model'' where the idiosyncratic components are assumed to be uncorrelated.

Note that the model \eqref{fm} has identifiability issues: given any invertible matrix $\bR \in \RR^{K \times K}$, simultaneously replacing $\bB$ with $\bB\bR$ and $\bff_i$ with $\bR^{-1}\bff_i$  does not change the observation $\bx_i$. To resolve this ambiguity issue, the following identifiability assumption is usually imposed:
\begin{ass}[Identifiability]
\label{ass:1}
$\bB^\top\bB$ is diagonal and $\cov(\bff_i) = \bI_p$.
\end{ass}
\noindent Other identifiability assumptions as well as detailed discussions can be found in \cite{BLi12} and \cite{FLM13}.

Factor analysis is closely related to principal component analysis (PCA), which breaks down the covariance matrix into a set of orthogonal components and identifies the subspace that explains the most variation of the data \citep{Pea01,Hot33}. In this selective review, we will mainly leverage PCA, or more generally, spectral methods, to estimate the factors $\{\bff_i\}_{i=1}^n$ and the loading matrix $\bB$ in \eqref{fm}. Other popular estimators, mostly based on the maximum likelihood principle, can be found in \cite{LMa62,AAm88,BLi12}, etc.
The covariance matrix of $\bx_i$ consists of two components: $\cov(\bB \bff_i)$ and $\cov(\bu_i)$.
Intuitively, when the contribution of the covariance from the error term $\bu_i$ is negligible compared with those from the factor term $\bB\bff_i$, the top-$K$ eigenspace (namely, the space spanned by top $K$ eigenvectors) of the sample covariance of $\{\bx_i\}_{i=1}^n$ should be well aligned with the column space of $\bB$.  This can be seen from the assumption that $\cov(\bx_i) = \bB  \bB^\top + \cov(\bu_i) \approx \bB  \bB^\top$, which occurs frequently in high-dimensional statistics \citep{FLM13}.

Here is our main message: applying PCA to well-crafted covariance matrices (including vanilla sample covariance matrices and their robust version) consistently estimates the factors and loadings, as long as the signal-to-noise ratio is large enough. The core theoretical challenge is to characterize how idiosyncratic covariance $\cov(\bu_i) $ perturb the eigenstructure of the factor covariance $\bB  \bB^\top$. In addition, the situation is more complicated with the presence of heavy-tailed data, missing data, computational constraints, heterogeneity, etc.

The rest of the paper is devoted to solutions to these challenges and a wide range of applications to statistical machine learning problems. In Section \ref{sec:2}, we will elucidate the relationship between factor models and PCA and present several useful deterministic perturbation bounds for eigenspaces. We will also discuss robust covariance inputs for the PCA procedure to guard against corruption from heavy-tailed data. Exploiting the factor structure of the data helps solve many statistical and machine learning problems. In Section \ref{sec:3}, we will see how the factor models and PCA can be applied to high-dimensional covariance estimation, regression, multiple testing and model selection. In Section \ref{sec:4}, we demonstrate the connection between PCA and a wide range of machine learning problems including Gaussian mixture models, community detection, matrix completion, etc. We will develop useful tools and establish strong theoretical guarantees for our proposed methods.

Here we collect all the notations for future convenience. We use $[m]$ to refer to $\{ 1,2,\ldots,m \}$. We adopt the convention of using regular letters for scalars and using bold-face letters for vectors or matrices. For $\bx = (x_1,\ldots,x_p)^\top \in \RR^p$, and $1 \leq q < \infty$, we define $\|\bx\|_{q} = \bigl(\sum_{j=1}^{p}|x_{j}|^{q}\bigr)^{1/q}$, $\|\bx\|_{0}=|\supp(\bx)|$, where $\supp(\bx) = \{j : x_{j}\neq 0\}$, and $\|\bx\|_{\infty} = \max_{1\leq j\leq p} |x_{j}|$. For a matrix $\bM$, we use $\opnorm{\bM}, \fnorm{\bM}, \|\bM\|_{\max}$ and $\| \bM \|_1$ to denote its operator norm (spectral norm), Frobenius norm, entry-wise (element-wise) max-norm, and vector $\ell_1$ norm, respectively. To be more specific, the last two norms are defined by $ \| \bM \|_{\max} = \max_{j,k} |M_{jk}|$ and $\|\bM\|_{1}=\sum_{j,k}|M_{jk}|$. Let $\bI_p$ denote the $p\times p$ identity matrix, $\mathbf{1}_{p}$ denote the $p$-dimensional all-one vector, and $\mathbbm{1}_{A}$ denote the indicator of event $A$, i.e., $\mathbbm{1}_A=1$ if $A$ happens, and $0$ otherwise. We use $\mathcal{N}(\bmu,\bSigma)$ to refer to the normal distribution with mean vector $\bmu$ and covariance matrix $\bSigma$. For two nonnegative numbers $a$ and $b$ that possibly depend on $n$ and $p$, we use the notation $a = O(b)$ and $a \lesssim b$ to mean $a \le C_1b$ for some constant $C_1>0$, and the notation $a = \Omega(b)$ and $a \gtrsim b$ to mean $a \ge C_2 b$ for some constant $C_2>0$. We write $a \asymp b$ if both $a = O(b)$ and $a = \Omega(b)$ hold.  For a sequence of random variables $\{X_n\}_{n=1}^{\infty}$ and a sequence of nonnegative deterministic numbers $\{a_n\}_{n=1}^{\infty}$, we write $X_n=O_{\mathbb{P}}(a_n)$ if for any $\varepsilon>0$, there exists $C>0$ and $N>0$ such that $\mathbb{P}(|X_n|\geq Ca_n )\leq \varepsilon$ holds for all $n>N$; and we write $X_n=o_{\mathbb{P}}(a_n)$ if for any $\varepsilon>0$ and $C>0$, there exists $N>0$ such that $\mathbb{P}(|X_n|\geq Ca_n )\leq \varepsilon$ holds for all $n>N$. We omit the subscripts when it does not cause confusion.

\section{Factor models and PCA}
\label{sec:2}

\subsection{Relationship between PCA and factor models in high dimensions}\label{sec:2.1}

Under model \eqref{fm} with the identifiability condition, $\bSigma=\cov(\bx_i)$ is given by
\beq
	\bSigma = \bB  \bB^\top + \bSigma_{u} , \quad \bSigma_{u} = (\sigma_{u, jk})_{1\leq j,k\leq p} = \cov(\bu_i). \label{cov.fm}
\eeq
Intuitively, if the magnitude of $\bB \bB^\top$ dominates $\bSigma_u$, the top-$K$ eigenspace of $\bSigma$ should be approximately aligned with the column space of $\bB$. Naturally we expect a large gap between the eigenvalues of $\bB \bB^\top$ and $\bSigma_u$ to be important for estimating the column space of $\bB$ through PCA (see Figure~\ref{fig:1}). On the other hand, if this gap is small compared with the eigenvalues of $\bSigma_u$, it is known that PCA leads to inconsistent estimation \citep{johnstone2009consistency}. The above discussion motivates a simple vanilla PCA-based method for estimating $\bfm B$ and $\bfm F$ as follows (assuming the Identifiability Assumption).

\vspace{0.05in}

{\it Step 1}. Obtain an estimator $\hat{\bmu}$ and $\hat{\bfm \Sigma}$ of $\bmu$ and $\bfm \Sigma$, e.g.,\ the sample mean and covariance matrix or their robust versions.

\vspace{0.05in}

{\it Step 2}. Compute the eigen-decomposition of $\hat{\bfm \Sigma}=\sum_{j=1}^p \hat{\lambda}_j \hat{\bfm v}_j \hat{\bfm v}_j^\top$. Let $\{ \hat{\lambda}_k \}_{k=1}^K$ be the top $K$ eigenvalues and $\{\hat{\bfm v}_k \}_{k=1}^K$ be their corresponding eigenvectors. Set $\hat{\bfm V} = (\hat{\bfm v}_1,\ldots, \hat{\bfm v}_K) \in \R^{p \times K}$ and $\hat{\bfm \Lambda} = \diag(\hat{\lambda}_1, \ldots, \hat{\lambda}_K) \in \R^{K \times K}$.

\vspace{0.05in}

{\it Step 3}. Obtain PCA estimators $\hat{\bfm B} = \hat{\bfm V} \hat{\bfm \Lambda}^{1/2}$ and $\hat{\bfm F} = (\bfm X - \hat\bmu \bone^\top)^\top \hat{\bfm V} \hat{\bfm \Lambda}^{-1/2}$, namely, $\hat{\bfm B}$ consists of the top-$K$ rescaled eigenvectors of $\hat{\bSigma}$ and $\hat{\bff_i}$ is just the rescaled projection of $\bx_i-\hat{\bmu}$ onto the space spanned by the eigen-space:  $\hat{\bff_i} = \hat{\bfm \Lambda}^{-1/2} \hat{\bfm V}^T (\bx_i-\hat{\bmu})$.
\vspace*{-0.05in}

% subfigure
\begin{figure}[h]

	\begin{tabular}{cc}
     \includegraphics[width=.4\textwidth]{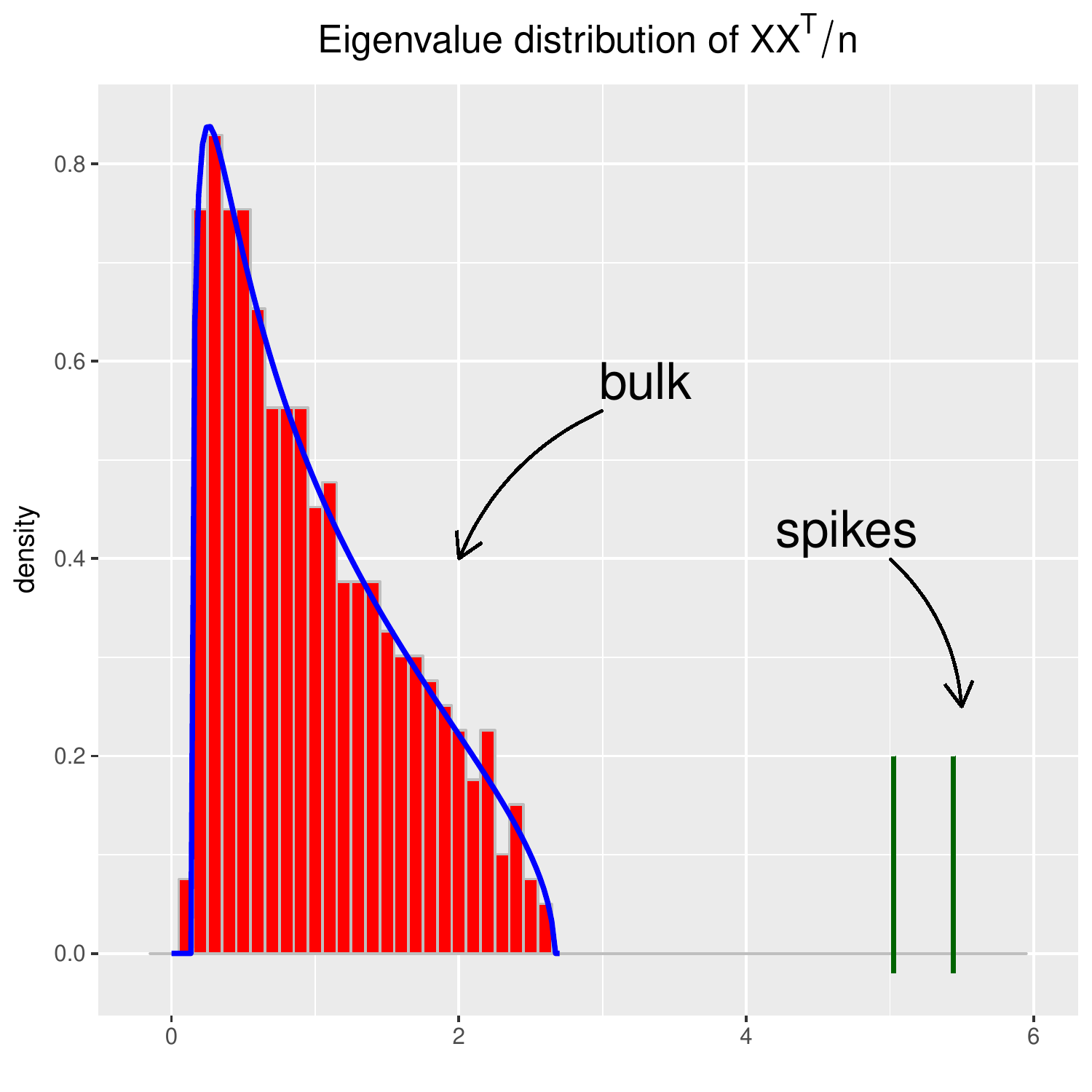} &  \includegraphics[trim = 0 -4cm 0 -4cm, clip, width=.47\textwidth]{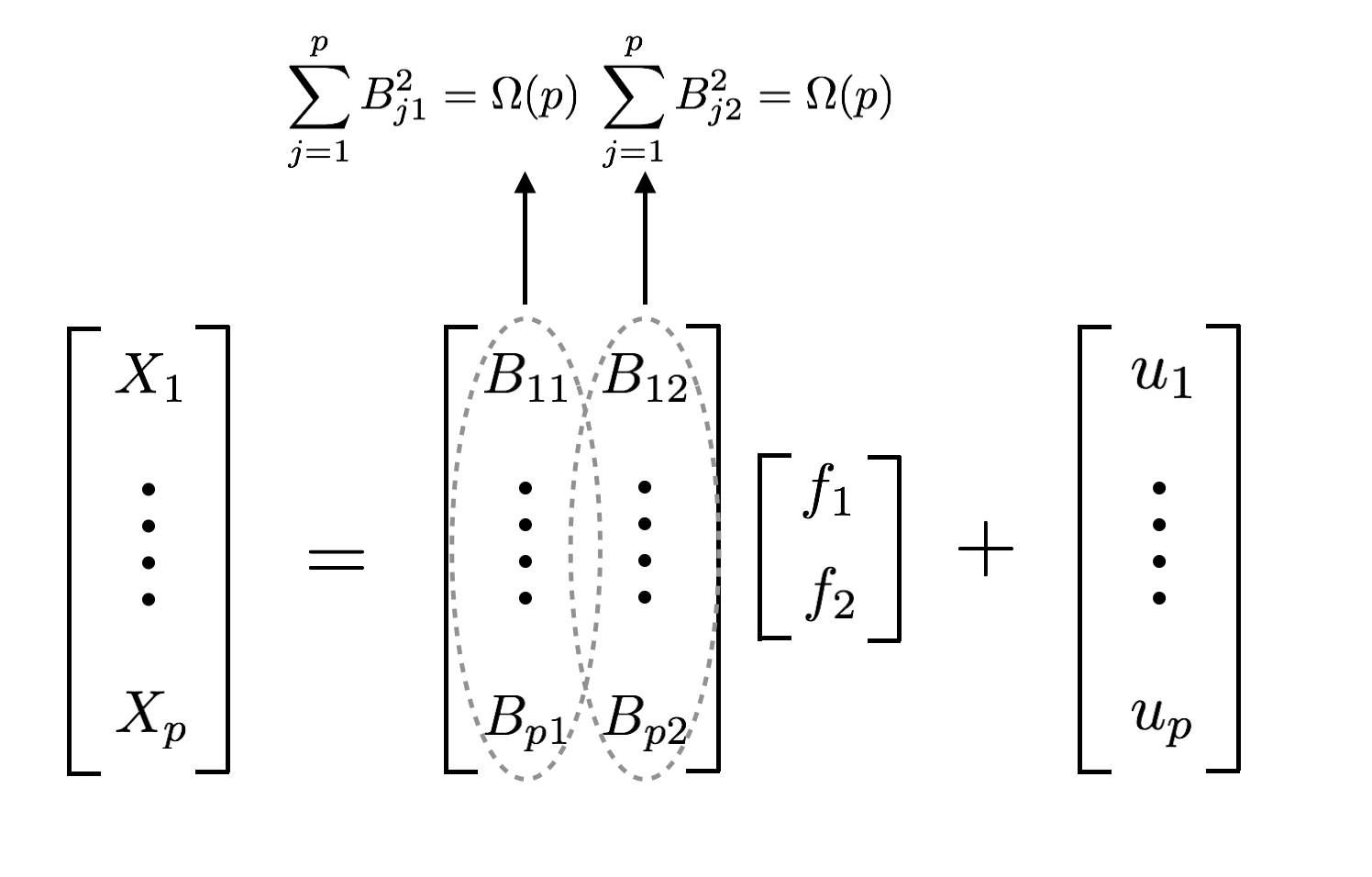}
    \end{tabular}
    \caption{The {\bf left panel} is the histogram of the eigenvalue distribution from a synthetic dataset. Fix $n= 1000$, $p=400$ and $K=2$ and let all the entries of $\bfm B$ be i.i.d. Gaussian $\cN(0,1/4)$. Each entry of $\bfm F$ and $\bfm U$ is generated from i.i.d.\ $\cN(0,1)$ and i.i.d.\ $\cN(0, 5^2)$ respectively. The data matrix $\bX$ is formed according to the factor model \eqref{fm}. The {\bf right diagram} illustrates the Pervasiveness Assumption. }
    	\label{fig:1}
\end{figure}
\vspace*{-0.05in}

Let us provide some intuitions for the estimators in Step 3. Recall that $\bfm b_j$ is the $j$th column of $\bB$.  Then, by model \eqref{fm}, $\bB^\top (\bx_i-\bmu) = \bB^\top\bB \bff_i + \bB^\top \bu_i$.  In the high-dimensional setting, the second term is averaged out when $\bu_i$ is weakly dependent across its component.  This along with the identifiability condition delivers that
\begin{equation}\label{eq2.2}
  \bff_i \approx \diag(\bB^\top \bB)^{-1} \bB^\top (\bx_i - \bmu) = \diag(\|\bfm b_1\|^2, \cdots, \|\bfm b_K\|^2)^{-1} \bB^\top (\bx_i-\bmu).
\end{equation}
Now, we estimate  $\bB \bB^\top$ by $\sum_{j=1}^K \hat{\lambda}_j \hat{\bfm v}_j \hat{\bfm v}_j^\top$ and hence $\bfm b_j$ by $\hat{\lambda}_j^{1/2} \hat{\bfm v}_j$ and
$\| \bfm b_j \|^2$ by $\hat{\lambda}_j$.  Using the substitution method, we obtain the estimators in Step 3.

The above heuristic also reveals that
the PCA-based methods work well if the effect of the factors outweighs the noise. To quantify this, we introduce a form of Pervasiveness Assumption from the factor model literature. While this assumption is strong\footnote{There is a weaker assumption, under which (\ref{fm}) is usually called the weak factor model; see \cite{Onatski12}.}, it simplifies our discussion and captures the above intuition well: it holds when the factor loadings $\{\bb_j\}_{j=1}^p$ are random samples from a nondegenerate population \citep{FLM13}.
%\begin{ass}[Pervasiveness]\label{ass:2}
%	The first $K$ eigenvalues of $\bB \allowbreak \cov(\bff_i ) \bB^\top$ have order $\Omega(p)$, whereas all the eigenvalues of $\bSigma_u$ are bounded by some constant.
%\end{ass}

\begin{ass}[Pervasiveness]\label{ass:2}
	The first $K$ eigenvalues of $\bB \allowbreak \bB^\top$ have order $\Omega(p)$, whereas $\| \bSigma_u \|_2 = O(1)$.
\end{ass}

Note that $\cov(\bff_i ) = \bI_K$ under the Identifiability Assumption \ref{ass:1}.
The first part of this assumption holds when each factor influences a non-vanishing proportion of outcomes. Mathematically speaking, it means that for any $k \in [K] := \{1,2,\ldots,K\}$, the average of squared loadings of the $k$th factor satisfies $p^{-1}\sum_{j = 1}^p B_{jk}^2 = \Omega(1)$ (right panel of Figure~\ref{fig:1}). This holds with high probability if, for example, $\{B_{jk}\}_{j=1}^p$ are i.i.d.\  realizations from a non-degenerate distribution, but we will not make such assumption in this paper. The second part of the assumption is reasonable, as cross-sectional correlation becomes weak after we take out the common factors. Typically, if $ \bfm \Sigma_u$ is a sparse matrix, the norm bound $\| \bSigma_u \|_2 = O(1)$ holds; see Section~\ref{sec-cov-est} for details. Under this Pervasiveness Assumption, the first $K$ eigenvalues of $\bSigma$ will be well separated with the rest of eigenvalues. By the Davis-Kahan theorem \citep{DKa70}, which we present as Theorem 2.1, we can consistently estimate the column space of $\bB$ through the top-$K$ eigenspace of $\bSigma$. This explains why we can apply PCA to factor model analysis \citep{FLM13}.

Though factor models and PCA are not identical (see \citealp{Jol86}), they are approximately the same for high-dimensional problems with the pervasiveness assumption\citep{FLM13}.  Thus, PCA-based ideas are important components of estimation and inference for factor models. In later sections (especially Section 4), we discuss statistical and machine learning problems with factor-model-type structures. There PCA is able to achieve consistent estimation even when the Pervasiveness Assumption is weakened---and somewhat surprisingly---PCA can work well down to the information limit. For perspectives from random matrix theory, see \cite{BBP05, Paul07, johnstone2009consistency, BenFloRaj11, OVuWan16, WFa17}, among others.

\subsection{Estimating the number of factors}\label{sec-est-K}

In high-dimensional factor models, if the factors are unobserved, we need to choose the number of factors $K$ before estimating the loading matrix, factors, etc. The number $K$ can be usually estimated from the eigenvalues of the the sample covariance matrix or its robust version. With certain conditions such as separation of the top $K$ eigenvalues from the others, the estimation is consistent. Classical methods include likelihood ratio tests \citep{bartlett1950}, the scree plot \citep{cattell1966scree}, parallel analysis \citep{horn1965rationale}, etc. Here, we introduce a few recent methods: the first one is based on the eigenvalue ratio, the second on eigenvalue differences, and the third on the eigenvalue magnitude.

For simplicity, let us use the sample covariance and arrange its eigenvalues in descending order: $\lambda_1 \ge \lambda_2 \ge \cdots \ge \lambda_{n \wedge p}$, where $n \wedge p = \min\{n,p\}$ (the remaining eigenvalues, if any, are zero). \cite{LamYao12} and \cite{ahn2013eigenvalue} proposed an estimator $\hat{K}_1$ based on ratios of consecutive eigenvalues. For a pre-determined $k_{\max}$, the eigenvalue ratio estimator is
\begin{equation*}
\hat{K}_1 = \underset{i \le k_{\max}}{\argmax} \, \frac{\lambda_i}{\lambda_{i+1}}.
\end{equation*}
Intuitively, when the signal eigenvalues are well separated from the other eigenvalues, the ratio at $k=K$ should be large. Under some conditions, the consistency of this estimator, which does not involve complicated tuning parameters, is established.

In an earlier work,  \cite{onatski2010determining} proposed to use the differences of consecutive eigenvalues. For a given $\delta > 0$ and pre-determined integer $k_{\max}$, define
\begin{equation*}
\hat{K}_2(\delta) = \max \{ i \le k_{\max}: \lambda_i - \lambda_{i+1} \ge \delta \}.
\end{equation*}
%Assume $n/p \rightarrow c >0$, and $k_{\max}$ is a slowly increasing sequence such that $k_{\max} > K$ for large $p$ and $k_{\max} / p \rightarrow 0$.
Using a result on eigenvalue empirical distribution from random matrix theory, \cite{onatski2010determining} proved consistency of $\hat{K}_2(\delta)$ under the Pervasiveness Assumption. The intuition is that, the Pervasiveness Assumption implies that $\lambda_K - \lambda_{K+1}$ tends on $\infty$ in probability as $n \to \infty$; whereas $\lambda_i - \lambda_{i+1} \to 0$ almost surely for $K <i < k_{\max}$ because these $\lambda_i$-s converge to the same limit, which can be determined using random matrix theory. \cite{onatski2010determining} also proposed a data-driven  way to determine $\delta$ from the empirical eigenvalue distribution of the sample covariance matrix.

%One key structural assumption of this method is that the eigengap between $\lambda_K$ and $\lambda_{K+1}$ is sufficiently large. This is possible in high dimensions ($p \to \infty$), and is, in particular, ensured by the Pervasiveness Assumption. %if the eigenvalues of $\bfm B^\top \bfm B$ diverges and the eigenvalues of $\mathrm{Cov}(\bfm u_i)$ remain bounded.
%By Weyl's inequality, the difference $\lambda_K - \lambda_{K+1}$ also diverges, and therefore consistent estimation of $K$ is possible.

%Instead of the eigenvalue differences, eigenvalue ratios of neighboring eigenvalues are used to construct an estimator.
%With a similar structural assumption, consistency result is also established.
A third possibility is to use an information criterion.
%For simplicity, we assume that $\bmu=\mathbf{0}$ in (\ref{fm}).
Define
\begin{equation*}
V(k) = \frac{1}{np}\min_{\hat{\bfm B} \in \R^{p \times k},  \hat{\bfm F} \in \R^{n \times k} } \| \bfm X - \hat\bmu \mathbf{1}_{n}^{\top} - \hat{\bfm B}  \hat{\bfm F}^\top \|_F^2 = p^{-1} \sum_{j > k} \lambda_j,
\end{equation*}
where $\hat{\bmu}$ is the sample mean, and the equivalence (second equality) is well known. For a given $k$,  $V(k)$ is interpreted as the scaled sum of squared residuals, which measures how well $k$ factors fit the data. A very natural estimator $\hat{K}_3$ is to find the best $k \le k_{\max}$ such that the following penalized version of $V(k)$ is minimized \citep{BNg02}:
\begin{equation*}
PC(k) = V(k) + k\,\hat{\sigma}^2 g(n,p), \qquad \text{where} \quad g(n,p):=\frac{n+p}{np} \log \left( \frac{np}{n+p} \right),
\end{equation*}
and $\hat{\sigma}^2$ is any consistent estimate of $(np)^{-1} \sum_{i = 1}^n \sum_{j = 1}^d \E u_{ji}^2$. The upper limit $k_{\max}$ is assumed to be no smaller than $K$, and is typically chosen as $8$ or $15$ in empirical studies in \cite{BNg02}. Consistency results are established under more general choices of $g(n,p)$.

We conclude this section by remarking that in general, it is impossible to consistently estimate $K$ if the smallest nonzero eigenvalue $\bfm B^\top \bfm B$ is much smaller than $\|\bfm \Sigma_u \|_2$, because the `signals' (eigenvalues of $\bfm B^\top \bfm B$) would not be distinguishable from the the noise (eigenvalues of $\bfm U \bfm U^\top$). As mentioned before, consistency of PCA is well studied in the random matrix theory literature. See \cite{dobriban2017factor} for a recent work that justifies parallel analysis using random matrix theory.

\subsection{Robust covariance inputs}

%As we see in the previous subsection, if we obtain an upper bound on the spectral norm $\opnorm{ \hat{\bSigma} - \bSigma}$, we can use the Davis-Kahan theorem to control  the $\ell_2$ distance between the population eigenvectors and their empirical counterparts obtained from $\hat{\bSigma}$.

To extract latent factors and their factor loadings, we need an initial covariance estimator.  Given independent observations $\bx_1,\ldots,\bx_n$ with mean zero, the sample covariance matrix, namely $\hat{\bSigma}_{{\rm sam}} := n^{-1} \sum_{i=1}^n \bx_i \bx_i^\top$, is a natural choice to estimate $\bSigma \in \RR^{p\times p}$. The finite sample bound on $\| \hat{\bSigma}_{{\rm sam}} - \bSigma \|_2 $ has been well studied in the literature \citep{Ver10, Tro12, KLo17}. Before presenting the result from \cite{Ver10}, let us review the definition of sub-Gaussian variables.

A random variable $\xi$ is called sub-Gaussian if $\|\xi\|_{\psi_2} \equiv \sup_{q \ge 1} q^{-1/2} (\E |\xi|^q)^{1/q}$ is finite, in which case this quantity defines a norm $\| \cdot \|_{\psi_2}$ called the sub-Gaussian norm. Sub-Gaussian variables include as special cases Gaussian variables, bounded variables, and other variables with tails similar to or lighter than Gaussian tails. For a random vector $\bxi$, we define $\| \bxi \|_{\psi_2} := \sup_{\| \bv \|_2=1}\| \bxi^\top \bv \|_{\psi_2}$; we call $\bxi$ sub-Gaussian if $\| \bxi \|_{\psi_2}$ is finite.

\begin{thm}\label{thm2.1}
	Let $\bSigma$ be the covariance matrix of $\bx_i$. Assume that $\{\bSigma^{-\frac{1}{2}}\bx_i\}_{i=1}^n$ are i.i.d.\ sub-Gaussian random vectors, and denote $\kappa = \sup_{\| \bv \|_2=1}\| \bx_i^\top \bv \|_{\psi_2}$. Then for any $t \ge 0$, there exist constants $C$ and $c$ only depending on $\kappa$ such that
	\beq \label{eq2.3}
		\PP\Bigl(\opnorm{\widehat\bSigma_{{\rm sam}} - \bSigma} \ge \max(\delta, \delta^2)\opnorm{\bSigma} \Bigr) \le 2 \exp(-ct^2),
	\eeq
	where $\delta = C\sqrt{p / n} + t / \sqrt{n}$.
\end{thm}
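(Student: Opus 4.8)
The plan is to follow the $\varepsilon$-net plus Bernstein argument of \cite{Ver10}. First I would reduce to the isotropic case. Set $\bz_i := \bSigma^{-1/2}\bx_i$; by hypothesis these are i.i.d., isotropic ($\cov(\bz_i)=\bI_p$), and sub-Gaussian. Since
$$\widehat\bSigma_{\sam}-\bSigma \;=\; \bSigma^{1/2}\Bigl(\tfrac1n\sum_{i=1}^n\bz_i\bz_i^\top-\bI_p\Bigr)\bSigma^{1/2},$$
sub-multiplicativity of the operator norm gives $\opnorm{\widehat\bSigma_{\sam}-\bSigma}\le\opnorm{\bSigma}\,\opnorm{\bDelta_n}$, where $\bDelta_n := \tfrac1n\sum_{i=1}^n\bz_i\bz_i^\top-\bI_p$. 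It therefore suffices to show $\opnorm{\bDelta_n}\le\max(\delta,\delta^2)$ outside an event of probability $2\exp(-ct^2)$.

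Next I would discretize the operator norm: fixing a $1/4$-net $\cN$ of $\{\bv\in\RR^p:\|\bv\|_2=1\}$ with $|\cN|\le 9^p$, the standard covering bound gives $\opnorm{\bDelta_n}\le 2\max_{\bv\in\cN}|\bv^\top\bDelta_n\bv| = 2\max_{\bv\in\cN}\bigl|\tfrac1n\sum_{i=1}^n(\inn{\bz_i,\bv}^2-1)\bigr|$. For a fixed $\bv$, the scalar $\inn{\bz_i,\bv}$ is sub-Gaussian with unit second moment, so $\inn{\bz_i,\bv}^2-1$ is centered and sub-exponential; Bernstein's inequality for sums of independent sub-exponential variables then gives, for every $\varepsilon>0$,
$$\PP\Bigl(\Bigl|\tfrac1n\sum_{i=1}^n(\inn{\bz_i,\bv}^2-1)\Bigr|\ge\varepsilon\Bigr)\le 2\exp\bigl(-c_1 n\min(\varepsilon^2,\varepsilon)\bigr),$$
where $c_1>0$ depends only on the sub-Gaussian norm of $\bz_i$.

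Finally I would union-bound over $\cN$, paying a factor $9^p=e^{p\log 9}$, and calibrate $\varepsilon=\max(\delta,\delta^2)$ with $\delta=C\sqrt{p/n}+t/\sqrt n$: keeping the cross term in $\delta^2$ and treating $\delta\le1$ and $\delta>1$ separately --- this dichotomy is precisely why $\max(\delta,\delta^2)$ appears --- one checks that for $C=C(\kappa)$ large enough the exponent $c_1 n\min(\varepsilon^2,\varepsilon)$ exceeds $(\log 9)p + c't^2$, so the union bound yields $\PP(\opnorm{\bDelta_n}\ge\max(\delta,\delta^2))\le 2\exp(-c't^2)$. Combined with the first step this gives the claim. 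The probabilistic ingredients (covering number of the sphere, sub-exponential Bernstein) are standard; I expect the only delicate points to be (i) the $\delta$-size case split that produces the quadratic-versus-linear behavior, and (ii) tracking constants so that they depend on $\kappa$ only --- in particular checking that the reduction to isotropy does not smuggle in the conditioning of $\bSigma$, which is where one uses the precise meaning of the sub-Gaussian hypothesis in the statement.
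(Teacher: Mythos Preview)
The paper does not supply its own proof of this theorem: it is quoted verbatim as a result from \cite{Ver10} (see the sentence introducing it, ``Before presenting the result from \cite{Ver10}\ldots''), and no argument for it appears in the appendix. Your proposal reproduces exactly the standard $\varepsilon$-net plus sub-exponential Bernstein argument from that reference, and the steps --- whitening, sub-multiplicativity, $1/4$-net discretization of the quadratic form, Bernstein on $\langle\bz_i,\bv\rangle^2-1$, union bound, and the $\delta$ versus $\delta^2$ calibration --- are all correct.

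One small caveat worth making explicit, which you already flag in your point (ii): after whitening, the Bernstein constant $c_1$ you obtain depends on $\|\bz_i\|_{\psi_2}=\sup_{\|\bv\|_2=1}\|\bv^\top\bSigma^{-1/2}\bx_i\|_{\psi_2}$, not on $\kappa=\sup_{\|\bv\|_2=1}\|\bv^\top\bx_i\|_{\psi_2}$ as literally written in the statement. These two quantities are not interchangeable without a condition-number factor, so the theorem as stated is most naturally read with $\kappa$ meaning the sub-Gaussian norm of the \emph{isotropic} vector $\bSigma^{-1/2}\bx_i$ (which is how Vershynin states it). Your reduction is clean precisely under that reading; the wrinkle is in the paper's notation, not in your argument.
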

\begin{rem}
	The spectral-norm bound above depends on the ambient dimension $p$, which can be large in high-dimensional scenarios. Interested readers can refer to \cite{KLo17} for a refined result that only depends on the intrinsic dimension (or effective rank) of $\bSigma$.
\end{rem}

An important asepect of the above result is the sub-Gaussian concentration in \eqref{eq2.3}, but this depends heavily on the sub-Gaussian or sub-exponential behaviors of observed random vectors.  This condition can not be validated in high dimensions when tens of thousands of variables are collected.  See \cite{FWZhu16}.
When the distribution is  heavy-tailed\footnote{Here, we mean it has second bounded moment when estimating the mean and has bounded fourth moment when estimating the variance.}, one cannot expect sub-Gaussian or sub-exponential behaviors of the sample covariance in the spectral norm \citep{Cat12}. See also \cite{Ver12} and \cite{SriVer13}. Therefore, to perform PCA for heavy-tailed data, the sample covariance is not a good choice to begin with. Alternative robust estimators have been constructed to achieve better finite sample performance.

\cite{Cat12}, \cite{FLW17} and \cite{FWZhu16} approached the problem by first considering estimation of a univariate mean $\mu$ from a sequence of i.i.d random variables $X_1, \cdots, X_n$ with variance $\sigma^2$.  In this case, the sample mean $\bar{X}$ provides an estimator but without exponential concentration.  Indeed, by Markov inequality, we have  $\PP( |\bar{X} - \mu | \geq t \sigma/\sqrt{n}) \leq t^{-2}$, which is tight in general and has a Cauchy tail (in terms of $t$). On the other hand, if we truncate the data $\widetilde X_i= \mathrm{sign}(X_i)\min(|X_i|, \tau)$ with $\tau \asymp \sigma\sqrt{n}$ and compute the mean of the truncated data, then we have \citep{FWZhu16}
$$
	\PP\Bigl( \bigl |\frac{1}{n}\sum\limits_{i=1}^n  \widetilde X_i - \mu \bigr | \ge t \frac{\sigma}{\sqrt{n}} \Bigr) \le  2\exp(- ct^2),
$$
for a universal constant $c>0$.  In other words, the mean of truncated data with only a finite second moment behaves very much the same as the sample mean from the normal data: both estimators have Gaussian tails (in terms of $t$).  This sub-Gaussian concentration is fundamental in high-dimensional statistics as the sample mean is computed tens of thousands or even millions of times.

As an example, estimating the high-dimensional covariance matrix $\bSigma = (\sigma_{ij})$ involves $O(p^2)$ univariate mean estimation, since the covariance can be expressed as an expectation: as $\sigma_{ij} = \E(X_iX_j) - \E(X_i) \E(X_j)$.  Estimating each component by the truncated mean yields a covariance matrix $\tilde{\bSigma}$.
Assuming the fourth moment is bounded (as the covariance itself are second moments), by using the union bound and the above concentration inequality, we can easily obtain
$$
		\PP\Big ( \|\widetilde\bSigma-\bSigma\|_{\max} \ge \sqrt{\frac{a\log p }{c'n} }  \Big )\lesssim p^{2-a}
$$
for any $a>0$ and a constant $c'>0$. In other words, with truncation, when the data have merely bounded fourth moments, we can achieve the same estimation rate as the sample covariance matrix under the Gaussian data.

\cite{FWZhu16} and \cite{Min16} independently proposed shrinkage variants of the sample covariance with sub-Gaussian behavior under the spectral norm, as long as the fourth moments of $\bX$ are finite. For any $\tau\in \RR^+$,   \cite{FWZhu16} proposed the following shrinkage sample covariance matrix %$\widehat\bSigma_s(\tau)$ to estimate $\bSigma$.
	\beq
		\label{eq:4.1}
		\widehat\bSigma_s(\tau)=\frac{1}{n}\sum\limits_{i=1}^n \widetilde\bx_i\widetilde\bx_i^\top, \qquad \widetilde\bx_i:=(\|\bx_i\|_4 \wedge \tau)\bx_i/\|\bx_i\|_4,
	\eeq
to estimate $\bSigma$,
where $\|\cdot\|_4$ is the $\ell_4$-norm.
	The following theorem establishes the statistical error rate of $\widetilde \bSigma_s(\tau)$ in terms of the spectral norm.
	\begin{thm}
		\label{thm:2.2}
		Suppose $\E(\bv^\top\bx_i)^4\le R$ for any unit vector $\bv\in \cS^{p-1}$. Then it holds that for any $\delta>0$,
		\begin{equation}
			\PP\Big(\opnorm{\widehat\bSigma_s(\tau) - \bSigma} \ge \sqrt{\frac{\delta Rp \log p}{n}} \Big)\le p^{1-C\delta},
		\end{equation}
		where $\tau\asymp \bigl(nR/(\delta\log p) \bigr)^{1/4}$ and $C$ is a universal constant.
	\end{thm}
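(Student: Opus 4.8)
The plan is to separate the statistical error into a stochastic part and a deterministic bias,
$$\opnorm{\widehat\bSigma_s(\tau) - \bSigma} \;\le\; \opnorm{\widehat\bSigma_s(\tau) - \E\widehat\bSigma_s(\tau)} \;+\; \opnorm{\E\widehat\bSigma_s(\tau) - \bSigma},$$
and to show that each piece is of order $\sqrt{\delta Rp\log p/n}$, the first with probability at least $1 - p^{1-C\delta}$ and the second deterministically. The only facts about the data we shall use are that the $\bx_i$ are i.i.d.\ with mean zero and $\E(\bv^\top\bx_i)^4 \le R$ for every unit $\bv$. Specializing to $\bv=\be_j$ gives $\E x_{ij}^4 \le R$, hence $\E\|\bx_i\|_4^4 = \sum_j\E x_{ij}^4 \le pR$ and, by Cauchy--Schwarz, $\E\|\bx_i\|_2^4 = \sum_{j,k}\E(x_{ij}^2x_{ik}^2) \le p^2R$; also $\opnorm{\bSigma}=\sup_\bv\E(\bv^\top\bx_i)^2 \le \sqrt R$. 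Throughout we use that $\bx\mapsto(\|\bx\|_4\wedge\tau)\bx/\|\bx\|_4$ only rescales $\bx$ by a factor in $(0,1]$, so $|\bv^\top\widetilde\bx_i|\le|\bv^\top\bx_i|$ and $\|\widetilde\bx_i\|_2\le\|\bx_i\|_2$ pointwise, while $\|\widetilde\bx_i\|_4\le\tau$.

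For the bias term, note that $\E\widetilde\bx_i\widetilde\bx_i^\top \preceq \bSigma$, so $\bSigma - \E\widetilde\bx_i\widetilde\bx_i^\top \succeq 0$ and $\opnorm{\E\widehat\bSigma_s(\tau)-\bSigma} = \sup_\bv\bigl(\E(\bv^\top\bx_i)^2 - \E(\bv^\top\widetilde\bx_i)^2\bigr)$. Since shrinkage is active only on $\{\|\bx_i\|_4>\tau\}$, this equals $\sup_\bv\E\bigl[\{(\bv^\top\bx_i)^2-(\bv^\top\widetilde\bx_i)^2\}\,\ind(\|\bx_i\|_4>\tau)\bigr]$, which is at most $\sup_\bv\E[(\bv^\top\bx_i)^2\ind(\|\bx_i\|_4>\tau)]$. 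Cauchy--Schwarz together with Markov's inequality bounds this by $\sqrt R\,\sqrt{\PP(\|\bx_i\|_4^4>\tau^4)} \le \sqrt R\,\sqrt{pR/\tau^4} = Rp^{1/2}/\tau^2$. The prescribed scaling $\tau^4\asymp nR/(\delta\log p)$ makes this exactly of order $\sqrt{\delta Rp\log p/n}$; this is how the choice of $\tau$ arises---it is tuned to balance the bias against the target rate.

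For the stochastic term, the plan is to invoke the matrix Bernstein inequality \citep{Tro12} for the i.i.d.\ mean-zero symmetric summands $\bX_i := n^{-1}(\widetilde\bx_i\widetilde\bx_i^\top - \E\widetilde\bx_i\widetilde\bx_i^\top)$, which requires a uniform bound $\opnorm{\bX_i}\le M$ and the matrix variance proxy $\nu := \opnorm{\sum_{i=1}^n\E\bX_i^2}$. For $M$: $\opnorm{\widetilde\bx_i\widetilde\bx_i^\top} = \|\widetilde\bx_i\|_2^2 \le \tau^2\|\bx_i\|_2^2/\|\bx_i\|_4^2 \le \tau^2\sqrt p$, because $\|\bx_i\|_2\le p^{1/4}\|\bx_i\|_4$; combined with $\opnorm{\E\widetilde\bx_i\widetilde\bx_i^\top}\le\opnorm{\bSigma}\le\sqrt R$ this gives $M\lesssim\tau^2\sqrt p/n$. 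For $\nu$: from $\E\bX_i^2 \preceq n^{-2}\E[\|\widetilde\bx_i\|_2^2\,\widetilde\bx_i\widetilde\bx_i^\top]$ one gets $\nu \le n^{-1}\opnorm{\E[\|\widetilde\bx_i\|_2^2\widetilde\bx_i\widetilde\bx_i^\top]}$, and here the key move is \emph{not} to use the worst-case bound $\|\widetilde\bx_i\|_2^2\le\tau^2\sqrt p$ but instead Cauchy--Schwarz: $\opnorm{\E[\|\widetilde\bx_i\|_2^2\widetilde\bx_i\widetilde\bx_i^\top]} = \sup_\bv\E[\|\widetilde\bx_i\|_2^2(\bv^\top\widetilde\bx_i)^2] \le \sup_\bv\sqrt{\E\|\bx_i\|_2^4}\,\sqrt{\E(\bv^\top\bx_i)^4} \le \sqrt{p^2R}\,\sqrt R = pR$, so $\nu\le pR/n$. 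Matrix Bernstein then yields $\PP(\opnorm{\sum_i\bX_i}\ge t) \le 2p\exp\bigl(-\tfrac12 t^2/(\nu+Mt/3)\bigr)$; taking $t\asymp\sqrt{\delta Rp\log p/n}$ and substituting $\tau^2\asymp\sqrt{nR/(\delta\log p)}$, a short computation shows that both $t^2/\nu$ and $t/M$ are of order $\delta\log p$, so the exponent is $\gtrsim\delta\log p$ and the probability is at most $2p^{1-C'\delta}$. Combining with the bias bound, enlarging the threshold constant to absorb the bias offset, and shrinking $C'$ to absorb the factor $2$ produces the claimed inequality.

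The step I expect to be the crux is the control of the variance proxy $\nu$. The naive uniform bound $\|\widetilde\bx_i\|_2^2\le\tau^2\sqrt p$ would give $\nu$ of order $\tau^2\sqrt{pR}/n$, which is too large by a polynomial factor in $p$ and would stop the matrix Bernstein exponent from overcoming the dimensional prefactor $2p$; one really must exploit that $\|\bx_i\|_2^2$ and $(\bv^\top\bx_i)^2$ \emph{separately} have controlled second moments under the bounded-fourth-moment assumption, so that the expectation of their product is only $O(pR)$. The lossy bound $\tau^2\sqrt p$ is still needed, but only for $M$, where it enters the sub-exponential (``linear in $t$'') term and merely influences the dependence on $\delta$. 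A minor point worth keeping in mind is that, as is typical for polynomial-tail concentration, one should treat $\delta$ as bounded below by an absolute constant (so that $C\delta>1$ and the bound is informative), which also keeps the relevant minima in the Bernstein exponent on the $\delta\log p$ branch.
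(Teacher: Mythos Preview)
The paper itself does not supply a proof of this theorem; it is quoted from \cite{FWZhu16}, so there is no in-paper argument to compare against. Your proposal is nonetheless correct and is exactly the argument used in that reference: a bias--variance split followed by the matrix Bernstein inequality \citep{Tro12}, with the truncation level $\tau$ chosen to equalize the two contributions.

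A few comments on the details. Your bias bound $\opnorm{\E\widehat\bSigma_s(\tau)-\bSigma}\le R\sqrt{p}/\tau^2$ via Cauchy--Schwarz and Markov is precisely what is done there. Your identification of the variance proxy as the crux is on point: the inequality $\|\bx\|_2\le p^{1/4}\|\bx\|_4$ (from $\sum x_j^2\le(\sum x_j^4)^{1/2}p^{1/2}$) is the reason the $\ell_4$-based shrinkage is used in \eqref{eq:4.1}---it yields the almost-sure bound $\|\widetilde\bx_i\|_2^2\le\tau^2\sqrt{p}$ needed for $M$, while your Cauchy--Schwarz step $\E[\|\widetilde\bx_i\|_2^2(\bv^\top\widetilde\bx_i)^2]\le\sqrt{\E\|\bx_i\|_2^4}\sqrt{\E(\bv^\top\bx_i)^4}\le pR$ gives the right order for $\nu$. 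Both are present in the original proof. Your remark about treating $\delta$ as bounded below is the correct way to read the statement: for $C\delta\le 1$ the bound $p^{1-C\delta}\ge 1$ is vacuous, so absorbing the dimensional prefactor $2p$ into $p^{1-C\delta}$ only needs to be argued on the informative range, which causes no difficulty.
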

Applying PCA to the robust covariance estimators as described above leads to more reliable estimation of principal eigenspaces in the presence of heavy-tailed data.

In Theorem~\ref{thm:2.2}, we assume that the mean of $\bx_i$ is zero.  When this does not hold, a natural estimator of $\bSigma = \frac{1}{2}
\E (\bx_1 - \bx_2)(\bx_1 - \bx_2)^\top$ is to use the shrunk $U$-statistic
\citep{FKS17}:
\begin{eqnarray*}
\hat{\bSigma}_{U}(\tau)  &=& \frac{1}{2{n \choose 2}}
\sum_{j \neq k}  \frac{\psi_\tau(\ltwonorm{\bx_j - \bx_k} ^ 2)}{\ltwonorm{\bx_j - \bx_k} ^ 2} (\bx_j - \bx_k)(\bx_j - \bx_k)^\top \\
& = & \frac{1}{2{n \choose 2}} \sum_{j \neq k}   \min\bigl( 1, \tau / \| \bx_j - \bx_k \|_2^2 \bigr) (\bx_j - \bx_k)(\bx_j - \bx_k)^{\top},
\end{eqnarray*}
where $\psi_\tau(u) = (|u| \wedge \tau) {\rm sign}(u)$.  When $\tau = \infty$, it reduces to the usual $U$-statistics. It possesses a similar concentration property to that in Theorem \ref{thm:2.2} with a proper choice of $\tau$.

\subsection{Perturbation bounds}

In this section, we introduce several perturbation results on eigenspaces, which serve as fundamental technical tools in factor models and related learning problems. For example, in relating the factor loading matrix $\bB$ to the principal components of covariance matrix $\bSigma$ in \eqref{cov.fm}, one can regard $\bSigma$ as a perturbation of $\bB \bB^\top$ by an amount of $\bSigma_u$ and take $\bA = \bB \bB^\top$ and $\tilde \bA = \bSigma$ in Theorem~\ref{thm:dk} below.  Similarly, we can also regard a covariance matrix estimator $\hat{\bSigma}$ as a perturbation of $\bSigma$ by an amount of
$\hat{\bSigma} - \bSigma$.

We will begin with a review of the Davis-Kahan theorem \citep{DKa70}, which is usually useful for deriving $\ell_2$-type bounds (which includes spectral norm bounds) for symmetric matrices. Then, based on this classical result, we introduce entry-wise ($\ell_\infty$) bounds, which typically give refined results under structural assumptions.
We also derive bounds for rectangular matrices that are similar to Wedin's theorem \citep{Wed72}. Several recent works on this topic can be found in \cite{YWS14,FWZ16,KolXia16,AFWZ17,Zho17,CTP17,EldBelWan17}.

First, for any two subspaces $\mathcal{S}$ and $\tilde{\mathcal{S}}$ of the same dimension $K$ in $\R^p$, we choose any $\bV, \tilde \bV \in \RR^{p \times K}$ with orthonormal columns that span $\mathcal{S}$ and $\tilde{\mathcal{S}}$, respectively. We can measure the closeness between two subspaces though the difference between their projectors:
\begin{equation*}
d_2(\mathcal{S}, \tilde{\mathcal{S}}) = \| \tilde{\bfm V} \tilde{\bfm V}^\top - \bfm V \bfm V^\top\|_2 \quad \text{or} \quad d_F(\mathcal{S}, \tilde{\mathcal{S}}) = \| \tilde{\bfm V} \tilde{\bfm V}^\top - \bfm V \bfm V^\top\|_F.
\end{equation*}
The above definitions are both proper metrics (or distances) for subspaces $\bS$ and $\widetilde \bS$ and do not depend on the specific choice of $\bV$ and $\tilde \bV$, since $\tilde{\bfm V} \tilde{\bfm V}^\top$ and  $\bfm V \bfm V^\top$ are projection operators.  Importantly, these two metrics are connected to the well-studied notion of \textit{canonical angles} (or principal angles). Formally, let the singular values of $\tilde{\bfm V}^\top \bfm V$ be $\{\sigma_k\}_{k = 1}^K$, and define the canonical angles $\theta_k = \cos ^{-1}\sigma_k$ for $k=1,\ldots,K$. It is often useful to denote the sine of the canonical (principal) angles by $\sin \bTheta(\widehat\bV, \bV) := \diag(\sin \theta_1,\ldots,\sin \theta_K) \in \R^{K \times K}$, which can be interpreted as a generalization of sine of angles between two vectors. The following identities are well known \citep{SSu90}.
\begin{equation*}
\| \sin \bTheta(\tilde{\bfm V}, \bfm V) \|_2 = d_2(\mathcal{S}, \tilde{\mathcal{S}}), \qquad \sqrt{2}\| \sin \bTheta(\tilde{\bfm V}, \bfm V) \|_F = d_F(\mathcal{S}, \tilde{\mathcal{S}}).
\end{equation*}

%the sine of canonical (principal) angles, denoted by $\sin \bTheta(\tilde{\bfm V}, \bfm V)$, which is a generalization of sine of angles between two vectors. For the Frobenius norm and the spectral norm, the following identities are well known \citep{SSu90}.
%\begin{equation*}
%\| \sin \bTheta(\tilde{\bfm V}, \bfm V) \|_2 = \| \tilde{\bfm V} \tilde{\bfm V}^\top - \bfm V \bfm V^\top\|_2, \qquad \sqrt{2}\| \sin \bTheta(\tilde{\bfm V}, \bfm V) \|_F = \| \tilde{\bfm V} \tilde{\bfm V}^\top - \bfm V \bfm V^\top\|_F.
%\end{equation*}
%The definition of canonical angles, as well as the above identity, does not depend on the specific choice of $\bV$ and $\tilde \bV$. The right-hand side can be interpreted as the difference between projectors onto $\tilde{\mathcal{S}}$ and $\mathcal{S}$, as $\tilde{\bfm V} \tilde{\bfm V}^\top$ and  $\bfm V \bfm V^\top$ are projection operators.

In some cases, it is convenient to fix a specific choice of $\tilde{\bfm V}$ and $\bfm V$. It is known that for both Frobenius norm and spectral norm,
\begin{equation*}
\| \sin \bTheta(\tilde{\bfm V}, \bfm V) \| \le \min_{\bfm R \in \mathcal{O}(K)} \| \tilde{\bfm V} \bfm R - \bfm V \| \le \sqrt{2}\, \| \sin \bTheta(\tilde{\bfm V}, \bfm V) \|,
\end{equation*}
where $\mathcal{O}(K)$ is the space of orthogonal matrices of size $K \times K$. The minimizer (best rotation of basis) can be given by the singular value decomposition (SVD) of $\tilde{\bfm V}^\top \bfm V$. For details, see \cite{CTP17} for example.

Now, we present the Davis-Kahan $\sin\theta$ theorem \citep{DKa70}.

\begin{thm}\label{thm:dk}
Suppose $\bfm A ,\widetilde{\bfm A}\in \R^{n \times n}$ are symmetric, and that $ \bfm V,\tilde{\bfm V}\in \R^{n \times K}$ have orthonormal column vectors which are eigenvectors of $\bfm A$ and $\tilde{\bfm A}$ respectively. Let $\mathcal{L}(\bfm V)$ be the set of eigenvalues corresponding to the eigenvectors given in $\bfm V$, and let $\mathcal{L}(\bfm V^{\bot})$ (respectively $\mathcal{L}(\tilde{\bfm V}^{\bot})$) be the set of eigenvalues corresponding to the eigenvectors not given in $\bfm V$ (respectively $\tilde{\bfm V}$). If there exists an interval $[\alpha, \beta]$ and $\delta>0$ such that $\mathcal{L}(\bfm V) \subset [\alpha, \beta]$ and $\mathcal{L}(\tilde{\bfm V}^{\bot}) \subset  (-\infty, \alpha-\delta] \cup [\beta+\delta,+\infty)$, then for any orthogonal-invariant norm\footnote{A norm $\| \cdot \|$ is orthogonal-invariant if $\| \bfm U^\top \bfm B \bfm V \| = \| \bfm B \|$ for any matrix $B$ and any orthogonal matrices $\bfm U$ and $\bfm V$.}
\begin{equation*}
\| \sin \bTheta(\tilde{\bfm V}, \bfm V) \| \le \delta^{-1}\,\| \bfm (\widetilde{\bfm A} - \bfm A) \bfm V \|.
\end{equation*}
\end{thm}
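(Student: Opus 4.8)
The plan is to run the classical Davis--Kahan argument, which converts the $\sin\theta$ bound into the inversion of a Sylvester operator. First I would fix notation: extend $\tilde{\bfm V}$ to an orthogonal matrix $[\tilde{\bfm V},\tilde{\bfm V}_{\bot}]$, where the columns of $\tilde{\bfm V}_{\bot}\in\R^{n\times(n-K)}$ are the remaining eigenvectors of $\widetilde{\bfm A}$, i.e.\ those whose eigenvalues form $\mathcal{L}(\tilde{\bfm V}^{\bot})$. Since the singular values of $\tilde{\bfm V}^\top\bfm V$ are the cosines of the canonical angles, those of $\tilde{\bfm V}_{\bot}^\top\bfm V$ are the sines, so
\[
\| \sin\bTheta(\tilde{\bfm V},\bfm V)\| = \|(\bfm I - \tilde{\bfm V}\tilde{\bfm V}^\top)\bfm V\| = \|\tilde{\bfm V}_{\bot}^\top\bfm V\|
\]
for every orthogonal-invariant norm, using $\bfm I-\tilde{\bfm V}\tilde{\bfm V}^\top = \tilde{\bfm V}_{\bot}\tilde{\bfm V}_{\bot}^\top$ and the orthonormality of the columns of $\tilde{\bfm V}_{\bot}$. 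Hence it suffices to control $\bfm S := \tilde{\bfm V}_{\bot}^\top\bfm V$.

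Next I would derive the Sylvester identity. Write $\bfm A\bfm V = \bfm V\bfm\Lambda$ with $\bfm\Lambda$ the diagonal matrix of eigenvalues in $\mathcal{L}(\bfm V)\subset[\alpha,\beta]$, and $\widetilde{\bfm A}\tilde{\bfm V}_{\bot} = \tilde{\bfm V}_{\bot}\widetilde{\bfm\Lambda}_{\bot}$ with $\widetilde{\bfm\Lambda}_{\bot}$ the diagonal matrix of eigenvalues in $\mathcal{L}(\tilde{\bfm V}^{\bot})\subset(-\infty,\alpha-\delta]\cup[\beta+\delta,\infty)$. Multiplying $\widetilde{\bfm A}-\bfm A$ by $\tilde{\bfm V}_{\bot}^\top$ on the left and $\bfm V$ on the right, and using $\tilde{\bfm V}_{\bot}^\top\widetilde{\bfm A} = \widetilde{\bfm\Lambda}_{\bot}\tilde{\bfm V}_{\bot}^\top$ and $\bfm A\bfm V = \bfm V\bfm\Lambda$, gives
\[
\widetilde{\bfm\Lambda}_{\bot}\,\bfm S - \bfm S\,\bfm\Lambda = \tilde{\bfm V}_{\bot}^\top(\widetilde{\bfm A}-\bfm A)\bfm V .
\]

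The remaining step is to invert the map $\bfm S\mapsto\widetilde{\bfm\Lambda}_{\bot}\bfm S - \bfm S\bfm\Lambda$ with norm at most $\delta^{-1}$. I would split the rows of $\bfm S$ (equivalently, the spectrum of $\widetilde{\bfm\Lambda}_{\bot}$) into the block with eigenvalues $\le\alpha-\delta$ and the block with eigenvalues $\ge\beta+\delta$; on each block there is a genuine one-sided gap of size at least $\delta$ from the interval $[\alpha,\beta]\supseteq\mathcal{L}(\bfm V)$, so the classical Sylvester estimate ``$\|\bfm B_1\bfm S-\bfm S\bfm B_2\|\ge\delta\|\bfm S\|$ whenever the spectra of the symmetric matrices $\bfm B_1,\bfm B_2$ are separated by $\delta$'' (valid in every orthogonal-invariant norm) applies to each block, and reassembling yields $\|\widetilde{\bfm\Lambda}_{\bot}\bfm S - \bfm S\bfm\Lambda\|\ge\delta\|\bfm S\|$. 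Since left multiplication by the partial isometry $\tilde{\bfm V}_{\bot}^\top$ does not increase an orthogonal-invariant norm, $\|\tilde{\bfm V}_{\bot}^\top(\widetilde{\bfm A}-\bfm A)\bfm V\|\le\|(\widetilde{\bfm A}-\bfm A)\bfm V\|$; combining the three displays gives $\| \sin\bTheta(\tilde{\bfm V},\bfm V)\| = \|\bfm S\|\le\delta^{-1}\|(\widetilde{\bfm A}-\bfm A)\bfm V\|$.

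I expect the only real obstacle to be the Sylvester inversion bound with the sharp constant $\delta^{-1}$ in an arbitrary orthogonal-invariant norm: for the Frobenius norm it is a one-line Rayleigh-quotient computation, but the uniform statement --- and in particular the need to split $\widetilde{\bfm\Lambda}_{\bot}$ across the forbidden interval, since it need not lie entirely on one side of $[\alpha,\beta]$ --- is where the genuine work lies; this is the classical content of Davis--Kahan (see also Bhatia's treatment of Sylvester equations). The angle-to-matrix dictionary, the identity above, and the partial-isometry inequality are then routine bookkeeping.
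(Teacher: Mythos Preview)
The paper does not prove this theorem; it states it as the classical Davis--Kahan $\sin\theta$ theorem and cites \cite{DKa70}, using it only as a black box for the subsequent results (Corollary~\ref{cor:dk}, Theorem~\ref{thm:pert-sym}, etc.). Your outline is the standard Davis--Kahan argument and is essentially correct: reduce $\|\sin\bTheta\|$ to $\|\tilde{\bfm V}_{\bot}^\top\bfm V\|$, derive the Sylvester identity $\widetilde{\bfm\Lambda}_{\bot}\bfm S-\bfm S\bfm\Lambda=\tilde{\bfm V}_{\bot}^\top(\widetilde{\bfm A}-\bfm A)\bfm V$, and invert the Sylvester operator with constant $\delta^{-1}$.

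One minor remark on the Sylvester step: the block-splitting followed by ``reassembly'' is not the cleanest route, because for a general orthogonal-invariant norm it is not immediate that $\|\bfm T_i\|\ge\delta\|\bfm S_i\|$ on each block implies the same for the stacked matrices. The standard proof instead shifts both diagonal matrices by $(\alpha+\beta)/2$ so that $\bfm\Lambda$ has spectrum in $[-r,r]$ and $\widetilde{\bfm\Lambda}_{\bot}$ has spectrum outside $(-r-\delta,r+\delta)$, and then expands $\bfm S=\sum_{k\ge0}\widetilde{\bfm\Lambda}_{\bot}^{-(k+1)}\bfm T\,\bfm\Lambda^{k}$ and sums the geometric series $\sum_k r^k/(r+\delta)^{k+1}=1/\delta$; this handles both sides of the gap simultaneously and works for any norm satisfying $\|\bfm A\bfm B\bfm C\|\le\|\bfm A\|_2\|\bfm B\|\|\bfm C\|_2$. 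Since you already flag this step as the classical content and point to Bhatia, this is only a refinement of what you wrote, not a gap.
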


This theorem can be generalized to singular vector perturbation for rectangular matrices; see \cite{Wed72}. A slightly unpleasant feature of this theorem is that $\delta$ depends on the eigenvalues of both $\bfm A$ and $\tilde{\bfm A}$. However, with the help of Weyl's inequality, we can immediately obtain a corollary that does not involve the eigenvalues of $\tilde \bA$. Let $\lambda_j(\cdot)$ denote the $j$th largest eigenvalue of a real symmetric matrix. Recall that Weyl's inequality bounds the differences between the eigenvalues of $\bfm A$ and $\tilde{\bfm A}$:
\begin{equation}\label{eq.weyl}
\max_{1\le j \le n} \left| \lambda_j(\tilde{\bfm A}) - \lambda_j(\bfm A) \right| \le \| \tilde{\bfm A} - \bfm A \|_2.
\end{equation}
This inequality suggests that, if the eigenvalues in $\mathcal{L}(\tilde{\bfm V}^{\bot})$ have the same ranks (in descending order) as those in $\mathcal{L}(\bfm V^{\bot})$, then $\mathcal{L}(\tilde{\bfm V}^{\bot})$ and $\mathcal{L}(\bfm V^{\bot})$ are similar. Below we state our corollary, whose proof is in the appendix.

\begin{cor}\label{cor:dk}
Assume the setup of the above theorem, and suppose the eigenvalues in $\mathcal{L}(\tilde{\bfm V})$ have the same ranks as those in $\mathcal{L}(\bfm V)$. If $\mathcal{L}(\bfm V) \subset [\alpha, \beta]$ and $\mathcal{L}(\bfm V^{\bot}) \subset  (-\infty, \alpha-\delta_0] \cup [\beta+\delta_0,+\infty)$ for some $\delta_0>0$, then
\begin{equation*}
\| \sin \bTheta(\tilde{\bfm V}, \bfm V) \|_2 \le 2 \delta_0^{-1} \| \widetilde{\bfm A} - \bfm A \|_2.
\end{equation*}
\end{cor}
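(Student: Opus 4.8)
The plan is to deduce Corollary~\ref{cor:dk} from Theorem~\ref{thm:dk} by using Weyl's inequality~\eqref{eq.weyl} to convert the eigenvalue gap, which is stated purely in terms of $\bfm A$, into the gap between $\mathcal{L}(\bfm V)$ and $\mathcal{L}(\tilde{\bfm V}^{\bot})$ that Theorem~\ref{thm:dk} actually requires. Write $E := \|\widetilde{\bfm A} - \bfm A\|_2$ throughout.

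First I would dispose of the trivial regime. If $E \ge \delta_0/2$, then $2\delta_0^{-1} E \ge 1$; since the canonical angles $\theta_k$ lie in $[0,\pi/2]$ we always have $\|\sin\bTheta(\tilde{\bfm V},\bfm V)\|_2 \le 1$, so the claimed bound holds automatically. Hence it suffices to treat the case $E < \delta_0/2$.

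In that case I would exploit the rank-matching hypothesis. Let $S \subset [n]$ be the set of ranks (in descending order) of the eigenvalues of $\bfm A$ attached to the columns of $\bfm V$, so that $\mathcal{L}(\bfm V) = \{\lambda_j(\bfm A): j \in S\}$ and $\mathcal{L}(\bfm V^{\bot}) = \{\lambda_j(\bfm A): j \notin S\}$; the assumption that $\mathcal{L}(\tilde{\bfm V})$ has the same ranks as $\mathcal{L}(\bfm V)$ says the corresponding index set for $\tilde{\bfm V}$ is also $S$, whence $\mathcal{L}(\tilde{\bfm V}^{\bot}) = \{\lambda_j(\widetilde{\bfm A}): j \notin S\}$. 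Now fix $j \notin S$: by Weyl's inequality $|\lambda_j(\widetilde{\bfm A}) - \lambda_j(\bfm A)| \le E < \delta_0/2$, and since $\lambda_j(\bfm A) \in (-\infty,\alpha-\delta_0] \cup [\beta+\delta_0,+\infty)$ this forces $\lambda_j(\widetilde{\bfm A}) \in (-\infty,\alpha-\delta_0/2] \cup [\beta+\delta_0/2,+\infty)$. Taking the union over $j \notin S$ gives $\mathcal{L}(\tilde{\bfm V}^{\bot}) \subset (-\infty,\alpha-\delta] \cup [\beta+\delta,+\infty)$ with $\delta := \delta_0/2$, while $\mathcal{L}(\bfm V) \subset [\alpha,\beta]$ by hypothesis.

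Finally I would apply Theorem~\ref{thm:dk} with this $\delta$ and the spectral norm (orthogonal-invariant), yielding $\|\sin\bTheta(\tilde{\bfm V},\bfm V)\|_2 \le \delta^{-1}\|(\widetilde{\bfm A}-\bfm A)\bfm V\|_2 \le \delta^{-1}\|\widetilde{\bfm A}-\bfm A\|_2 = 2\delta_0^{-1} E$, where the middle step uses that $\bfm V$ has orthonormal columns. The only delicate point is the bookkeeping in the previous step: one must confirm that the rank-matching hypothesis on $\mathcal{L}(\tilde{\bfm V})$ versus $\mathcal{L}(\bfm V)$ genuinely forces the complementary sets to be rank-matched index-by-index, so that Weyl's inequality can be invoked coordinatewise; everything else is routine.
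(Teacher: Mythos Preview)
Your proposal is correct and follows essentially the same approach as the paper: dispose of the trivial regime $E \ge \delta_0/2$, then use Weyl's inequality and the rank-matching hypothesis to locate $\mathcal{L}(\tilde{\bfm V}^{\bot})$ and apply Theorem~\ref{thm:dk}. The only cosmetic difference is that the paper keeps the sharper gap $\delta = \delta_0 - E$ and bounds $E/(\delta_0 - E) \le 2E/\delta_0$ at the end, whereas you discard the extra precision earlier by setting $\delta = \delta_0/2$; both routes give the same final constant.
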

We can then use $\| \sin \bTheta(\tilde{\bfm V}, \bfm V) \|_F \le \sqrt{K}\,\| \sin \bTheta(\tilde{\bfm V}, \bfm V) \|_2$ to obtain a bound under the Frobenius norm.
%Note that in the statement of the theorem, it is equivalent to suppose that the eigenvalues in $\mathcal{L}(\tilde{\bfm V}^{\bot})$ have the same ranks as those in $\mathcal{L}(\bfm V^{\bot})$. We also remark that
In the special case where $\mathcal{L}(\bV) = \{ \lambda \}$ and $\bV = \bv$, $\tilde \bV = \tilde \bv$ reduce to vectors, we can choose $\alpha = \beta = \lambda$, and the above corollary translates into
\begin{equation}
\min_{s \in \{\pm 1\}} \| \widehat{\bv} - s \bv \|_2 \le \sqrt{2}\, \sin\theta(\widehat{\bv} , \bv )  \le 2\sqrt{2}\, \delta_0^{-1} \| \widetilde{\bfm A} - \bfm A \|_2.
\end{equation}

\begin{figure}[h]
\centering
\includegraphics[scale=0.45]{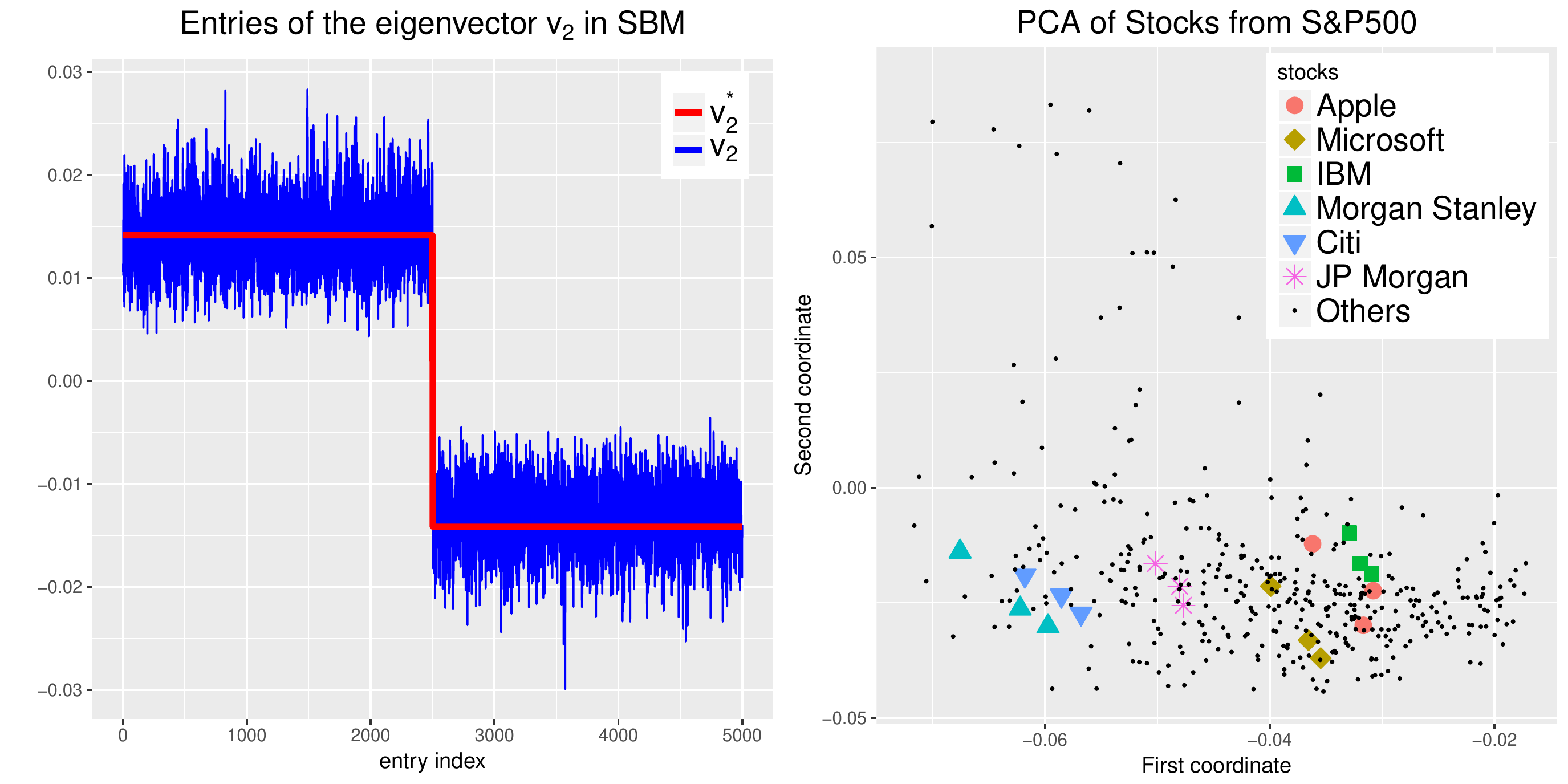}
\caption{The \textbf{left plot} shows the entries (coordinates) of the second eigenvectors $\bfm v_2$ computed from the adjacency matrix from the SBM with two equal-sized blocks ($n=5000, K=2$). The plot also shows the expectation counterpart $\bfm v_2^*$, whose entries have the same magnitude $O(1/\sqrt{n})$. The deviation of $\bfm v_2$ from $\bfm v_2^*$ is quite uniform, which is a phenomenon not captured by the Davis-Kahan's theorem. The \textbf{right plot} shows the coordinates of two leading eigenvectors of the sample covariance matrix calculated from 2012--2017 daily return data of 484 stocks (tiny black dots). We also highlight six stocks during three time windows (2012--2015, 2013--2016, 2014--2017) with big markers, so that the fluctuation/perturbation is shown.  The magnitude of these coordinates is typically small, and the fluctuation is also small.}\label{fig:SBMeig}
\end{figure}

We can now see that the factor model and PCA are approximately the same with sufficiently large eigen-gap. Indeed, under Identifiability Assumption~\ref{ass:1}, we have $\bfm \Sigma = \bfm B \bfm B^\top + \bfm \Sigma_u$. Applying Weyl's inequality and Corollary~\ref{cor:dk} to $\bfm B \bfm B^\top$ (as $\bfm A$) and $\bfm \Sigma$ (as $\tilde{\bfm A}$), we can easily control the eigenvalue/eigenvector differences by $\| \bfm \Sigma_u \|_2$ and the eigengap, which is comparably small under Pervasiveness Assumption~\ref{ass:2}. This difference can be interpreted as the bias incurred by PCA on approximating factor models.
%Once we obtain a good control of $\| \hat{\bfm \Sigma} - \bfm \Sigma \|_2$, we can set $\bfm A = \bfm B \bfm B^\top$ and $\tilde{\bfm A} = \hat{\bfm \Sigma}$. By Wey

%Usually, in factor models or PCA applications, we let $\bfm A$ be the population covariance matrix (typically unknown), and $\widetilde{\bfm A}$ be our covariance estimate (e.g.\ the sample covariance matrix). Then,
Furthermore, given any covariance estimator $\hat{\bfm \Sigma}$, we can similarly apply the above results by setting $\bfm A = \bfm \Sigma$ and $\tilde{\bfm A} = \hat{\bfm \Sigma}$ to bound the difference between the estimated eigenvalues/eigenvectors and the population counterparts. Note that the above corollary gives us an upper bound on the subspace estimation error in terms of the ratio $ \| \hat{\bfm \Sigma} - \bfm \Sigma \|_2/ \delta_0$.
%, whose inverse can be interpreted as the signal-to-noise ratio.

Next, we consider entry-wise bounds on the eigenvectors. For simplicity, here we only consider eigenvectors corresponding to unique eigenvalues rather than the general eigenspace. Often, we want to have a bound on each entry of the eigenvector difference $\widetilde\bv - \bv $,  instead of an $\ell_2$ norm bound, which is an average-type result. In many cases, none of these entries has dominant perturbation, but the Davis-Kahan's theorem falls short of providing a reasonable bound (the na\" ive bound $\| \cdot \|_{\infty} \le \| \cdot \|_2$ gives a suboptimal result).

Some recent papers \citep{AFWZ17} have addressed this problem, and in particular, entry-wise bounds of the following form are established.
\begin{equation*}
| [ \widetilde{\bfm v} - \bfm v]_m | \lesssim \mu\, \frac{\| \widetilde{\bfm A} - \bfm A \|_2}{\delta_0}  + \text{small term}, \qquad \forall\ m \in [n],
\end{equation*}
where $\mu \in [0,1]$ is related to the structure of the statistical problem and typically can be as small as $O(1/\sqrt{n})$, which is very desirable in high-dimensional setting. The small term is often related to independence pattern of the data, which is typically small under mild independence conditions.

We illustrate this idea in Figure~\ref{fig:SBMeig} through a simulated data example (left) and a real data example (right), both of which have factor-type structure. For the left plot, we generated a network data according to the stochastic block model with $K=2$ blocks (communities), each having nodes $n/2 = 2500$: the adjacency matrix that represents the links between nodes is a symmetric matrix, with upper triangular elements generated independently from Bernoulli trials (diagonal elements are taken as 0), with the edge probability $5\log n/n$ for two nodes within blocks and $\log n/(4n)$ otherwise. Our task is to classify (cluster) these two communities based on the adjacency matrix. We used the second eigenvector $\bv_2 \in \R^{5000}$ (that is, corresponding to the second largest eigenvalue) of the adjacency matrix as a classifier. The left panel of Figure~\ref{fig:SBMeig} represents the values of the $5000$ coordinates (or entries) $[\bv_2]_i$ in the y-axis against the indices $i=1,\ldots, 5000$ in the x-axis. For comparison, the second eigenvector $\bv_2^* \in \R^{5000}$ of the expectation of the adjacency matrix---which is of interest but unknown---have entries taking values only in $\{\pm 1/\sqrt{5000}\}$, depending on the unknown nature of which block a vertex belongs to (this statement is not hard to verify). We used the horizontal line to represent these ideal values:  they indicate exactly the membership of each vertex. Clearly, the magnitude of entry-wise perturbation is $O(1/\sqrt{n})$.  Therefore, we can use $\mathrm{sign}(\bv_2)/\sqrt{5000}$ as an estimate of $\bv^*$ and classify all nodes with the same sign as the same community.  See Section~\ref{sec:comdet} for more details.

%The curves represent the values of the $5000$ coordinates (entries), and the horizontal line represent the ideal but unknown values. Clearly, the magnitude of entry-wise perturbation is $O(1/\sqrt{n})$. See Section~\ref{sec:comdet} for more details.

For the right plot, we used daily return data of stocks that are constituents of S\&P 500 index from  2012.1.1--2017.12.31. We considered stocks with exactly $n=1509$ records and excluded stocks with incomplete/missing values, which resulted in $p=484$ stocks. Then, we calculated the sample covariance matrix $\hat{\bfm \Sigma}_{\sam} \in \R^{p \times p}$ using the data in the entire period, and computed two leading eigenvectors (note that they span the column space of $\bB$) and plotted the coordinates (entries) using small dots. Stocks with an coordinate smaller than $5\%$ quantile or larger than $95\%$ quantile are potentially outlying values and are not shown in the plot. In addition, we also highlighted the fluctuation of six stocks during three time windows: 2012.1--2015.12, 2013.1--2016.12 and 2014.1--2017.12, with different big markers. That is, for each of the three time windows, we re-computed the covariance matrices and the two leading eigenvectors, and then highlighted coordinates that correspond to the six major stocks. Clearly, the magnitude for these stocks is small, which is roughly $O(1/\sqrt{p})$, and the fluctuation of coordinates is also very small. Both plots suggest an interesting phenomenon of eigenvectors in high dimensions: \textit{entry-wise behavior of eigenvectors can be benign under factor model structure}.

To state our results rigorously, let us suppose that $\bfm A, \widetilde{\bfm A},  \bfm W \in \mathbb{R}^{n \times n}$ are symmetric matrices, with $\widetilde{\bfm A} = \bfm A + \bfm W$ and $\rank(\bfm A) = K < n$. Let the eigen-decomposition of $\bfm A$ and $\widetilde{\bfm A}$ be
\begin{equation}\label{eq:decomp}
\bfm A = \sum_{k=1}^K \lambda_k \bfm v_k \bfm v_k^\top, \quad \text{and} \quad \widetilde{\bfm A} = \sum_{k=1}^K \widetilde \lambda_k \widetilde{\bfm v}_k \widetilde{\bfm v}_k^\top + \sum_{k=K+1}^n \widetilde \lambda_k \widetilde{\bfm v}_k \widetilde{\bfm v}_k^\top.
\end{equation}
Here the eigenvalues $\{ \lambda_k \}_{k=1}^K$ and $\{\widetilde{\lambda}_k\}_{k=1}^K$ are the $K$ largest ones of $\bA$ and $\widetilde{\bA}$, respectively, in terms of absolute values. Both sequences are sorted in descending order. $\{\widetilde{\lambda}_k\}_{k=K+1}^n$ are eigenvalues of $\widetilde{\bA}$ whose absolute values are smaller than $|\widetilde{\lambda}_K|$. The eigenvectors $\{ \bfm v_k \}_{k=1}^K$ and $\{ \widetilde{\bfm v}_k \}_{k=1}^n$ are normalized to have unit norms.

Here $\{\lambda_k\}_{k=1}^K$ are allowed to take negative values. Thanks to Weyl's inequality, $\{ \widetilde{\lambda}_k \}_{k=1}^K$ and $\{ \widetilde{\lambda}_k \}_{k=K+1}^n$ are well-separated when the size of perturbation $\bfm W$ is not too large. In addition, we have the freedom to choose signs for eigenvectors, since they are not uniquely defined. Later, we will use `up to sign' to signify that our statement is true for at least one choice of sign. With the conventions $\lambda_0 = +\infty$ and $ \lambda_{K+1} = - \infty$, we define the eigen-gap as
\begin{equation}\label{def:delta}
\delta_k = \min \{ \lambda_{k-1} - \lambda_k, \lambda_{k} - \lambda_{k+1}, |\lambda_k|\},  \quad \forall\, k \in [K],
\end{equation}
which is the smallest distance between $\lambda_k$ and other eigenvalues (including $0$). This definition coincides with the (usual) eigen-gap in Corollary~\ref{cor:dk} in the special case $\mathcal{L}(\bv_k) = \{ \lambda_k \}$ where we are interested in a single eigenvalue and its associated eigenvector.

We now present an entry-wise perturbation result.  Let us first look at only one eigenvector.  In this case, when $\| \tilde \bA - \bA\|$ is small, heuristically,
$$
\tilde{\bv}_k  =  \frac{\tilde \bA  \tilde\bv_k}{\tilde \lambda_k} \approx \frac{\tilde \bA  \bv_k }{\lambda_k} = \bv_k + \frac{(\tilde \bA - \bA) \bv_k}{\lambda_k}
$$
holds uniformly for each entry.
When $\bA = \E \tilde \bA$, that is, $\tilde \bA$ is unbiased, this gives the first-order approximation (rather than bounds on the difference $\tilde \bv_k - \bv_k$) of the random vector $\tilde \bv_k$. \cite{AFWZ17} proves rigorously this result and generalizes to eigenspaces. The key technique for the proof is similar to Theorem~\ref{thm:pert-sym} below, which simplifies the one in \cite{AFWZ17}  in various ways but holds under more general conditions. It is stated in a deterministic way, and can be powerful if there is certain structural independence in the perturbation matrix $\bfm W$. A self-contained proof can be found in the appendix.

For each $m \in [n]$, let $\bfm W^{(m)} \in \R^{n \times n}$ be a modification of $\bW$ with the $m$th row and $m$th column zeroed out, i.e.,
\begin{equation*}
W^{(m)}_{ij} = W_{ij} \mathbbm{1}_{\{i\neq m\}} \mathbbm{1}_{\{j\neq m\}}, \quad \forall \, i, j \in [n].
\end{equation*}
We also define $\widetilde{\bfm A}^{(m)} = \bfm A + \bfm W^{(m)}$, and denote its eigenvalues and eigenvectors by $\{ \widetilde \lambda^{(m)}_k \}_{k=1}^n$ and $\{ \widetilde{\bfm v}_k^{(m)} \}_{k=1}^n$, respectively.
%If the eigen-gap $\delta_k$ is large enough compared with $\|\bfm W^{(m)}\|_2$, $\{ \widetilde \lambda^{(m)}_k \}_{k=1}^K$ and $\{ \widetilde \lambda^{(m)}_k \}_{k=K+1}^n$ are well-separated.
This construction is related to the leave-one-out technique in probability and statistics. For recent papers using this technique, see \cite{bean2013optimal, ZhoBou18, AFWZ17} for example.

\begin{thm}\label{thm:pert-sym}
Fix any $\ell \in [K]$. Suppose that $|\lambda_\ell| \asymp \max_{k \in [K]} | \lambda_k|$, and that the eigen-gap $\delta_\ell$ as defined in \eqref{def:delta} satisfies $\delta_\ell \ge 5 \| \bfm W \|_2$. Then, up to sign,
\begin{equation}\label{ineq:pert-sym}
\left| [ \widetilde{\bfm v}_\ell - \bfm v_\ell ]_m \right| \lesssim \frac{ \| \bfm W \|_2}{\delta_\ell} \left( \sum_{k=1}^K [\bfm v_k]_m^2 \right)^{1/2} + \frac{| \langle \bfm w_m,  \widetilde{\bfm v}_\ell^{(m)} \rangle|}{\delta_\ell}, \quad \forall\, m \in [n],
\end{equation}
where $\bfm w_m$ is the $m$th column of $\bfm W$.
\end{thm}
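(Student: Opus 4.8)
The plan is a leave-one-out argument combined with the Davis--Kahan theorem. Fix $m \in [n]$, write $\bfm w_m$ for the $m$th column of $\bfm W$, and abbreviate $\rho_m := \bigl(\sum_{k=1}^K [\bfm v_k]_m^2\bigr)^{1/2}$ for the quantity appearing in \eqref{ineq:pert-sym}. First I would collect the consequences of Weyl's inequality \eqref{eq.weyl}: since $\delta_\ell \ge 5\|\bfm W\|_2$, the eigenvalue $\widetilde\lambda_\ell$ stays separated from every other eigenvalue of $\widetilde{\bfm A}$ by at least $\tfrac{3}{5}\delta_\ell$, one has $|\widetilde\lambda_\ell| \ge \tfrac{4}{5}|\lambda_\ell| \ge \tfrac{4}{5}\delta_\ell$, and the same statements hold for $\widetilde\lambda_\ell^{(m)}$ and $\widetilde{\bfm A}^{(m)}$ (using $\|\bfm W^{(m)}\|_2 \le \|\bfm W\|_2$); the hypothesis gives $|\lambda_\ell| \asymp \max_k|\lambda_k|$. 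Applying Corollary~\ref{cor:dk} to $\bfm A$ and $\widetilde{\bfm A}$ and fixing the sign of $\widetilde{\bfm v}_\ell$ so that $s_\ell := \langle\bfm v_\ell,\widetilde{\bfm v}_\ell\rangle \ge 0$ yields $\sin\theta(\widetilde{\bfm v}_\ell,\bfm v_\ell) \lesssim \|\bfm W\|_2/\delta_\ell$, hence $\sum_{k\ne\ell} s_k^2 \le 1-s_\ell^2 \lesssim (\|\bfm W\|_2/\delta_\ell)^2$ where $s_k := \langle\bfm v_k,\widetilde{\bfm v}_\ell\rangle$.

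Next I would extract the $m$th coordinate from $\widetilde\lambda_\ell\widetilde{\bfm v}_\ell = (\bfm A + \bfm W)\widetilde{\bfm v}_\ell$ using $\bfm A = \sum_{k=1}^K \lambda_k\bfm v_k\bfm v_k^\top$, which gives $\widetilde\lambda_\ell[\widetilde{\bfm v}_\ell]_m = \sum_{k=1}^K\lambda_k s_k[\bfm v_k]_m + \langle\bfm w_m,\widetilde{\bfm v}_\ell\rangle$, and therefore, subtracting $\widetilde\lambda_\ell[\bfm v_\ell]_m$,
\[
[\widetilde{\bfm v}_\ell - \bfm v_\ell]_m \;=\; \frac{(\lambda_\ell s_\ell - \widetilde\lambda_\ell)[\bfm v_\ell]_m \;+\; \sum_{k\ne\ell}\lambda_k s_k[\bfm v_k]_m \;+\; \langle\bfm w_m,\widetilde{\bfm v}_\ell\rangle}{\widetilde\lambda_\ell}.
\]
The first two numerator terms are routine. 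Writing $\widetilde\lambda_\ell = \sum_k\lambda_k s_k^2 + \langle\widetilde{\bfm v}_\ell,\bfm W\widetilde{\bfm v}_\ell\rangle$ and using $1-s_\ell\le 1-s_\ell^2$ together with Step~1 gives $|\lambda_\ell s_\ell - \widetilde\lambda_\ell| \lesssim |\lambda_\ell|\,\|\bfm W\|_2/\delta_\ell$, so this term contributes $\lesssim (\|\bfm W\|_2/\delta_\ell)\rho_m$; and by Cauchy--Schwarz $\bigl|\sum_{k\ne\ell}\lambda_k s_k[\bfm v_k]_m\bigr| \le \max_k|\lambda_k|\bigl(\sum_{k\ne\ell}s_k^2\bigr)^{1/2}\rho_m \lesssim |\lambda_\ell|(\|\bfm W\|_2/\delta_\ell)\rho_m$, which after dividing by $|\widetilde\lambda_\ell|\asymp|\lambda_\ell|$ is again $\lesssim(\|\bfm W\|_2/\delta_\ell)\rho_m$.

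The crux is the term $\langle\bfm w_m,\widetilde{\bfm v}_\ell\rangle$, for which the naive bound $|\langle\bfm w_m,\widetilde{\bfm v}_\ell\rangle|\le\|\bfm w_m\|_2$ is too lossy---it would leave a $\|\bfm W\|_2/\delta_\ell$ term shorn of the factor $\rho_m$. Here I would bring in the leave-one-out vector: split $\langle\bfm w_m,\widetilde{\bfm v}_\ell\rangle = \langle\bfm w_m,\widetilde{\bfm v}_\ell^{(m)}\rangle + \langle\bfm w_m,\widetilde{\bfm v}_\ell-\widetilde{\bfm v}_\ell^{(m)}\rangle$; dividing the first piece by $|\widetilde\lambda_\ell|\gtrsim\delta_\ell$ produces exactly the second term of \eqref{ineq:pert-sym}. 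For the second piece, the key observation is that the $m$th row of $\widetilde{\bfm A}^{(m)} = \bfm A + \bfm W^{(m)}$ equals the $m$th row of $\bfm A$, so the eigenvector equation gives $\widetilde\lambda_\ell^{(m)}[\widetilde{\bfm v}_\ell^{(m)}]_m = \sum_k\lambda_k[\bfm v_k]_m\langle\bfm v_k,\widetilde{\bfm v}_\ell^{(m)}\rangle$, whence $|[\widetilde{\bfm v}_\ell^{(m)}]_m| \lesssim \rho_m$ by Cauchy--Schwarz and $|\widetilde\lambda_\ell^{(m)}|\asymp\max_k|\lambda_k|$. Consequently $(\widetilde{\bfm A}-\widetilde{\bfm A}^{(m)})\widetilde{\bfm v}_\ell^{(m)} = (\bfm W-\bfm W^{(m)})\widetilde{\bfm v}_\ell^{(m)}$ equals $c\,\bfm e_m + [\widetilde{\bfm v}_\ell^{(m)}]_m\,\bfm w_m$ for a scalar $c$ with $|c|\le|\langle\bfm w_m,\widetilde{\bfm v}_\ell^{(m)}\rangle| + \|\bfm W\|_2\rho_m$, so its norm is $\lesssim |\langle\bfm w_m,\widetilde{\bfm v}_\ell^{(m)}\rangle| + \|\bfm W\|_2\rho_m$. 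Applying the one-sided Davis--Kahan bound of Theorem~\ref{thm:dk} to $\widetilde{\bfm A}^{(m)}$ and $\widetilde{\bfm A}$, whose relevant gap is $\gtrsim\delta_\ell$ by Step~1 (note $\|\bfm W-\bfm W^{(m)}\|_2\le 2\|\bfm W\|_2$), and aligning the sign of $\widetilde{\bfm v}_\ell^{(m)}$ gives $\|\widetilde{\bfm v}_\ell-\widetilde{\bfm v}_\ell^{(m)}\|_2 \lesssim \delta_\ell^{-1}\bigl(|\langle\bfm w_m,\widetilde{\bfm v}_\ell^{(m)}\rangle| + \|\bfm W\|_2\rho_m\bigr)$. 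Then $|\langle\bfm w_m,\widetilde{\bfm v}_\ell-\widetilde{\bfm v}_\ell^{(m)}\rangle| \le \|\bfm w_m\|_2\,\|\widetilde{\bfm v}_\ell-\widetilde{\bfm v}_\ell^{(m)}\|_2 \lesssim (\|\bfm W\|_2/\delta_\ell)\bigl(|\langle\bfm w_m,\widetilde{\bfm v}_\ell^{(m)}\rangle|+\|\bfm W\|_2\rho_m\bigr)$, and dividing by $|\widetilde\lambda_\ell|\gtrsim\delta_\ell$ and using $\|\bfm W\|_2/\delta_\ell\le 1/5$ bounds this by a constant times the two right-hand terms of \eqref{ineq:pert-sym}. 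Combining the three numerator estimates completes the proof. The main obstacle is precisely this last step: controlling $\langle\bfm w_m,\widetilde{\bfm v}_\ell\rangle$ sharply enough to retain the $\rho_m$ factor, which is possible only because $\widetilde{\bfm A}^{(m)}$ and $\bfm A$ agree on row $m$, forcing $[\widetilde{\bfm v}_\ell^{(m)}]_m$---and hence the effect of re-inserting the $m$th row of $\bfm W$---to be of size $\rho_m$ rather than of size $\|\bfm w_m\|_2$.
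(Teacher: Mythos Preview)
Your proposal is correct and follows essentially the same approach as the paper's proof: both extract the $m$th coordinate from the eigenvector equation $\widetilde\lambda_\ell\widetilde{\bfm v}_\ell=(\bfm A+\bfm W)\widetilde{\bfm v}_\ell$, bound the ``$\bfm A$-part'' via Davis--Kahan and Cauchy--Schwarz to get the $\rho_m$ factor, and then handle $\langle\bfm w_m,\widetilde{\bfm v}_\ell\rangle$ by the leave-one-out split, controlling $\|\widetilde{\bfm v}_\ell-\widetilde{\bfm v}_\ell^{(m)}\|_2$ through Davis--Kahan applied to $\widetilde{\bfm A}^{(m)}$ versus $\widetilde{\bfm A}$ together with the key observation that $|[\widetilde{\bfm v}_\ell^{(m)}]_m|\lesssim\rho_m$ because the $m$th row of $\widetilde{\bfm A}^{(m)}$ agrees with that of $\bfm A$. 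The only cosmetic differences are that you bound $|\lambda_\ell s_\ell-\widetilde\lambda_\ell|$ via the quadratic-form expansion $\widetilde\lambda_\ell=\sum_k\lambda_ks_k^2+\widetilde{\bfm v}_\ell^\top\bfm W\widetilde{\bfm v}_\ell$ (the paper uses Weyl and a triangle inequality), and your argument for $|[\widetilde{\bfm v}_\ell^{(m)}]_m|\lesssim\rho_m$ is slightly more direct than the paper's Step~4; one small slip is that your inequality $|c|\le|\langle\bfm w_m,\widetilde{\bfm v}_\ell^{(m)}\rangle|+\|\bfm W\|_2\rho_m$ should read $\lesssim$, since it already relies on the claim $|[\widetilde{\bfm v}_\ell^{(m)}]_m|\lesssim\rho_m$.
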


To understand this theorem, let us compare it with the standard $\ell_2$ bound (Theorem~\ref{thm:dk})
, which implies $\| \widetilde{\bfm v}_\ell - \bfm v_\ell  \|_2 \lesssim \| \bfm W \|_2 / \delta_\ell$. The first term of the upper bound in \eqref{ineq:pert-sym} says the perturbation on the $m$th entry can be much smaller, because the factor $(\sum_{k=1}^K [\bfm v_k]_m^2)^{1/2}$, always bounded by $1$, can be usually much smaller. For example, if $\bfm v_k$'s are uniformly distributed on the unit sphere, then this factor is typically of order $O(\sqrt{K\log n/n})$. This factor is related to the notion of incoherence in \cite{CRe09, Can11}, etc.

The second term of the upper bound in \eqref{ineq:pert-sym} is typically much smaller than $\| \bW \|_2 / \delta_{\ell}$, especially under certain independence assumption. For example, if $\bfm w_m$ is independent of other entries, then, by construction, $\widetilde{\bfm v}_\ell^{(m)}$ and $\bfm w_m$ are independent. If, moreover, entries of $\bw_m$ are i.i.d.\ standard Gaussian, $| \langle \bfm w_m,  \widetilde{\bfm v}_\ell^{(m)} \rangle|$ is of order $O_{\mathbb{P}}(1)$, whereas $\| \bfm W \|_2$ typically scales with $\sqrt{n}$. This gives a bound for the $m$th entry, and can be extended to an $\ell_\infty$ bound if we are willing to make independence assumption for all $m \in [n]$ (which is typical for random graphs for example).

We remark that this result can be generalized to perturbation bounds for eigenspaces \citep{AFWZ17}, and the conditions on eigenvalues can be relaxed using certain random matrix assumptions \citep{KolXia16, o2017random, Zho17}.

Now, we extend this perturbation result to singular vectors of rectangular matrices. Suppose $\bfm L, \widetilde{ \bfm L},  \bfm E \in \R^{n \times p}$ satisfy $\widetilde{ \bfm L} = \bfm L + \bfm E$ and $\rank(\bfm L) = K < \min\{n,p\}$. Let the SVD of $\bfm L$ and $\widetilde{ \bfm L}$ be\footnote{Here, we prefer using $\bu_k$ to refer to the singular vectors (not to be confused with the noise term in factor models). The same applies to Section~\ref{sec:4}.}
\begin{equation*}
\bfm L = \sum_{k=1}^K \sigma_k \bfm u_k \bfm v_k^\top \quad \text{and} \quad \widetilde{ \bfm L} = \sum_{k=1}^K \widetilde \sigma_k \widetilde{\bfm u}_k \widetilde{\bfm v}_k^\top + \sum_{k=K+1}^{\min\{n,p\}} \widetilde \sigma_k \widetilde{\bfm u}_k \widetilde{\bfm v}_k^\top,
\end{equation*}
where $\sigma_k$ and $\widetilde \sigma_{k}$ are respectively non-increasing in $k$, and $\bfm u_k$ and $\bfm v_k$ are all normalized to have unit $\ell_2$ norm. As before, let $\{ \tilde \sigma_k\}_{k=1}^K$ have $K$ largest absolute values. Similar to \eqref{def:delta}, we adopt the conventions $\sigma_0= +\infty$, $\sigma_{K+1}= 0$ and define the eigen-gap as
\begin{equation}\label{def:gamma}
\gamma_k = \min \{ \sigma_{k-1} - \sigma_k, \sigma_{k} - \sigma_{k+1}\},  \quad \forall\, k \in [K].
\end{equation}

For $j\in [p]$ and $i\in[n]$, we define unit vectors $\{ \tilde{\bfm u}_k^{(j)} \}_{k=1}^{\min\{n,p\}} \subseteq \R^n$ and $\{ \tilde{\bfm v}_k^{(i)} \}_{k=1}^{\min\{n,p\}} \subseteq \R^p$ by replacing certain row or column of $\bfm E$ with zeros. To be specific, in our expression $\widetilde{\bfm L} = \bfm L + \bfm E$, if we replace the $i$th row of $\bfm E$ by zeros, then the normalized right singular vectors of the resulting perturbed matrix are denoted by $\{ \tilde{\bfm v}_k^{(i)} \}_{k=1}^{\min\{n,p\}}$; and if we replace the $j$th column of $\bfm E$ by zeros, then the normalized left singular vectors of the resulting perturbed matrix are denoted by $\{ \tilde{\bfm u}_k^{(j)} \}_{k=1}^{\min\{n,p\}}$.

\begin{cor}\label{cor:pert-assym}
Fix any $\ell \in [K]$. Suppose that $ \sigma_\ell \asymp \max_{k \in [K]}  \sigma_k$, and that $\gamma_\ell \ge 5 \| \bfm E \|_2$. Then, up to sign,
\begin{align*}
\left| [ \widetilde{\bfm u}_\ell - \bfm u_\ell ]_i \right| &\lesssim \frac{ \| \bfm E \|_2}{\gamma_\ell} \left( \sum_{k=1}^K [\bfm u_k]_i^2 \right)^{1/2} + \frac{| \langle (\bfm e_i^\row)^\top,  \widetilde{\bfm v}_\ell^{(i)} \rangle|}{\gamma_\ell}, \quad \forall \, i \in [n], \quad \text{and}\\
\left| [ \widetilde{\bfm v}_\ell - \bfm v_\ell ]_j \right| &\lesssim \frac{ \| \bfm E \|_2}{\gamma_\ell} \left( \sum_{k=1}^K [\bfm v_k]_j^2 \right)^{1/2} + \frac{| \langle \bfm e_j^\coln,  \widetilde{\bfm u}_\ell^{(j)} \rangle|}{\gamma_\ell}, \quad \forall \, j\in [p], \quad
\end{align*}
where $\bfm e_i^{\row} \in \R^{p}$ is the $i$th row vector of $\bfm E$, and $\bfm e_j^{\coln} \in \R^{n}$ is the $j$th column vector of $\bfm E$.
\end{cor}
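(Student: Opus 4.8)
The plan is to reduce the rectangular perturbation problem to the symmetric one covered by Theorem~\ref{thm:pert-sym}, via the Hermitian dilation. First I would pass to the $(n+p)\times(n+p)$ symmetric matrices
\[
\bfm A^{\sharp} = \begin{pmatrix} \bzero & \bfm L \\ \bfm L^\top & \bzero \end{pmatrix}, \qquad \widetilde{\bfm A}^{\sharp} = \begin{pmatrix} \bzero & \widetilde{\bfm L} \\ \widetilde{\bfm L}^\top & \bzero \end{pmatrix}, \qquad \bfm W^{\sharp} = \begin{pmatrix} \bzero & \bfm E \\ \bfm E^\top & \bzero \end{pmatrix},
\]
so that $\widetilde{\bfm A}^{\sharp} = \bfm A^{\sharp} + \bfm W^{\sharp}$, $\|\bfm W^{\sharp}\|_2 = \|\bfm E\|_2$, and $\rank(\bfm A^{\sharp}) = 2K < n+p$. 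The $2K$ nonzero eigenvalues of $\bfm A^{\sharp}$, sorted in descending order, are $\sigma_1,\dots,\sigma_K,-\sigma_K,\dots,-\sigma_1$, with orthonormal eigenvectors $\bfm z_k^{\pm} := \tfrac{1}{\sqrt2}\bigl(\bfm u_k^\top,\pm\bfm v_k^\top\bigr)^\top$ (and analogously for $\widetilde{\bfm A}^{\sharp}$ with $\widetilde\sigma_k,\widetilde{\bfm u}_k,\widetilde{\bfm v}_k$). I would then invoke Theorem~\ref{thm:pert-sym} for the pair $(\bfm A^{\sharp},\widetilde{\bfm A}^{\sharp})$, with ``$K$'' there replaced by ``$2K$'' and with target index equal to the position of $\sigma_\ell$ in the list above, which is again $\ell$.

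Second, I would check that the hypotheses transfer. The norm requirement $|\lambda_\ell|\asymp\max_k|\lambda_k|$ for $\bfm A^{\sharp}$ reads $\sigma_\ell\asymp\max_k\sigma_k$, which is assumed. For the eigen-gap, the three-term minimum in \eqref{def:delta} (with $\lambda_0=+\infty$, $\lambda_{2K+1}=-\infty$) gives, for $\ell<K$, the candidates $\sigma_{\ell-1}-\sigma_\ell$, $\sigma_\ell-\sigma_{\ell+1}$, $\sigma_\ell$, of which the last is never binding since $\sigma_\ell\ge\sigma_\ell-\sigma_{\ell+1}$; for $\ell=K$ the neighbour below $\sigma_K$ in the list is $-\sigma_K$, so the candidates are $\sigma_{K-1}-\sigma_K$, $2\sigma_K$, $\sigma_K$. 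In either case the minimum equals $\gamma_\ell$ of \eqref{def:gamma} (the case $\ell=K$ matching the convention $\sigma_{K+1}:=0$ there). Hence the symmetric eigen-gap of $\bfm A^{\sharp}$ at $\sigma_\ell$ is exactly $\gamma_\ell$, and the assumed $\gamma_\ell\ge 5\|\bfm E\|_2$ becomes $\delta_\ell\ge 5\|\bfm W^{\sharp}\|_2$. By Weyl's inequality applied to $\bfm A^{\sharp}$ and $\widetilde{\bfm A}^{\sharp}$ the same bound keeps the top $2K$ eigenvalues of $\widetilde{\bfm A}^{\sharp}$ separated from the remaining ones, so the decomposition in \eqref{eq:decomp} is legitimate here.

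Third, I would read off \eqref{ineq:pert-sym}. Fix $i\in[n]$, an index in the $\bfm u$-block. Up to one global sign---the sign freedom of $\bfm z_\ell^{+}$ being precisely the joint flip $(\bfm u_\ell,\bfm v_\ell)\mapsto(-\bfm u_\ell,-\bfm v_\ell)$ of an SVD pair---we have $[\widetilde{\bfm z}_\ell^{+}-\bfm z_\ell^{+}]_i=\tfrac{1}{\sqrt2}[\widetilde{\bfm u}_\ell-\bfm u_\ell]_i$. The $2K$ eigenvectors of $\bfm A^{\sharp}$ have $i$th coordinate $\pm\tfrac{1}{\sqrt2}[\bfm u_k]_i$, so the incoherence factor in \eqref{ineq:pert-sym} equals $\bigl(\sum_{k=1}^K[\bfm u_k]_i^2\bigr)^{1/2}$. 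The $i$th column of $\bfm W^{\sharp}$ is $(\bzero_n^\top,(\bfm e_i^{\row})^\top)^\top$, and zeroing its $i$th row and column is the same as zeroing the $i$th row of $\bfm E$, so the corresponding leave-one-out dilation eigenvector $\widetilde{\bfm z}_\ell^{\sharp,(i)}$ has length-$p$ block $\tfrac{1}{\sqrt2}\widetilde{\bfm v}_\ell^{(i)}$, with $\widetilde{\bfm v}_\ell^{(i)}$ as defined just before the statement; since $\bfm w_i^{\sharp}$ vanishes on its first $n$ coordinates, $\langle\bfm w_i^{\sharp},\widetilde{\bfm z}_\ell^{\sharp,(i)}\rangle=\tfrac{1}{\sqrt2}\langle(\bfm e_i^{\row})^\top,\widetilde{\bfm v}_\ell^{(i)}\rangle$. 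Substituting into \eqref{ineq:pert-sym} and multiplying through by $\sqrt2$ gives the first inequality; rerunning the argument with $i$ an index in the $\bfm v$-block $\{n+1,\dots,n+p\}$ (equivalently, on $\bfm L^\top$) gives the second, with the same global sign.

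The $\sqrt2$ bookkeeping and the translation of the leave-one-out construction between the two block positions are routine. The one step that genuinely needs care is matching the eigen-gap definitions: the dilation injects new eigenvalues---a block of zeros and the reflections $-\sigma_k$---that are absent in the rectangular picture, and one must verify case by case, the boundary case $\ell=K$ in particular, that none of these shrinks the effective gap below $\gamma_\ell$. This is exactly where the third term $|\lambda_\ell|$ in \eqref{def:delta} and the convention $\sigma_{K+1}:=0$ in \eqref{def:gamma} are made to coincide.
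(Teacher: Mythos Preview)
Your proposal is correct and follows essentially the same route as the paper's own proof: Hermitian dilation to a symmetric $(n+p)\times(n+p)$ problem, identification of eigenvalues $\pm\sigma_k$ and eigenvectors $\tfrac{1}{\sqrt2}(\bfm u_k^\top,\pm\bfm v_k^\top)^\top$, verification that the symmetric eigen-gap equals $\gamma_\ell$ and $\|\bfm W^{\sharp}\|_2=\|\bfm E\|_2$, and then reading off Theorem~\ref{thm:pert-sym} blockwise. Your eigen-gap case analysis (especially the boundary case $\ell=K$) and your remark that a single sign choice in the dilation fixes the signs of $\bfm u_\ell$ and $\bfm v_\ell$ simultaneously are in fact more explicit than what the paper writes.
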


If we view $\widetilde{\bfm L}$ as the data matrix (or observation) $\bfm X$, then, the low rank matrix $\bfm L$ can be interpreted as $\bfm B \bfm F^\top$. The above result provides a tool of studying estimation errors of the singular subspace of this low rank matrix. Note that $ \widetilde{\bfm v}_\ell^{(i)}$ can be interpreted as the result of removing the idiosyncratic error of the $i$th observation, and $ \widetilde{\bfm u}_\ell^{(j)}$ as the result of removing the $j$th covariate of the idiosyncratic error.

To better understand this result, let us consider a very simple case: $K=1$ and each row of $\bfm E$ is i.i.d.\ $\cN(\mathbf{0},\bfm I_p)$. We are interested in bounding the singular vector difference between the rank-$1$ matrix $\bfm L = \sigma_1 \bfm u \bfm v^\top$ and its noisy observation $\tilde{\bfm L} = \bfm L + \bfm E$. This is a spiked matrix model with a single spike. By independence between $\bfm e_i^\row$ and $\widetilde{\bfm v}_\ell^{(i)}$ as well as elementary properties of Gaussian variables, Corollary~\ref{cor:pert-assym} implies that with probability $1-o(1)$, up to sign,
\begin{equation}\label{ineq-eigen-inf}
\| \tilde{\bfm u}_1 - \bfm u_1 \|_{\infty} \le \frac{\| \bfm E \|_2}{\sigma_1} \| \bfm u_1 \|_{\infty} + \frac{O(\sqrt{\log n})}{\sigma_1}.
\end{equation}

Random matrix theory gives $\| \bfm E \|_2 \asymp \sqrt{n} + \sqrt{p}$ with high probability. Our $\ell_2$ perturbation inequality (Corollary \ref{cor:dk}) implies that $\| \tilde{\bfm u}_1 - \bfm u_1 \|_{2} \leq \| \bfm E \|_2 / \sigma_1$. This upper bound is much larger than the two terms in (\ref{ineq-eigen-inf}), as $\| \bfm u_1 \|_{\infty}$ is typically much smaller than $1$ in high dimensions. Thus, (\ref{ineq-eigen-inf}) gives a better entry-wise control over the $\ell_2$ counterpart.

Beyond this simple case, there are many desirable features of Corollary~\ref{cor:pert-assym}. First of all, we allow $K$ to be moderately large, in which case, as mentioned before, the factor $( \sum_{k=1}^K [\bfm u_k]_i^2 )^{1/2}$ is related to the incoherence structure in the matrix completion and robust PCA literature. Secondly, the result holds deterministically, so random matrices are also applicable.  Finally, the result holds for each $i \in [n]$ and $j \in [p]$, and thus it is useful even if the entries of $\bE$ are not independent, e.g. when a subset of covariates are dependent.

To sum up, our results Theorem~\ref{thm:pert-sym} and Corollary~\ref{cor:pert-assym} provide flexible tools of studying entry-wise perturbation of eigenvectors and singular vectors. It is also easy to adapt to other problems since their proofs are not complicated (see the appendix).

\section{Applications to High-dimensional Statistics}
\label{sec:3}

\subsection{Covariance estimation}\label{sec-cov-est}

Estimation of high-dimensional covariance matrices has wide applications in modern data analysis.
When the dimensionality $p$ exceeds the sample size $n$, the sample covariance matrix becomes singular. Structural assumptions are necessary in order to obtain a consistent estimator in this challenging scenario. One typical assumption in the literature is that the population covariance matrix is sparse, with a large fraction of entries being (close to) zero, see \cite{BLe08} and \cite{CLi11}. In this setting, most variables are nearly uncorrelated. In financial and genetic data, however, the presence of common factors leads to strong dependencies among variables \citep{FFL08}. The approximate factor model (\ref{fm}) better characterizes this structure and helps construct valid estimates. Under this model, the covariance matrix $\bSigma$ has decomposition (\ref{cov.fm}),
where $\bSigma_u=\cov(\bu_i) = ( \sigma_{u,jk} )_{1\leq j,k\leq p}$ is assumed to be sparse \citep{FLM13}. Intuitively, we may assume that $\bSigma_u$ only has a small number of nonzero entries. Formally, we require the sparsity parameter $$
m_0 :=\max_{j\in[p]}\sum_{k=1}^{p} \mathbbm{1}\left\{ \sigma_{u,jk} \neq 0 \right\}$$ to be small. This definition can be generalized to a weaker sense of sparsity, which is characterized by $m_q = \max_{j\in[p]} \sum_{k=1}^{p} | \sigma_{u,jk} |^q$, where $q\in (0,1)$ is a parameter. Note that small $m_q$ forces $\bSigma_u$ to have few large entries. However, for simplicity, we choose not to use this more general definition when presenting theoretical results below.

The approximate factor model has the following two important special cases, under which the parameter estimation has been well studied.
\begin{itemize}
	\item The sparse covariance model is (\ref{cov.fm}) without factor structure, i.e. $\bSigma = \bSigma_u$; typically, entry-wise thresholding is employed for estimation.
	\item The strict factor model corresponds to (\ref{cov.fm}) with $\bSigma_u$ being diagonal; usually, PCA-based methods are used.
\end{itemize}
The approximate factor model is a combination of the above two models, as it comprises both a low-rank component and a sparse component. A natural idea is to fuse methodologies for the two models into one, by estimating the two components using their corresponding methods. This motivated our high-level idea for estimation under the approximate factor model: (1) estimating the low-rank component (factors and loadings) using regression (when factors are observable) or PCA (when factors are latent); (2) after eliminating it from $\bSigma$, employing standard techniques such as thresholding in the sparse covariance matrix literature to estimate $\bSigma_u$; (3) adding the two estimated components together.

First, let us consider the scenario where the factors $\{ \bff_i \}_{i=1}^n$ are observable. In this setting, we do not need the Identifiability Assumption \ref{ass:1}.
\cite{FFL08} focused on the strict factor model where the $\bSigma_u$ in (\ref{cov.fm}) is diagonal. It is then extended to the approximate factor model (\ref{fm}) by \cite{FLM11}. Later, \cite{FWZ16} relaxed the sub-Gaussian assumption on the data to moment condition, and proposed a robust estimator. We are going to present the main idea of these methods using the one in \cite{FLM11}.
%To facilitate presentation, we assume $\bmu=\mathbf{0}$ in (\ref{fm}).

\vspace{0.1in}
{\it Step 1}. Estimate $\bB$ using the ordinary least-squares: $\hat{\bB}=(\hat{\bb}_1,\ldots,\hat{\bb}_p)^{\top}$ where
\begin{equation*}
	(\hat{a}_j, \hat{\bb}_j) = \argmin_{a,\hat{\bb}} \frac{1}{n} \sum_{i=1}^{n} ( x_{ij} -a - \bb^{\top} \bff_i )^2.
\end{equation*}

\vspace{0.1in}
{\it Step 2}. Let $\hat{\ba} = (\hat{a}_1, \cdots, \hat{a}_p)^\top$ be the vector of intercepts, $\hat{\bu}_i = \bx_i -\hat{\ba} - \hat{\bB} \bff_i$ be the vector of residual for $i\in[n]$, and $\bS_u = \frac{1}{n} \sum_{i=1}^{n} \hat{\bu}_i \hat{\bu}^{\top}$ be the sample covariance. Apply thresholding to $\bS_u$ and obtain a regularized estimator $\hat{\bSigma}_u$.
\vspace{0.1in}

{\it Step 3}. Estimate $\cov(\bff_i)$ by
$\widehat{\cov}(\bff_i) = \frac{1}{n} \sum_{i=1}^{n} (\bff_i - \bar{\bff})(\bff_i - \bar{\bff})^{\top}$.
\vspace{0.1in}

{\it Step 4}. The final estimator is $\hat{\bSigma} = \hat{\bB} \widehat{\cov}(\bff_i) \hat{\bB}^{\top} + \hat{\bSigma}_u$.
\vspace{0.1in}

We remark that in Step 2, there are many thresholding rules for estimating sparse covariance matrices. Two popular choices are the
$t$-statistic-based adaptive thresholding \citep{CLi11} and
correlation-based adaptive thresholding \citep{FLM13}, with the entry-wise thresholding level chosen to be $\omega \asymp K\sqrt{\frac{\log p}{n}} $. As the sparsity pattern of correlation and covariance are the same and the correlation matrix is scale-invariant, one typically applies the thresholding on the correlation and then scales it back to the covariance. Except for the number of factors $K$, this coincides with the commonly-used threshold for estimating sparse covariance matrices.

While it is not possible to achieve better convergence of $\bSigma$ in terms of the operator norm or the Frobenius norm, \cite{FLM11} considered two other important norms. Under regularity conditions, it is shown that
\begin{equation}\label{ineq-factor-cov}
	\begin{split}
		& \| \hat{\bSigma} - \bSigma \|_{\bSigma} = O_{\P} \Bigl ( m_0 K \sqrt{\frac{\log p}{n}} + \frac{K\sqrt{p} \log p}{n} \Bigr ),\\
		&\| \hat{\bSigma} - \bSigma \|_{\max} = O_{\P} \Bigl ( K \sqrt{\frac{\log p}{n}} + K^2 \sqrt{\frac{\log n}{n}} \Bigr ).
	\end{split}
\end{equation}
Here for $\bA\in\R^{p\times p}$, $\|\bA\|_{\bSigma}$ and $\| \bA\|_{\max}$ refer to its entropy-loss norm $p^{-1/2}\|\bSigma^{-1/2}\bA\bSigma^{-1/2}\|_F$ and entry-wise max-norm $\max_{i,j}|A_{ij}|$. As is pointed out by \cite{FLM11} and \cite{WFa17}, they are relevant to portfolio selection and risk management. In addition, convergence rates for $\| \hat{\bSigma}^{-1} - \bSigma^{-1} \|_2$, $\| \hat{\bSigma}_u - \bSigma_u \|_2$ and $\| \hat{\bSigma}_u^{-1} - \bSigma_u^{-1} \|_2$ are also established.

Now we come to covariance estimation with latent factors. As is mentioned in Section \ref{sec:2.1}, the Pervasiveness Assumption \ref{ass:2} helps separate the low-rank part $\bB \bB^{\top}$ from the sparse part $\bSigma_u$ in (\ref{cov.fm}). %We assume the number of factors $K$ to be bounded, and impose the Identifiability Assumption \ref{ass:1}. Then (\ref{cov.fm}) is simplified to $\bSigma = \bB\bB^{\top} + \bSigma_u$.
\cite{FLM13} proposed a Principal Orthogonal complEment Thresholding (POET) estimator, motivated by the relationship between PCA and factor model, and the estimation of sparse covariance matrix $\bSigma_u$ in \cite{FLM11}. The procedure is described as follows.

\vspace{0.1in}
{\it Step 1}. Let $\bS = \frac{1}{n} \sum_{i=1}^{n} \bx_i \bx_i^{\top}$ be the sample covariance matrix, $\{ \hat{\lambda}_j \}_{j=1}^p$ be the eigenvalues of $\bS$ in non-ascending order, $\{ \hat{\bxi}_j \}_{j=1}^p$ be their corresponding eigenvectors.
\vspace{0.1in}

{\it Step 2}. Apply thresholding to $\bS_u = \bS - \sum_{j=1}^{K} \hat{\lambda}_j \hat{\bxi}_j \hat{\bxi}_j^{\top}$ and obtain a regularized estimator $\hat{\bSigma}_u$.
\vspace{0.1in}

{\it Step 3}. The final estimator is $\hat{\bSigma} = \sum_{j=1}^{K} \hat{\lambda}_j \hat{\bxi}_j \hat{\bxi}_j^{\top} + \hat{\bSigma}_u$.
\vspace{0.1in}

Here $K$ is assumed to be known and bounded to simplify presentation and emphasize the main ideas. The methodology and theory in \cite{FLM13} also allow using a data-driven estimate $\hat{K}$ of $K$. In Step 2 above we can choose from a large class of thresholding rules, and it is recommended to use the correlation-based adaptive thresholding. However, the thresholding level should be set to $\widetilde{\omega} \asymp \sqrt{\frac{\log p}{n}} + \frac{1}{\sqrt{p}} $. Compared to the level $\sqrt{\frac{\log p}{n}}$ we use in covariance estimation with observed factors, the extra term $1/\sqrt{p}$ here is the price we pay for not knowing the latent factors. It can be negligible when $p$ grows much faster than $n$. Intuitively, thanks to the Pervasiveness Assumption, the latent factors can be estimated accurately in high dimensions. \cite{FLM13} obtained theoretical guarantees for the POET that are similar to (\ref{ineq-factor-cov}). The analysis allows for general sparsity patterns of $\bSigma_u$ by considering $m_q$
%, rather than $m_0$ in \cite{FLM11},
as the measure of sparsity for $q\in [0,1)$.

Robust procedures handling heavy-tailed data are proposed and analyzed by \cite{FLW18,FWZ16}. In another line of research, \cite{LCF17} considered estimation of the covariance matrix of a set of targeted variables, when additional data beyond the variables of interest are available. By assuming a factor model structure, they constructed an estimator taking advantage of all the data and justified the information gain theoretically.

The Pervasiveness Assumption rules out the case where factors are weak and the leading eigenvalues of $\bSigma$ are not as large as $O(p)$. Shrinkage of eigenvalues is a powerful technique in this scenario. \cite{DGJ13} systematically studied the optimal shrinkage in spiked covariance model where all the eigenvalues except several largest ones are assumed to be the same. \cite{WFa17} considered the approximate factor model, which is more general, and proposed a new version of POET with shrinkage for covariance estimation.

\subsection{Principal component regression with random sketch}

Principal component regression (PCR), first proposed by \cite{Hot33} and \cite{Ken65}, is one of the most popular methods of dimension reduction in linear regression. It employs the principal components of the predictors $\bfm x_i$ to explain or predict the response $y_i$. Why do principal components, not other components, have more prediction power?   Here we offer an insight from the perspective of high-dimensional factor models.

The basic assumption is that the unobserved latent factors $\bff_i \in \R^K$  drive simultaneously the covariates via \eqref{fm} and responses, as shown in  Figure~\ref{Fig3}. As a specific example, we assume
\[
	y_i = {\btheta^*}^\top \bff_i + \varepsilon_i, \quad i = 1, \ldots, n, \quad \textnormal{or in matrix form,} \quad \by = \bF\btheta^* + \bvarepsilon,
\]
%Consider the following linear regression model:
%\beq
%	Y = \bX^\top \bbeta^* + \varepsilon,
%\eeq
where $\by = (y_1, \ldots, y_n) ^ \top$ and the noise $\bvarepsilon = (\varepsilon_1, \ldots, \varepsilon_n)^\top$ has zero means.  Since $\bff_i$ is latent and the covariate vector is high dimensional, we naturally infer the latent factors from the observed covariates via PCA.  This yields the PCR.

\begin{figure}[h]
\centering
	\includegraphics[clip, trim=0 0 0cm 0, width=.75\textwidth]{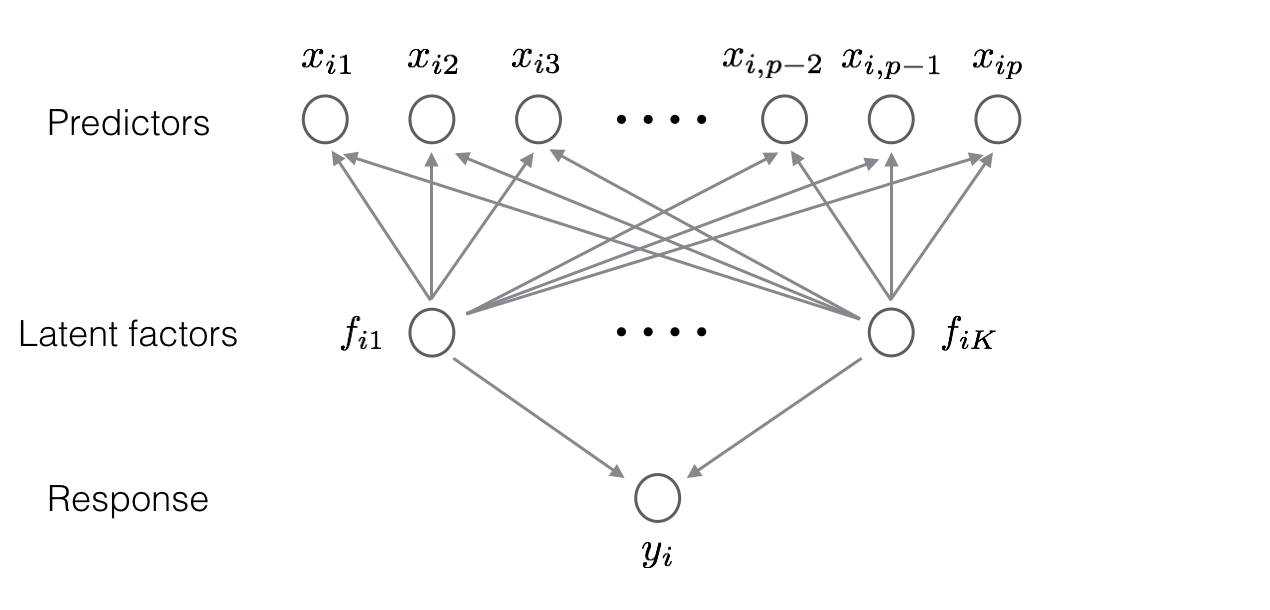}
	\caption{Illustration of the data generation mechanism in PCR. Both predictors $\bx_i$ and responses $y_i$ are driven by the latent factors $\bff_i$. PCR extracts latent factors via the principal components from $\bX$, and uses the resulting estimate $\hat \bF$ as the new predictor. Regressing $\by$ against $\hat \bF$ leads to the PCR estimator $\hat{\bfm \theta} \in \R^K$, which typically enjoys a smaller variance due to its reduced dimension, though it introduces bias. \label{Fig3}}
\end{figure}

By \eqref{eq2.2} (assume $\bmu = 0$ for simplicity),  $y_i \approx  (\bbeta^{\dag})^\top \bx_i + \varepsilon_i $, where $\bbeta^{\dag} := \bB(\bB^\top \bB)^{-1} \btheta^* \in \R^{p}$. This suggests that if we directly regress $y_i$ over $\bx_i$, then the regression coefficient $\bbeta^{\dag}$ should lie in the column space spanned by $\bB$. This inspires the core idea of PCR, i.e., instead of seeking the least square estimator in the entire $\RR^p$ space, we restrict our search scope to be the left leading singular space of $\bX$, which is approximately the column space of $\bB$ under the Pervasiveness Assumption.

Let us discuss PCR more rigorously.  To be consistent with the rest of this paper, we let $\bX \in \R^{p \times n}$, which is different from conventions, and
\beq
	y_i = \bx_i^\top \bbeta^*+ \varepsilon_i, \quad i = 1, \ldots, n, \quad \textnormal{or in matrix form}, \quad \by = \bX^\top \bbeta^* + \bvarepsilon.
\eeq
%where $y_i$ is the response, $\bbeta^* \in \R^p$ is the true regression vector and $\varepsilon_i$ is the noise that is uncorrelated with $\bx_i$, with mean $0$ and variance $\sigma ^ 2$.
Let $\bX = (\bx_1, \ldots, \bx_n) = \bP \bSigma \bQ^\top$ be the SVD of $\bX$, where $\bSigma = \diag(\sigma_1, \ldots, \sigma_{\min(n, p)})$ with non-increasing singular values. For some integer $ K$ satisfying $1 \le K \le \min(n, p)$, write $\bP = (\bP_K, \bP_{K+})$ and $\bQ = (\bQ_K, \bQ_{K+})$. The PCR estimator $\widehat \bbeta_K$ solves the following optimization problem:
\beq
	\label{eq:3.5}
	\widehat \bbeta_K := \argmin_{\bP_{K+}^\top \bbeta = \bzero}\ \ltwonorm{\by - \bX^\top \bbeta}.
\eeq
%The constraint $\bP_{K+}^\top \bbeta = \bzero$ forces $\widehat \bbeta_K$ to lie in the column space of $\bP_K$.
It is easy to verify that
\beq
	\label{eq:3.3}
	\widehat \bbeta_K = \bP_K \bSigma_K^{-1} \bQ_K^\top \by = \bP_K\bP_K^\top \bbeta^* + \bP_K \bSigma_K^{-1} \bQ_K^\top \bvarepsilon,
\eeq
where $\bSigma_K \in \R^{K \times K}$ is the top left submatrix of $\bSigma$. The following lemma calculates the excess risk of $\widehat\bbeta_K$, i.e., $\cE (\widehat \bbeta_K) := \E_{\bvarepsilon}[\ltwonorm{\bX^ \top \widehat \bbeta_K - \bX^\top \bbeta^*}^2 / n]$, treating $\bfm X$ as fixed. The proof is relegated to the appendix.

\begin{lem}
	\label{lem:3.1} Let $\bp_1,\ldots, \bp_{\min\{n,p\}} \in \R^p$ be the column vectors of $\bP$. For $j = 1, \ldots, p$, denote $\alpha_j = (\bbeta^*)^\top \bp_j$. We have
	\[
		\cE (\widehat \bbeta_K) = \frac{K \sigma^2}{n} + \sum\limits_{j = K + 1}^p \lambda_j^2 \alpha_j^2.
	\]
\end{lem}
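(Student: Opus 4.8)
The plan is a short, direct computation built on the closed form \eqref{eq:3.3}. I would first pass to the full (square) SVD, so that $\bP = (\bP_K, \bP_{K+}) \in \R^{p\times p}$ and $\bQ \in \R^{n\times n}$ are orthogonal and $\sigma_j := 0$ for $\min\{n,p\} < j \le p$; then $\{\bp_j\}_{j=1}^p$ is an orthonormal basis of $\R^p$ and $\bI_p - \bP_K\bP_K^\top = \sum_{j=K+1}^p \bp_j\bp_j^\top$. Substituting \eqref{eq:3.3} gives
\[
\widehat\bbeta_K - \bbeta^* = -(\bI_p - \bP_K\bP_K^\top)\bbeta^* + \bP_K\bSigma_K^{-1}\bQ_K^\top\bvarepsilon = -\sum_{j=K+1}^p \alpha_j\bp_j + \bP_K\bSigma_K^{-1}\bQ_K^\top\bvarepsilon .
\]
Multiplying by $\bX^\top$, expanding $\ltwonorm{\cdot}^2$, and using that the first summand is deterministic while the second has $\bvarepsilon$-mean zero, the cross term drops out under $\E_{\bvarepsilon}$, so $\cE(\widehat\bbeta_K)$ splits into a squared-bias term and a variance term.

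For the bias term I would use $\bX^\top\bp_j = \sigma_j\bq_j$, read off directly from $\bX = \bP\bSigma\bQ^\top$, together with orthonormality of $\{\bq_j\}$; this yields $\tfrac1n\ltwonorm{\bX^\top\sum_{j>K}\alpha_j\bp_j}^2 = \tfrac1n\sum_{j=K+1}^p \sigma_j^2\alpha_j^2 = \sum_{j=K+1}^p \lambda_j^2\alpha_j^2$, using $\lambda_j^2 = \sigma_j^2/n$ (the terms with $j > \min\{n,p\}$ contribute nothing). For the variance term, the one identity that takes a moment to see is $\bX^\top\bP_K = \bQ_K\bSigma_K$ (multiply out $\bX^\top = \bQ\bSigma^\top\bP^\top$ and use $\bP^\top\bP_K = (\bI_K,\bzero)^\top$); it collapses the stochastic part of the prediction error to $\bX^\top\bP_K\bSigma_K^{-1}\bQ_K^\top\bvarepsilon = \bQ_K\bQ_K^\top\bvarepsilon$, i.e. the orthogonal projection of $\bvarepsilon$ onto $\col(\bQ_K)$. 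With $\Cov(\bvarepsilon) = \sigma^2\bI_n$, a trace computation gives $\tfrac1n\E_{\bvarepsilon}\ltwonorm{\bQ_K\bQ_K^\top\bvarepsilon}^2 = \tfrac{\sigma^2}{n}\tr(\bQ_K^\top\bQ_K) = \tfrac{K\sigma^2}{n}$, and adding the two pieces gives the stated formula.

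I do not expect a genuine obstacle here — the whole thing is a few lines of linear algebra once \eqref{eq:3.3} is available. The two places that need a little care are: (i) insisting on the square orthogonal $\bP$, so that $\bI_p - \bP_K\bP_K^\top$ is honestly $\sum_{j>K}\bp_j\bp_j^\top$ and the bias sum legitimately runs up to $j=p$; and (ii) spotting the collapse $\bX^\top\bP_K\bSigma_K^{-1}\bQ_K^\top = \bQ_K\bQ_K^\top$, which is exactly what makes the variance term a pure dimension count $K\sigma^2/n$, independent of the singular values of $\bX$.
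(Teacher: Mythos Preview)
Your proposal is correct and follows essentially the same approach as the paper's own proof: both substitute \eqref{eq:3.3}, use the SVD identity $\bX^\top\bP_K\bSigma_K^{-1}\bQ_K^\top=\bQ_K\bQ_K^\top$ to reduce the stochastic part to a projection, and read off the bias from $\bP_{K+}^\top\bbeta^*$. The only cosmetic difference is that the paper works directly with $\bX^\top\widehat\bbeta_K - \bX^\top\bbeta^*$ and exploits the orthogonality of $\bQ_K$ and $\bQ_{K+}$ to split the squared norm, whereas you first expand $\widehat\bbeta_K-\bbeta^*$ and drop the cross term via $\E_{\bvarepsilon}$; both routes are equally short.
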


Define the ordinary least squares (OLS) estimator $\widehat\bbeta := (\bX \bX^ \top)^{-1}\bX \by$. Note that $\cE(\widehat\bbeta) = \E_{\bvarepsilon} [\ltwonorm{\bX^\top \widehat\bbeta - \bX^ \top \bbeta^*}^2 / n]  = \min(n, p) \sigma^2 / n$. Comparing $\cE(\widehat\bbeta_K)$ and $\cE(\widehat\bbeta)$, one can clearly see a variance-bias tradeoff:
%(1) the first term $K\sigma^2/n$ is usually much smaller than $\min(n,p)\sigma^2/n$, since typically $K \ll \min(n,p)$---this is the advantage of reduced variance due to the constraint in \eqref{eq:3.5}; (2)
PCR reduces the variance by introducing a bias term $\sum_j \lambda_j^2 \alpha_j^2$, which is typically small and vanishes in the ideal case $\bP^\top_{K+}\bbeta^* = \bzero$ ---this is the bias incurred by imposing the constraint in \eqref{eq:3.5}.

In the high-dimensional setting where $p$ is large, calculating $\bP_K$ using SVD is computationally expensive. Recently, sketching has gained growing attention in statistics community and is used for downscaling and accelerating inference tasks with massive data. See recent surveys by \cite{Woo14} and \cite{YMM16}. The essential idea is to multiply the data matrix by a sketch matrix to reduce its dimension while still preserving the statistical performance of the procedure, since random projection reduces the strength of the idiosyncratic noise. To apply sketching to PCR, we first multiply the design matrix $\bX$ by an appropriately chosen matrix $\bR \in \R^{p \times m}$ with $K \le m < p$:
\beq
	\widetilde \bX := \bR^\top \bX,
\eeq
where $\bR$ is called the ``sketching matrix".  This creates $m$ indices based on $\bX$.  From the factor model perspective (assuming $\bmu = 0$), with a proper choice of $\bR$, we have $ \widetilde \bX \approx \bR^\top \bB \bF^{\top}$, since the idiosyncratic components in \eqref{fm} is averaged out due to weak dependence of $\bu$.  Hence, the indices in  $\widetilde \bX $ are approximately linear combinations of the factors $\{ \bff_i \}_{i=1}^n $.  At the same time, since $m \geq K$ and $\bR$ is nondegenerate, the row space of $\widetilde \bX $ is approximately the same as that spanned by $\bF^{\top}$.  This shows running linear regression on $\widetilde \bX $ is approximately the same as running it on $\bF^{\top}$, without using the computationally expensive PCA.

%where $\bR$ is a $p$-by-$m$ random matrix and $K < m < p$.
%Note that the sketched matrix $\widetilde \bX$ is an $m \times n$ matrix, and thus each column (observation) has a smaller dimension $m$ compared with $p$.
We now examine the property of sketching approach beyond the factor models.
Let $\widetilde \bX = \widetilde \bP \widetilde\bSigma \widetilde \bQ^\top$ be the SVD of $\widetilde\bX$, and write $\widetilde \bP = (\widetilde \bP_K, \widetilde\bP_{K+})$ and $\widetilde \bQ = (\widetilde\bQ_K, \widetilde\bQ_{K+})$. Imitating the form of  \eqref{eq:3.3}, we consider the following sketched PCR estimator:
\beq
	\label{eq:3.6}
	\widetilde \bbeta_K := \bR \widetilde \bP_K \widetilde \bSigma_K^{-1} \widetilde \bQ_K^\top \by,
\eeq
where $\widetilde \bSigma_K \in \R^{K \times K}$ is the top left submatrix of $\widetilde \bSigma$.

We now explain the above construction for $\widetilde \bbeta_K$. It is easy to derive from \eqref{eq:3.3} that given $\bR^\top \bX$ and $\by$ as the design matrix and response vector, the PCR estimator should be $\widetilde \bbeta^0_K := \widetilde \bP_K \widetilde \bSigma_K^{-1} \widetilde \bQ_K^\top \by$. Then the corresponding PCR projection of $\by$ onto $\bR^\top \bX$ should be $\bX^\top \bR \widetilde\bbeta^0_K = \bX^\top\bR \widetilde \bP_K \widetilde \bSigma_K^{-1} \widetilde \bQ_K^\top \by = \bX^ \top \widetilde\bbeta_K$. This leads to the construction of $\widetilde \bbeta_K$ in \eqref{eq:3.6}. Theorem 4 in \cite{MAv18} gives the excess risk of $\widetilde \bbeta_K$, which holds for any $\bR$ satisfying the conditions of the theorem.

\begin{thm}
	\label{thm:3.1}
	Assume $m \ge K $ and $\rank(\bR^\top \bX) \ge K$. If $\|\sin\Theta(\widetilde \bP_{K}, \bP_K)\|_2 \allowbreak \le \nu < 1$, then
    \beq
    	\cE(\widetilde \bbeta_K) \le \cE(\widehat \bbeta_K) + \frac{(2\nu + \nu ^ 2)\ltwonorm{\bX^ \top \bbeta ^ *} ^ 2}{n}.
    \eeq
\end{thm}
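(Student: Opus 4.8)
The plan is to observe that both the full PCR fit $\bX^\top\widehat\bbeta_K$ and the sketched fit $\bX^\top\widetilde\bbeta_K$ are orthogonal projections of the response $\by$ onto $K$-dimensional subspaces of $\R^n$, to write the two excess risks in closed form, and then to control their difference by the operator-norm distance between the two projectors, which is exactly the canonical-angle quantity $\nu$ in the hypothesis. First I would simplify the fitted values. From \eqref{eq:3.3} and the SVD $\bX^\top=\bQ\bSigma\bP^\top$ one has $\bX^\top\bP_K=\bQ_K\bSigma_K$, hence $\bX^\top\widehat\bbeta_K=\bX^\top\bP_K\bSigma_K^{-1}\bQ_K^\top\by=\bQ_K\bQ_K^\top\by$. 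Likewise, substituting \eqref{eq:3.6} and writing the SVD $\widetilde\bX=\widetilde\bP\widetilde\bSigma\widetilde\bQ^\top$, and using $\widetilde\bX^\top\widetilde\bP_K=\widetilde\bQ_K\widetilde\bSigma_K$,
\[
\bX^\top\widetilde\bbeta_K=(\bR^\top\bX)^\top\widetilde\bP_K\widetilde\bSigma_K^{-1}\widetilde\bQ_K^\top\by=\widetilde\bX^\top\widetilde\bP_K\widetilde\bSigma_K^{-1}\widetilde\bQ_K^\top\by=\widetilde\bQ_K\widetilde\bQ_K^\top\by,
\]
where $\widetilde\bSigma_K$ is invertible and $\widetilde\bQ_K$ has $K$ orthonormal columns because $\rank(\bR^\top\bX)\ge K$. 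Thus $\bX^\top\widehat\bbeta_K=\Pi\by$ and $\bX^\top\widetilde\bbeta_K=\widetilde\Pi\by$, with $\Pi:=\bQ_K\bQ_K^\top$ and $\widetilde\Pi:=\widetilde\bQ_K\widetilde\bQ_K^\top$ rank-$K$ orthogonal projectors determined by $\bX$ and $\bR$ alone (not by $\bvarepsilon$).

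Next I would compute the excess risks. Writing $\bz:=\bX^\top\bbeta^*$ and $\by=\bz+\bvarepsilon$, and using $\Pi(\bI-\Pi)=\bzero$ to kill the cross term,
\[
\cE(\widehat\bbeta_K)=\frac1n\,\E_{\bvarepsilon}\ltwonorm{\Pi\bvarepsilon-(\bI-\Pi)\bz}^2=\frac{K\sigma^2}{n}+\frac1n\ltwonorm{(\bI-\Pi)\bz}^2,
\]
since $\E_{\bvarepsilon}\ltwonorm{\Pi\bvarepsilon}^2=\sigma^2\tr\Pi=K\sigma^2$ under the homoscedastic noise assumption $\E[\bvarepsilon\bvarepsilon^\top]=\sigma^2\bI_n$ used in Lemma~\ref{lem:3.1} (this re-derives that lemma, its bias term being $\frac1n\ltwonorm{(\bI-\Pi)\bz}^2$). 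The identical computation with $\widetilde\Pi$ in place of $\Pi$ gives $\cE(\widetilde\bbeta_K)=K\sigma^2/n+\ltwonorm{(\bI-\widetilde\Pi)\bz}^2/n$. Subtracting, the variance terms cancel and, since $\Pi,\widetilde\Pi$ are orthogonal projectors,
\[
\cE(\widetilde\bbeta_K)-\cE(\widehat\bbeta_K)=\frac1n\bigl(\ltwonorm{(\bI-\widetilde\Pi)\bz}^2-\ltwonorm{(\bI-\Pi)\bz}^2\bigr)=\frac1n\bigl(\ltwonorm{\Pi\bz}^2-\ltwonorm{\widetilde\Pi\bz}^2\bigr).
\]

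It then remains to bound the last difference. The two $K$-dimensional subspaces steering the fits are $\col(\bQ_K)$ and $\col(\widetilde\bQ_K)$, the leading right-singular subspaces of $\bX$ and of $\widetilde\bX=\bR^\top\bX$ respectively (these are the sample-space singular subspaces that, in the conventional $n\times p$ orientation of the design, play the role of the leading left-singular subspaces referred to in the hypothesis). By the identity $\|\sin\bTheta(\bV_1,\bV_2)\|_2=\|\bV_1\bV_1^\top-\bV_2\bV_2^\top\|_2$ recalled in Section~\ref{sec:2}, the assumption yields $\opnorm{\Pi-\widetilde\Pi}\le\nu$. Then, by the triangle inequality and $\ltwonorm{\widetilde\Pi\bz}\le\ltwonorm{\bz}$,
\[
\ltwonorm{\Pi\bz}\le\ltwonorm{\widetilde\Pi\bz}+\opnorm{\Pi-\widetilde\Pi}\,\ltwonorm{\bz}\le\ltwonorm{\widetilde\Pi\bz}+\nu\ltwonorm{\bz},
\]
so $\ltwonorm{\Pi\bz}^2-\ltwonorm{\widetilde\Pi\bz}^2\le 2\nu\ltwonorm{\bz}\ltwonorm{\widetilde\Pi\bz}+\nu^2\ltwonorm{\bz}^2\le(2\nu+\nu^2)\ltwonorm{\bz}^2$; dividing by $n$ and recalling $\bz=\bX^\top\bbeta^*$ gives the claimed bound.

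The argument is short once the two reductions are set up, and there is no serious analytic obstacle. The steps that require care are purely bookkeeping: verifying the SVD algebra that collapses $\bX^\top\widetilde\bbeta_K$ in \eqref{eq:3.6} to the clean projector $\widetilde\bQ_K\widetilde\bQ_K^\top\by$ (this is precisely where the particular form of $\widetilde\bbeta_K$, engineered to mimic \eqref{eq:3.3}, is used), and correctly matching the $\sin\bTheta$ hypothesis to the pair of subspaces $\col(\bQ_K),\col(\widetilde\bQ_K)\subseteq\R^n$ that actually govern the fitted values rather than to the feature-space subspaces. Everything else is routine.
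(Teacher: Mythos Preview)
Your argument is correct and complete. The paper does not supply its own proof of this theorem; it simply attributes the result to Theorem~4 in \cite{MAv18}. So there is nothing in the paper to compare against beyond noting that your derivation is self-contained, uses only the SVD identities already set up in Section~3.2 together with Lemma~\ref{lem:3.1}, and recovers the stated bound cleanly.

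One point you handled correctly but is worth stating plainly for the record: as written in the paper, the hypothesis $\|\sin\bTheta(\widetilde\bP_K,\bP_K)\|_2\le\nu$ is dimensionally inconsistent, since $\bP_K\in\R^{p\times K}$ while $\widetilde\bP_K\in\R^{m\times K}$. The intended hypothesis is on the right-singular subspaces $\bQ_K,\widetilde\bQ_K\subset\R^n$ (equivalently, the eigenspaces of $\bX^\top\bX$ and $\widetilde\bX^\top\widetilde\bX=\bX^\top\bR\bR^\top\bX$), which is exactly what the Davis--Kahan step in the subsequent corollary controls via $\|\bX^\top\bR\bR^\top\bX-\bX^\top\bX\|_2$. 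You identified this and used $\|\bQ_K\bQ_K^\top-\widetilde\bQ_K\widetilde\bQ_K^\top\|_2\le\nu$, which is the only reading that makes the theorem both well-posed and consistent with the rest of the section.
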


	This theorem shows that the extra bias induced by sketching is $(2\nu + \nu^2)\ltwonorm{\bX^\top \bbeta^*}^2 / n$. Given the bound of $\cE(\widehat \bbeta_K)$ in Lemma \ref{lem:3.1}, we can deduce that
	\[
		\cE(\widetilde \bbeta_K) \le \frac{K \sigma^2}{n} + \sum\limits_{j = K + 1}^p \alpha_j^2 \sigma_j^2 + \frac{(2\nu + \nu ^ 2)\ltwonorm{\bX^ \top \bbeta ^ *} ^ 2}{n}.
	\]
%	Under a typical setting where $\ltwonorm{\bX^\top \bbeta^*}^2 / n = O_{\mathbb{P}}(1)$, i.e., the average signal strength is $O_{\mathbb{P}}(1)$. Hence to ensure that the sketching does not hurt the statistical efficiency of $\widehat\bbeta_K$, Theorem \ref{thm:3.1} requires $\nu$ to be dominated by $\max(K\sigma^2 / n, \sum_{j = K + 1}^p \alpha_j^2 \sigma_j^2)$ .
As we will see below, a smaller $\nu$ requires a larger $m$, and thus more computation. Therefore, we observe a tradeoff between statistical accuracy and computational resources: if we have more computational resources, we can allow a large dimension of sketched matrix $\tilde \bX$, and the sketched PCR is more accurate, and vice versa.

One natural question thus arises: which $\bR$ should we choose to guarantee a small $\nu$ to retain the statistical rate of $\widehat\bbeta_K$? Recent results \citep{CNW15} on approximate matrix multiplication (AMM) suggest several candidate sketching matrices for $\bR$. Define the stable rank $\sr(\bX) := \fnorm{\bX} ^ 2 / \opnorm{\bX} ^ 2$, which can be interpreted as a soft version of the usual rank---indeed, $\sr(\bX) \le \rank(\bX)$ always holds, and $\sr(\bX)$ can be small if $\bX$ is approximately low-rank. An example of candidate sketching matrices for $\bR$ is a random matrix with independent and suitably scaled sub-Gaussian entries. As long as the sketch size $m = \Omega(\sr(\bX) + \log(1 / \delta) / \varepsilon ^ 2)$, it will hold for any $\varepsilon, \delta \in (0, 1 / 2)$ that
\beq
	\PP(\opnorm{\bX^\top \bR\bR^\top \bX - \bX^\top \bX} ^ 2 \ge \varepsilon \opnorm{\bX}^ 2) \le \delta.
\eeq
Combining this with the Davis-Kahan Theorem (Corollary~\ref{cor:dk}), we can deduce that $\opnorm{\sin \Theta(\widetilde \bP_K, \bP_K)}$ is small with certain eigen-gap condition. We summarize our argument by presenting a corollary of Theorem~9 in \cite{MAv18} below. Readers can find more candidate sketching matrices in the examples after Theorem~1 in \cite{CNW15}.

\begin{cor}
	For any  $\nu, \delta \in (0, 1 / 2)$, let $$\varepsilon = \nu (1 + \nu) ^ {-1} (\sigma_K^2 - \sigma_{K + 1}^2 ) / \sigma^ 2_1.$$ Let $\bR \in \RR^{p \times m}$ a random matrix with i.i.d. $\cN(0, 1 / m)$ entries. Then there exists a universal constant $C>0$ such that for any $\delta > 0$, if $m \ge C(\sr(\bX) + \log(1 / \delta) / \varepsilon ^ 2)$, it holds with probability at least $1 - \delta$ that
    \beq
    	\label{eq:3.9}
    	\cE(\widetilde \bbeta_K) \le \cE(\widehat \bbeta_K) + \frac{(2\nu + \nu ^ 2)\ltwonorm{\bX^\top \bbeta ^ *} ^ 2}{n}.
    \eeq
\end{cor}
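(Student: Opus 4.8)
The plan is to chain together three ingredients already assembled above: the AMM guarantee for Gaussian sketches quoted just before the statement, the Davis--Kahan $\sin\theta$ theorem (Theorem~\ref{thm:dk}), and the excess-risk bound for the sketched PCR estimator (Theorem~\ref{thm:3.1}). The only genuine work is to calibrate $\varepsilon$ so that the AMM error, pushed through Davis--Kahan, produces exactly the subspace-distance bound $\nu$ required by Theorem~\ref{thm:3.1}; everything else is bookkeeping and citation. The role of the factor $\nu(1+\nu)^{-1}$ in the definition of $\varepsilon$ is precisely to compensate for the shrinkage of the eigen-gap that Weyl's inequality forces when one passes from $\bX^\top\bX$ to $\widetilde\bX^\top\widetilde\bX$.

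First I would fix the good event. Since $\bR$ has i.i.d.\ $\cN(0,1/m)$ entries --- a suitably scaled sub-Gaussian sketch --- the AMM bound applies: there is a universal $C>0$ such that, whenever $m \ge C(\sr(\bX) + \log(1/\delta)/\varepsilon^2)$, the event
\[
\cA_0 := \Bigl\{\, \opnorm{\widetilde\bX^\top\widetilde\bX - \bX^\top\bX} \le \varepsilon\,\opnorm{\bX}^2 = \varepsilon\,\sigma_1^2 \,\Bigr\}
\]
holds with probability at least $1-\delta$ (here $\widetilde\bX^\top\widetilde\bX = \bX^\top\bR\bR^\top\bX$ and $\opnorm{\bX}^2=\sigma_1^2$). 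All remaining steps are deterministic on $\cA_0$.

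Next I would convert $\cA_0$ into a bound on the relevant sine--theta distance. A direct computation from \eqref{eq:3.3} and the SVD of $\widetilde\bX$ shows $\bX^\top\widehat\bbeta_K = \bQ_K\bQ_K^\top\by$ and $\bX^\top\widetilde\bbeta_K = \widetilde\bQ_K\widetilde\bQ_K^\top\by$, so the subspaces that actually drive the two estimators are the top-$K$ right singular spaces of $\bX$ and of $\widetilde\bX$, i.e.\ the top-$K$ eigenspaces of the symmetric $n\times n$ matrices $\bX^\top\bX = \bQ\bSigma^2\bQ^\top$ and $\widetilde\bX^\top\widetilde\bX$, whose level-$K$ gap for $\bX^\top\bX$ equals $g := \sigma_K^2 - \sigma_{K+1}^2$. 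On $\cA_0$, Weyl's inequality gives $\widetilde\sigma_{K+1}^2 \le \sigma_{K+1}^2 + \varepsilon\sigma_1^2$, so $\mathcal{L}(\widetilde\bQ_K^{\bot})$ is still contained in $(-\infty,\,\sigma_K^2 - (g-\varepsilon\sigma_1^2)]$ while $[\sigma_K^2,\sigma_1^2]$ contains $\mathcal{L}(\bQ_K)$; substituting the prescribed $\varepsilon\sigma_1^2 = \nu(1+\nu)^{-1}g$ makes the residual separation equal to $g/(1+\nu)$. Applying Theorem~\ref{thm:dk} with $\bA = \bX^\top\bX$, $\widetilde\bA = \widetilde\bX^\top\widetilde\bX$, and this separation yields
\[
\|\sin\Theta(\widetilde\bP_K,\bP_K)\|_2 \;\le\; \frac{\opnorm{\widetilde\bX^\top\widetilde\bX - \bX^\top\bX}}{g/(1+\nu)} \;\le\; \frac{\nu(1+\nu)^{-1}g}{g/(1+\nu)} \;=\; \nu \;<\; 1 .
\]
Here I must use Theorem~\ref{thm:dk} rather than the factor-$2$ Corollary~\ref{cor:dk}, which would lose a factor and is not tight enough to reach $\nu$.

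Finally I would invoke Theorem~\ref{thm:3.1}: its hypotheses $m \ge K$ (implied by the sketch-size condition) and $\rank(\bR^\top\bX) \ge K$ (which holds almost surely for a Gaussian $\bR$ as soon as $\rank(\bX)\ge K$, automatic from $g>0$) are met, so on $\cA_0$ one gets $\cE(\widetilde\bbeta_K) \le \cE(\widehat\bbeta_K) + (2\nu+\nu^2)\ltwonorm{\bX^\top\bbeta^*}^2/n$, which is exactly \eqref{eq:3.9}; since $\PP(\cA_0)\ge 1-\delta$, this holds with probability at least $1-\delta$. I expect the main obstacle to be precisely the Davis--Kahan bookkeeping in the third step: one has to identify the correct pair of $K$-dimensional subspaces (the right singular spaces of $\bX$ and $\widetilde\bX$, which are what enter $\widehat\bbeta_K$ and $\widetilde\bbeta_K$, despite the ``$\bP$'' notation) and, crucially, work with the \emph{post-Weyl} gap $g/(1+\nu)$ rather than $g$ itself --- this is what forces the specific form of $\varepsilon$ and what makes the off-the-shelf factor-$2$ bound insufficient. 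A minor additional check is that the AMM constant $C$ is taken large enough to dominate the additive $\sr(\bX)$ term across the admissible range of $\varepsilon$.
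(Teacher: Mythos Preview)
Your proposal is correct and follows the same three-step skeleton the paper sketches just before the statement (AMM bound $\to$ Davis--Kahan $\to$ Theorem~\ref{thm:3.1}); the paper itself does not write out a proof but defers to \cite{MAv18}, and you have essentially reconstructed that argument. One point worth flagging: the paper's sketch cites Corollary~\ref{cor:dk} for the Davis--Kahan step, whereas you correctly observe that the sharp Theorem~\ref{thm:dk} with the post-Weyl gap $g/(1+\nu)$ is what actually delivers the bound $\nu$ on the nose---the factor-$2$ corollary would only give $2\nu/(1+\nu)$. You also correctly identify that the relevant subspaces are the top-$K$ \emph{right} singular spaces $\bQ_K,\widetilde\bQ_K\subset\R^n$ (both $n$-dimensional, so the canonical angles make sense), despite the paper's ``$\bP$'' notation in Theorem~\ref{thm:3.1}; this is the natural reading since $\bP_K\in\R^{p\times K}$ and $\widetilde\bP_K\in\R^{m\times K}$ live in different ambient spaces.
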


\begin{rem}
	Note that $\varepsilon \le \nu(\sigma_K^2 - \sigma^2_{K + 1}) / \sigma_1^2$, and this bound is tight with a small $\nu$. Some algebra yields that \eqref{eq:3.9} holds when $$m = \Omega\Bigl(\sr(\bX) + \frac{ \sigma_1^2 \log(1 / \delta)}{\nu^2 (\sigma_K^2 - \sigma_{K + 1}^2)^2} \Bigr).$$ One can see that reducing $\nu$ requires a larger sketch size $m$. Besides, a large eigengap of the design matrix $\bX$ helps reduce the required sketch size.
\end{rem}

\subsection{Factor-Adjust Robust Multiple (FARM) tests}
Large-scale multiple testing is a fundamental problem in high-dimensional inference. In genome-wide association studies and many other applications, tens of thousands of hypotheses are tested simultaneously.
Standard approaches such as \cite{BHo95} and \cite{Sto02} can not control well both false and missed discovery rates in the presence of strong correlations among test statistics. Important efforts on dependence adjustment include \cite{efron2007correlation}, \cite{FKC09}, \cite{efron2010correlated}, and \cite{DSt12}.
%To incorporate the correlation effect in the testing procedure, both \cite{FKC09} and \cite{DSt12} assumed that the data are generated from a strict factor model with independent idiosyncratic errors and constructed an estimator for the false discovery proportion (FDP).
\cite{FHG12} and \cite{FHa17} considered FDP estimation under the approximate factor model. \cite{WZH17} studied a more complicated model with both observed variables and latent factors. All these existing papers heavily rely on the joint normality assumption of the data, which is easily violated in real applications. A recent paper \citep{FKS17} developed a factor-adjusted robust procedure that can handle heavy-tailed data while controlling FDP. We are going to introduce this method in this subsection.

Suppose our i.i.d.\ observations $\{ \bx_i \}_{i=1}^n$ satisfy the approximate factor model (\ref{fm}) where $\bmu\in\R^{p}$ is an unknown mean vector. To make the model identifiable, we use the Identifiability Assumption \ref{ass:1}. We are interested in simultaneously testing
\begin{equation*}
H_{0j}:\mu_j=0 \quad \text{versus}\quad H_{1j}:\mu_j\neq 0,\quad \text{for }j\in[p].
\end{equation*}
Let $T_j$ be a generic test statistic for $H_{0j}$. For a pre-specified level $z>0$, we reject $H_{0j}$ whenever $|T_j|\geq z$.
The numbers of total discoveries $R(z)$ and false discoveries $V(z)$ are defined as
$$
R(z)=\#\{ j: |T_j|\geq z \} \quad\text{and}\quad V(z)=\#\{ j: |T_j|\geq z,~\mu_j =0 \}.
$$
Note that $R(z)$ is observable while $V(z)$ needs to be estimated.
Our goal is to control the false discovery proportion $\mathrm{FDP}(z)=V(z)/R(z)$ with the convention $0/0=0$.

Na\"ive tests based on sample averages $\frac{1}{n}\sum_{i=1}^{n}\bx_i$ suffer from size distortion of FDP control due to dependence of common factors in \eqref{fm}.  On the other hand, the factor-adjusted test based on the sample averages of $\bx_i - \bB \bff_i$ ($\bB$ and $\bff_i$ need to be estimated) has two advantages: the noise $\bu_i$ is now weakly dependent so that FDP can be controlled with high accuracy, and the variance of $\bu_i$ is smaller than that of $\bB \bff_i + \bu_i$ in model \eqref{fm}, so that it is more powerful.
This will be convincingly demonstrated in Figure~\ref{Fig_FARM-Test} below. The factor-adjusted robust multiple test (FarmTest) is a robust implementation of the above idea \citep{FKS17}, which replaces the sample mean by its adaptive Huber estimation and extracts latent factors from a robust covariance input.

To begin with, we consider the Huber loss \citep{Hub64} with the robustification parameter $\tau \ge 0$:
\begin{equation*}
\ell_{\tau} (u) = \begin{cases}
u^2/2,&\mbox{ if }|u|\leq \tau\\
\tau |u|-\tau^2/2,&\mbox{ if }|u|>\tau
\end{cases},
\end{equation*}
and use $\hat{\mu}_j = \argmax_{\theta \in \R} \sum_{i=1}^{n} \ell_{\tau} ( x_{ij} - \theta )$ as a robust $M$-estimator of $\mu_j$. \cite{FKS17} suggested choosing $\tau \asymp \sqrt{ n / \log(np) }$ to deal with possible asymmetric distribution and called it adaptive Huber estimator.  They showed, assuming bounded fourth moments only, that
\begin{equation}
\sqrt{n} ( \hat{\mu}_j - \mu_j - \bb_j^{\top} \bar{\bff} ) = \cN(0,\sigma_{u,jj} ) + o_{\P} (1)
\text{ uniformly over }j\in[p],
\label{eqn-asymp-normality}
\end{equation}
where $\bar{\bff} = \frac{1}{n} \sum_{i=1}^{n} \bff_i$, and $\sigma_{u,jj}$ is the $(j,j)$th entry of $\bSigma_u$ as is defined in (\ref{cov.fm}). Assuming for now that $\{ \bb_j \}_{j=1}^p$, $\bar{\bff}$ and $\{ \sigma_{u,jj} \}_{j=1}^p$ are all observable, then the factor-adjusted test statistic $T_j=\sqrt{ n / \sigma_{u,jj} } ( \hat{\mu}_j - \bb_j^{\top} \bar{\bff} )$ is asymptotically $\cN(0,1)$. The law of large numbers implies that $V(z)$ should be close to $2p_0 \Phi(-z)$ for $z\geq 0$, where $\Phi ( \cdot )$ is the cumulative distribution function of $\cN(0,1)$, and $p_0=\# \{ j:\mu_j=0 \}$ is the number of true nulls. Hence
%$\mathrm{FDP} = V(z)/R(z) \approx 2p_0 \Phi(-z) / R(z)$.
%In the high dimensional and sparse regime, we have $p_0=p-o(p)$ and thus $\mathrm{FDP}^{\mathrm{A}} (z) = 2p \Phi(-z) / R(z)$ is a slightly conservative surrogate. If the proportion $\pi_0 = p_0/p$ is bounded away from 1, then we use $\mathrm{FDP}^{\mathrm{A}} (z) = 2p \hat{\pi}_0 \Phi(-z) / R(z)$ to estimate the FDP, where $\hat{\pi}_0$ is a estimate of $\pi_0$, see \cite{Sto02} and follow-up works.
%For simplicity, below we will just use $\mathrm{FDP}^{\mathrm{A}} (z) = 2p \Phi(-z) / R(z)$.
$$
\mathrm{FDP}(z) = \frac{V(z)}{R(z)} \approx \frac{2p_0 \Phi(-z)}{R(z)} \le \frac{2p \Phi(-z)}{R(z)} =: \mathrm{FDP}^{\mathrm{A}} (z).
$$
Note that in the high-dimensional and sparse regime, we have $p_0=p-o(p)$ and thus $\mathrm{FDP}^{\mathrm{A}} (z)$ is only a slightly conservative surrogate. However, we can also estimate the proportion $\pi_0 = p_0/p$ and use less conservative estimate $\mathrm{FDP}^{\mathrm{A}} (z) = 2p \hat{\pi}_0 \Phi(-z) / R(z)$ instead, where $\hat{\pi}_0$ is an estimate of $\pi_0$ whose idea is depicted in Figure~\ref{Fig4}; see \cite{Sto02}. Finally, we define the critical value $z_{\alpha} = \inf\{z\geq 0: \mathrm{FDP}^{\mathrm{A}} (z) \leq \alpha \}$ and reject $H_{0j}$ whenever $|T_j|\geq z_{\alpha}$.

\begin{figure}[h]
	\centering
	\includegraphics[clip, trim=0 0 0cm 0, width=.75\textwidth]{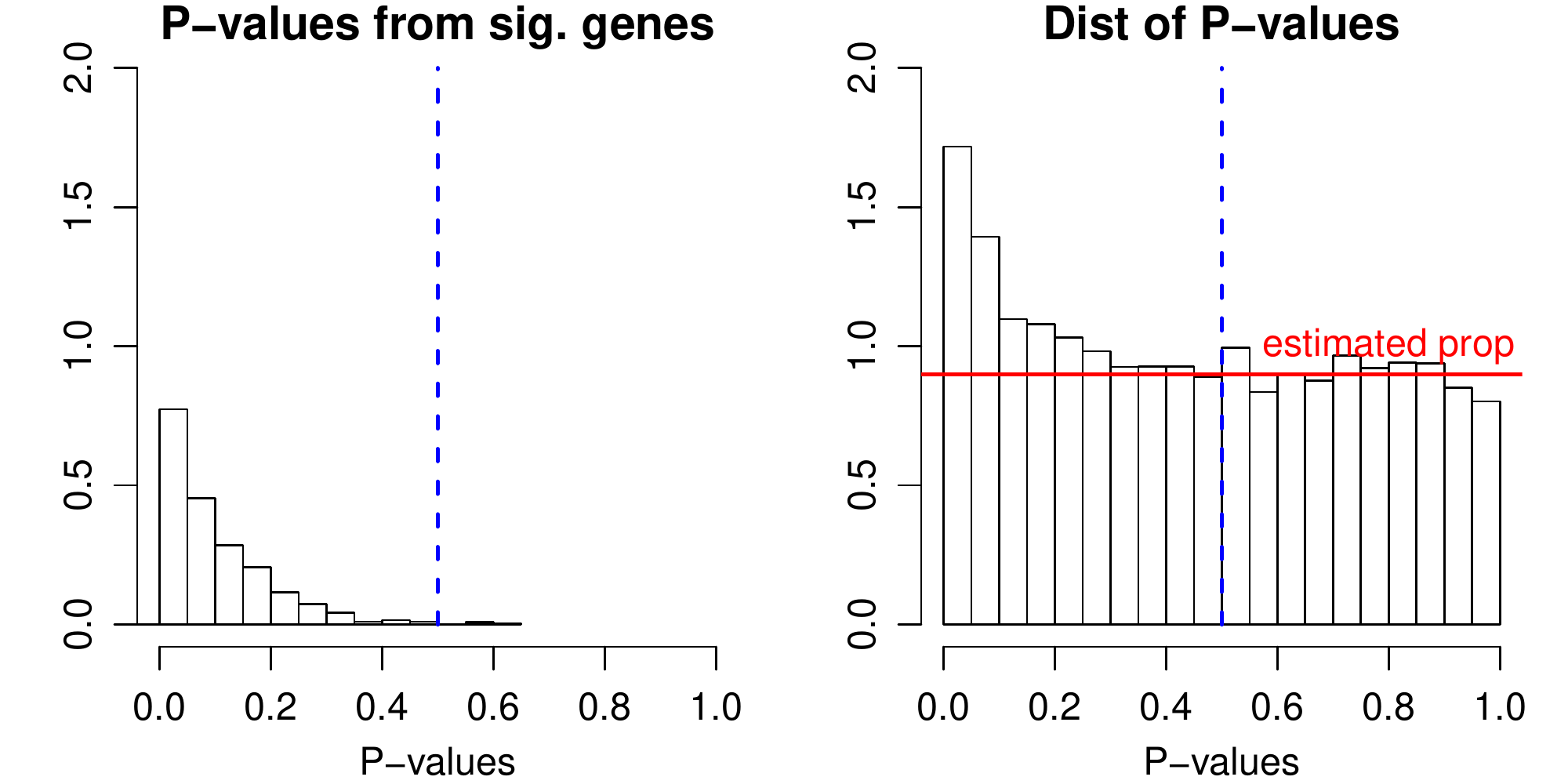}
	\caption{Estimation of proportion of true nulls.  The observed P-values (right panel) consist of those from significant variables (genes), which are usually small, and those from insignificant variables, which are uniformly distributed. Assuming the P-values for significant variables are mostly less than $\lambda$ (taken to be 0.5 in this illustration, left panel), the contributions of observed P-values $> \lambda$ are mostly from true nulls and this yields a natural estimator $\hat{\pi}_0(\lambda) = \frac{1}{(1-\lambda) p } \sum_{j=1}^p 1( \hat{P}_j > \lambda )$, which is the average height of the histogram with P-values $> \lambda$ (red line).  Note that the histograms above the red line estimates the distributions of P-values from the significant variavles (genes) in the left panel.\label{Fig4}}
\end{figure}

In practice, we have no access to $\{ \bb_j \}_{j=1}^p$, $\bar{\bff}$ or $\{ \sigma_{u,jj} \}_{j=1}^p$ in (\ref{eqn-asymp-normality}) and need to use their estimates. This results in the Factor-Adjusted Robust Multiple test (FarmTest) in \cite{FKS17}. The inputs include $\{ \bx_i \}_{i=1}^n$, a generic robust covariance matrix estimator $\hat{\bSigma} \in \R^{p\times p}$ from the data, a pre-specified level $\alpha\in(0,1)$ for FDP control , the number of factors $K$, and the robustification parameters $\gamma$ and $\{ \tau_j \}_{j=1}^p$. Note that $K$ can be estimated by the methods in Section \ref{sec-est-K}, and overestimating $K$ has little impact on final outputs.

\vspace{0.1in}
{\it Step 1}. Denote by $\hat{\bSigma} \in \R^{p\times p}$ a generic robust covariance matrix estimator. Compute the eigen-decomposition of $\hat \bSigma$, set $\{ \hat{\lambda}_j \}_{j=1}^K$ to be its top $K$ eigenvalues in descending order, and $\{ \hat{\bv}_j \}_{j=1}^K$  to be their corresponding eigenvectors. Let $\hat{\bB} = ( \tilde{\lambda}_1^{1/2} \hat{\bv}_1,\ldots,\tilde{\lambda}_K^{1/2} \hat{\bv}_K ) \in \R^{p\times K}$ where $\tilde{\lambda}_j = \max\{\hat{\lambda}_j,0\}$, and denote its rows by $\{ \hat{\bb}_j \}_{j=1}^p$.

\vspace{0.1in}
{\it Step 2}.
Let $\bar{x}_j=\frac{1}{n} \sum_{i=1}^{n} x_{ij}$ for $j\in[p]$ and $\hat{\bff} = \argmax_{\bff \in \R^K} \sum_{j=1}^{p} \ell_{\gamma} ( \bar{x}_j - \hat{\bb}_j^{\top}  \bff )$.
Construct factor-adjusted test statistics
\begin{equation}
T_j = \sqrt{ n / \hat{ \sigma } _{u,jj} } ( \hat{\mu}_j - \hat{\bb}_j^{\top} \hat{\bff} ) \quad\text{for } j\in[p],
\label{eqn-farm-test}
\end{equation}
where $\hat{\sigma}_{u,jj} = \hat{\theta}_j - \hat{\mu}_j^2 - \| \hat{\bb}_j \|_2^2$, $\hat{\theta}_j = \argmin_{\theta \geq \hat{\mu}_j^2 +  \| \hat{\bb}_j \|_2^2 } \ell_{\tau_j} ( x_{ij}^2 - \theta )$.

\vspace{0.1in}
{\it Step 3}. Calculate the critical value $z_{\alpha} = \inf\{z\geq 0: \mathrm{FDP}^{\mathrm{A}} (z) \leq \alpha \}$, where $\mathrm{FDP}^{\mathrm{A}} (z) = 2 \hat{\pi}_0 p \Phi(-z) / R(z)$, and reject $H_{0j}$ whenever $|T_j|\geq z_{\alpha}$.

\vspace{0.1in}
%The $\hat{\bff}^*$ in Step 3 is a least squares estimate from with $\bmu^*$ ignored.

In Step 2, we estimate $\bar{\bff}$ based on $\bar{x}_j = \mu_j + \hat{\bb}_j^{\top}  \bar{\bff} + \bar{u}_j$, which is implied by the factor model \eqref{fm}, and regard non-vanishing $\mu_j$ as an outlier.  In the estimation of  $\sigma_{u, jj}$, we used the identity $\theta_j := \E x_{ij}^2 = \mu_j^2 + \|\bb_j\|^2 + \sigma_{u, jj}$ and robustly estimated the second moment $\theta_j$.

Figure \ref{Fig_FARM-Test} is borrowed from Figure~1 in \cite{FKS17} that illustrates the effectiveness of this procedure. Here $n=100$, $p=500$, $K=3$, $\bff_i \sim \cN(\mathbf{0},\mathbf{I}_3)$ and the entries of $\bu_i$ are generated independently from the $t$-distribution with 3 degrees of freedom. It is known that $t-$distributions are not sub-Guassian variables and are often used to model heavy-tailed data. The unknown means $\mu \in \R^p$ are fixed as $\mu_j=0.6$ for $j \leq 125$ and $\mu_j=0$ otherwise. We plot the histograms of sample means, robust mean estimators, and their counterparts with factor-adjustment. The latent factors and heavy-tailed errors make it difficult to distinguish $\mu_j=0.6$ from $\mu_j=0$, and that explains why the sample means behave poorly. As is shown in Figure \ref{Fig_FARM-Test}, better separation can be obtained by  factor adjustment and robustification.
\begin{figure}[t]
	\centering
	\includegraphics[width=4.5 in]{./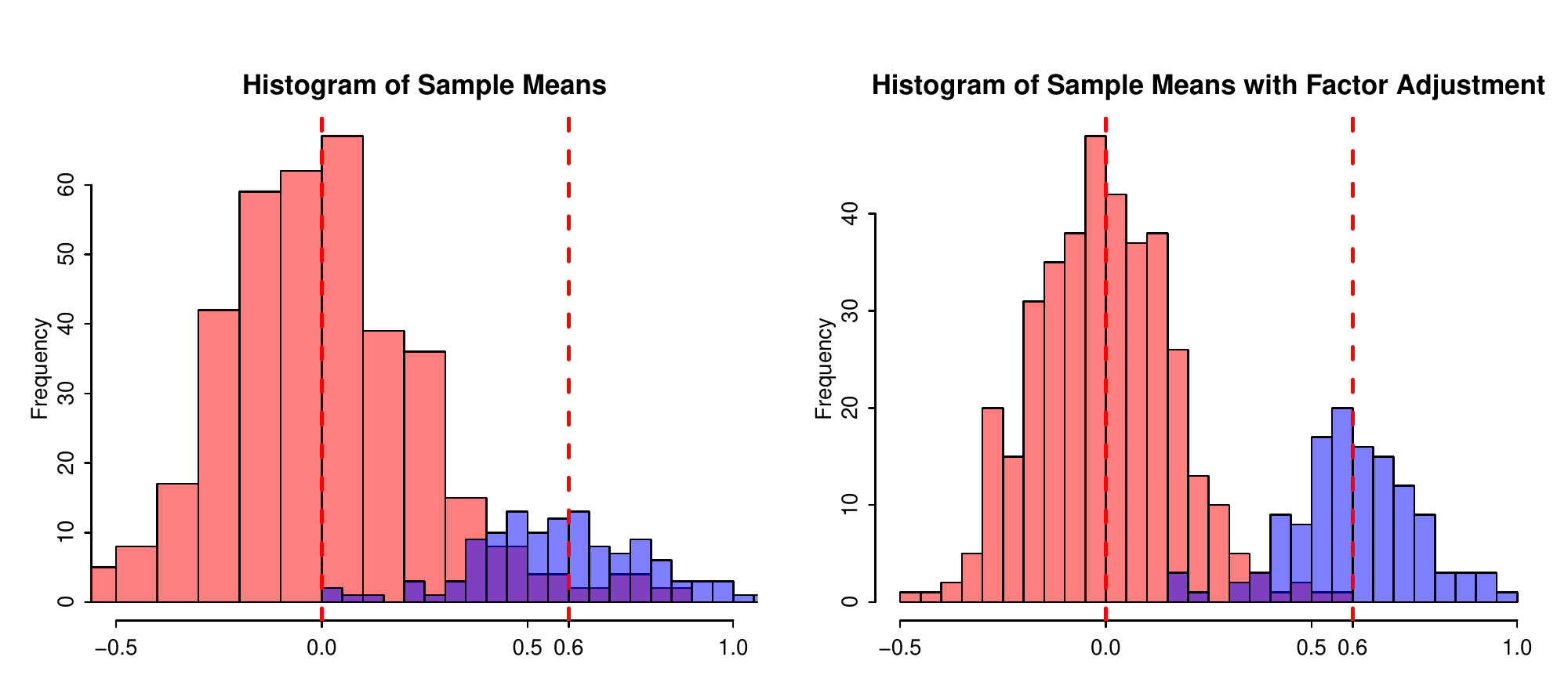}
	\includegraphics[width=4.5 in]{./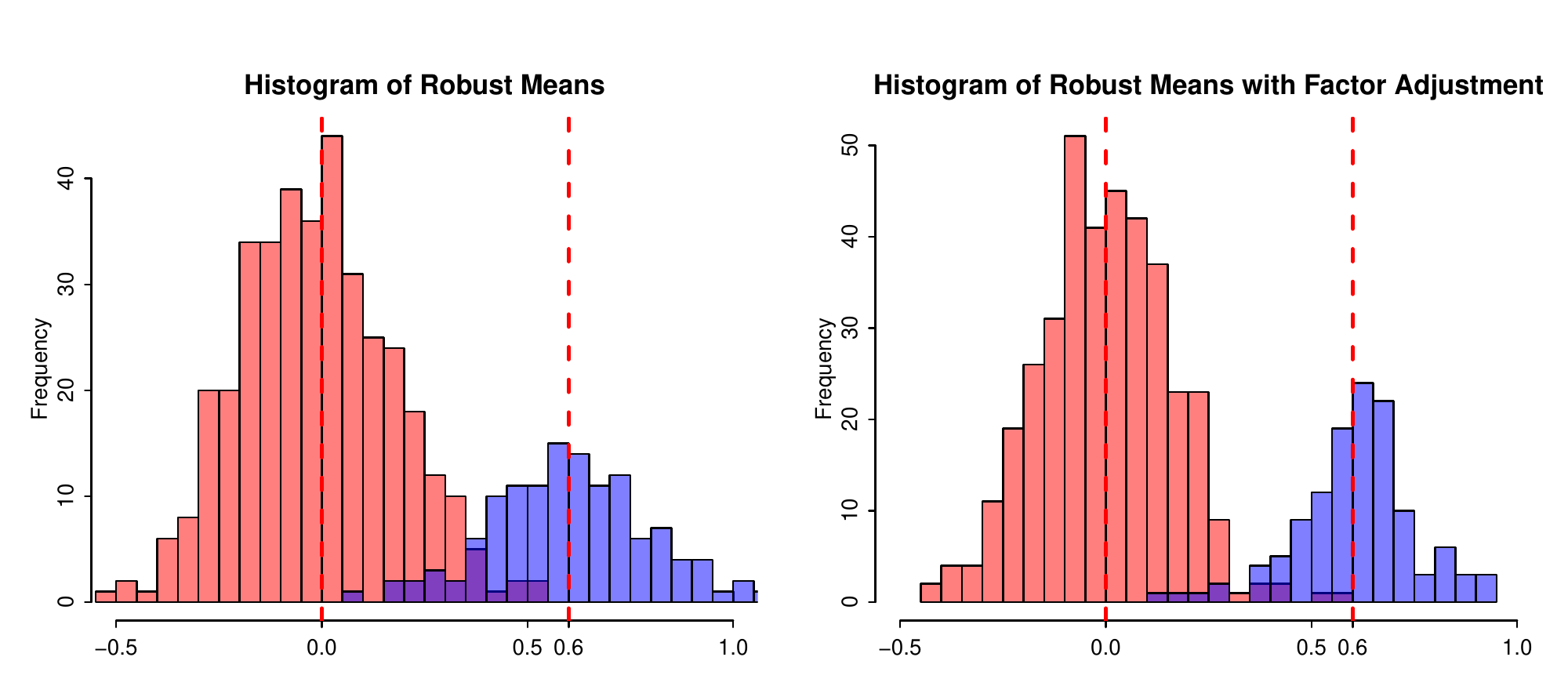}
	\caption{Histograms of four different mean estimators for simultaneous inference. Fix $n=100$, $p=500$ and $K=3$, and data are generated i.i.d.\ from $t_3$, which is heavy-tailed. Dashes lines correspond to $\mu_j=0$ and $\mu_j=0.6$, which is unknown. Robustification and factor adjustment help distinguish nulls and alternatives.}
	\label{Fig_FARM-Test}
\end{figure}

%In the above procedure, the estimation of factor model and construction of test statistics are based on the same set of samples. To avoid convoluted theoretical analysis and focus on the key idea, \cite{FKS17} proposed an alternative approach based on sample splitting. The samples $\{ \bx_i \}_{i=1}^n$ are first split into two equally-sized subsets. The Step 1 above is applied to the first half with the adaptive Huber method (see the Section 3.2.2 therein) and yields $\{ \hat{\bb}_j \}_{j=1}^{p}$. Then we proceed with the remaining steps using the second half of samples.

While existing literature usually imposes the joint normal assumption on $\{ \bff_i,\bu_i \}_{i=1}^n$, the FarmTest only requires the coordinates of $\{ \bu_{i} \}_{i=1}^n$ to have bounded fourth-order moments, and $\{ \bff_{i} \}_{i=1}^n$ to be sub-Gaussian.
Under standard regularity conditions for the approximate factor model, it is proved by \cite{FKS17} that
\begin{equation*}
\mathrm{FDP}^{\mathrm{A}}(z) - \mathrm{FDP}(z)=o_{\P}(1).
\end{equation*}
We see that $\mathrm{FDP}^{\mathrm{A}}$ is a valid approximation of FDP, which is therefore faithfully controlled by the FARM-Test.

\subsection{Factor-Adjusted Robust Model (FARM) selection}

Model selection is one of the central tasks in high dimensional data analysis.
%, and it is natural to asmany complex datasets have underlying models with sparse representations.
Parsimonious models enjoy interpretability, stability and oftentimes, better prediction accuracy. Numerous methods for model selection have been proposed in the past two decades, including, Lasso \citep{Tib96}, SCAD \citep{FLi01}, the elastic net \cite{ZHa05}, the Dantzig selector \citep{CTa07}, among others.
%since they are computationally attractive, and theoretical guarantees for their success have been established.
%, see \citep{MBu06,ZYu06,BRT09} and the references therein.
However, these methods work only when the covariates are weakly dependent or statisfy certain regularity conditions \citep{ZYu06, BRT09}.
When covariates are strongly correlated, \cite{PBH08,KSa11,Wan12,FKW16} used factor model to eliminate the dependencies caused by pervasive factors, and to conduct model selection using the resulting weakly correlated variables.

Assume that $\{ \bx_i \}_{i=1}^n$ follow the approximate factor model (\ref{fm}). %, with $\bmu=\mathbf{0}$ for simplicity.
As a standard assumption, the coordinates of $\bw_i=(\bff_i^{\top},\bu_i^{\top})^{\top} \in \R^{K + p}$ are weakly dependent. Thanks to this condition and the decomposition
\begin{align}
\bx_i^{\top} \bbeta= (\bmu+ \bB \bff_i + \bu_i )^{\top} \bbeta
= \alpha + \bu_i^{\top} \bbeta + \bff_i^\top \bgamma
%=
%\begin{pmatrix}
%\bff_i^{\top} & \bu_i^{\top}
%\end{pmatrix}
%\begin{pmatrix}
%\bB^{\top} \bbeta \\ \bbeta
%\end{pmatrix}
%=
%\bw_i^{\top}
%\begin{pmatrix}
%\bgamma \\ \bbeta
%\end{pmatrix},
    \label{eq3.12}
\end{align}
where $\alpha = \bmu^T \bbeta$ and $\bgamma = \bB^\top \bbeta$.
we may treat $\bw_i$ as the new predictors.  In other words, by lifting the number of variables from $p$ to $p+K$, the covariates of $\bw_i$ are now weakly dependent. The usual regularized estimation can now be applied to this new set of variables.  Note that we regard the coefficients $\bB^\top \bbeta$ as free parameters to facilitate the implementation (ignoring the relation $\gamma = \bB^\top \beta$) and this requires an additional assumption to make this valid \citep{FKW16}.

Suppose we wish to fit a  model $y_i = g(\bx_i^\top \bbeta, \varepsilon_i)$ via a loss function $L(y_i, \bx_i^\top \bbeta)$.
The above idea suggests the following two-step approach, which is called Factor-Adjusted Regularized (or Robust when so implemented) Model selection (FarmSelect) \citep{FKW16}.

\vspace{0.1in}
%\quad
{\it Step 1: Factor estimation}. Fit the approximate factor model (\ref{fm}) to get $\hat{\bB}$, $\hat{\bff}_i$ and $\hat{\bu}_i=\bx_i-\hat{\bB} \hat \bff_i $.

\vspace{0.1in}
%\quad
{\it Step 2: Augmented regularization}. Find $\alpha$, $\bbeta$ and $\bgamma$ to minimize
$$
    \sum_{i=1}^n L(y_i, \alpha + \hat{\bu}_i^{\top}\bbeta + \hat{\bff}_i^\top \bgamma) + \sum_{j=1}^p p_\lambda(|\beta_j|),
$$
where $p_\lambda(\cdot)$ is a folded concave penalty \citep{FLi01} with parameter $\lambda$.

%Set $\hat{\bW}=(\hat{\bF},\hat{\bU}^\top)^\top \in \R^{(K+p) \times n}$ and find
%\begin{align*}%\label{L1-ppmle}
%\hat{\btheta} \in \argmin_{\btheta\in \R^{K+p}}
%\left\{ L (\by, \hat{\bW}^\top \btheta)+\lambda \| \btheta_{-(1:K)} \|_1 \right\},
%\end{align*}
%where $\btheta_{-(1:K)} \in \R^p$ refers to the last $p$ entries of $\btheta \in \R^{K+p}$. Let $\hat{\bbeta} = \hat{\btheta}_{-(1:K)}$ as our estimator for $\bbeta^*$.

\vspace{0.1in}

In Step 1, standard estimation procedures such as POET \citep{FLM13} and S-POET \citep{WFa17} can be applied, as long as they produce consistent estimators of $\bB$, $\{ \bff_i \}_{i=1}^n$ and $\{ \bu_i \}_{i=1}^n$. Step 2 is carried out using usual regularization methods with new covariates.

Figure \ref{Fig_FARM-Select}, borrowed from Figure~3~(a) in \cite{FKW16}, shows that the proposed method outperforms other popular ones for model selection including Lasso \citep{Tib96}, SCAD \citep{FLi01} and elastic net \citep{ZHa05}, in the presence of correlated covariates. The basic setting is sparse linear regression $y = \bx^{\top} \bbeta^*+\varepsilon$ with $p=500$ and $n$ growing from $50$ to $160$. The true coefficients are $\bbeta^*=(\beta_1,\cdots,\beta_{10},\mathbf{0}_{p-10})^{\top}$, where $\{ \bbeta_j \}_{j=1}^{10}$ are drawn uniformly at random from $[2,5]$, and $\varepsilon\sim  \mathcal{N}(0,0.3)$. The correlation structure of covariates $\bx$ is calibrated from S\&P 500 monthly excess returns between 1980 and 2012.
\begin{figure}[htbp]
	\centering
	\includegraphics[width=5 in]{./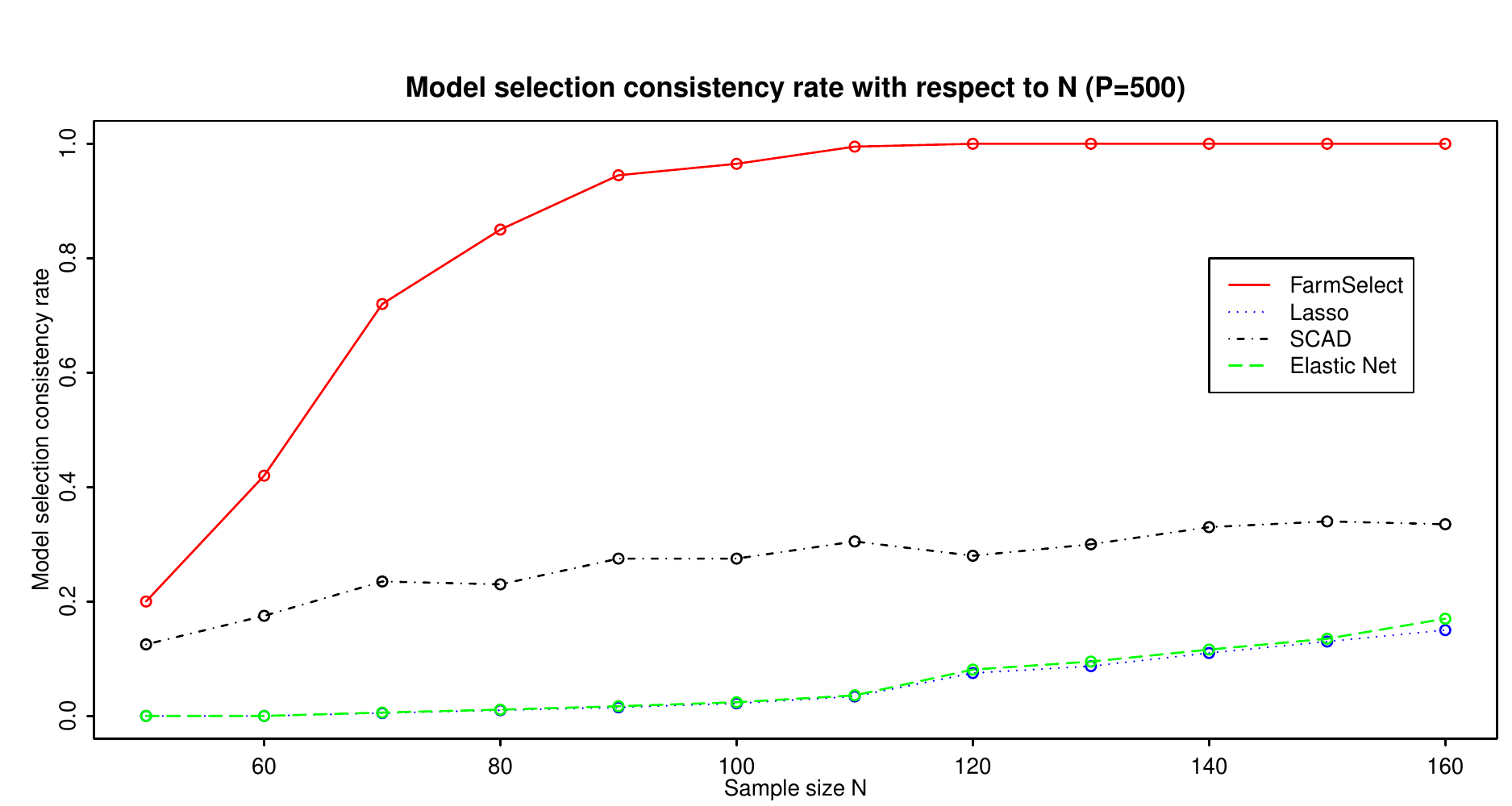}
	\caption{Model selection consistency rate, i.e.,\ the proportion of simulations that the selected model is identical to the true one, with $p = 500$ and $n$ varying from $50$ to $160$. With moderate sample size, the proposed method faithfully identifies the correct model while other methods cannot.
		}
	\label{Fig_FARM-Select}
\end{figure}

Under the generalized linear model, $L(y,z)= -yz + b( z ) $ and $b(\cdot)$ is a convex function. \cite{FKW16} analyzed theoretical properties of the above procedure. As long as the coordinates of $\bw_i$ (rather than $\bx_i$) are not too strongly dependent and the factor model is estimated to enough precision, $\hat{\bbeta}$ enjoys optimal rates of convergence $\|\hat{\bbeta} - \bbeta^* \|_q =O_{\P}( |S|^{1/q} \sqrt{\log p / n} )$, where $q=1$, $2$ or $\infty$. When the minimum entry of $|\bbeta^*|$ is at least $\Omega(\sqrt{\log p / n})$, the model selection consistency is achieved.

When we use the square loss, this method reduces to the one in \cite{KSa11}. By using the square loss and replacing the penalized multiple regression in Step 2 with marginal regression, we recover the factor-profiled variable screening method in \cite{Wan12}. While these papers aim at modeling and then eliminating the dependencies in $\bx_i$ via (\ref{fm}), \cite{PBH08} used a factor model to characterize the joint distribution of $(y_i,\bx_i^{\top})^{\top}$ and develops a related but different approach.

\section{Related learning problems}
\label{sec:4}

\subsection{Gaussian mixture model}

PCA, or more generally, spectral decomposition, can be also applied to learn mixture models for heterogeneous data. A thread of recent papers \citep{HKa13,AGH14, YiCarSan16, SedJanAna16} apply spectral decomposition to lower-order moments of the data to recover the parameters of interest in a wide class of latent variable models. Here we use the Gaussian mixture model to illustrate their idea. Consider a mixture of $K$ Gaussian distributions with spherical covariances. Let $w_k\in (0,1)$ be the probability of choosing component $k\in \{1,\ldots, K\}$, and $\{\bmu_1, \ldots, \bmu_k\}\subseteq \RR^p$ be the component mean vectors, and $\{\sigma_k^2\bI_p\}_{k=1}^K$ be the component covariance matrices, which is required by \cite{HKa13} and \cite{AGH14}. Each data vector
%$\bx$ is generated through
%\beq\label{eq:mix.mean}
%\bx = \bmu_h + \bz_h,
%\eeq
%where $h$ is a discrete random variable such that $\P(h=j)=w_j$ for $j=1,\ldots, K$, and $\bz_h \sim \cN(\mathbf{0}, \sigma_h^2 \bI_p)$.  In other words,
$\bx \sim w_1 \cN(\bmu_1, \sigma_1^2 \bI_p) + \cdots + w_K \cN(\bmu_K, \sigma_K^2 \bI_p)$ follows the mixture of the Gaussian distribution. The parameters of interest are $\{w_k, \bmu_k, \sigma^2_k\}_{k = 1} ^K$.

\cite{HKa13} and \cite{AGH14} shed lights on the close connection between the lower-order moments of the data and the parameters of interest, which motivates the use of Method of Moments (MoM). Denote the population covariance $\E[(\bx - \E \bx)(\bx - \E \bx)^\top]$ by $\bSigma$. Below we present Theorem 1 in \cite{HKa13} to elucidate the moment structure of the problem.

\begin{thm}
	\label{thm:4.1}
	Suppose that $\{\bmu_k\}_{k=1}^K$ are linearly independent. Then the average variance $\sigma_{\textnormal{ave}} ^ 2 := K ^ {-1} \sum\limits_{k = 1}^K \sigma_k^2$ is the smallest eigenvalue of $\bSigma$. Let $\bv$ be any eigenvector of $\bSigma$ that is associated with the eigenvalue $\sigma_{\textnormal{ave}} ^ 2$. Define the following quantities:
    	\begin{align*}
        	\bM_1 & = \E[(\bv^\top (\bx - \E \bx)) ^ 2 \bx] \in \RR ^ p, \\
            \bM_2 & = \E [\bx \otimes \bx] - \sigma_{\textnormal{ave}}^2 \cdot \bI_p \in \RR ^ {p \times p}, \\
            \bM_3 & = \E [\bx \otimes \bx \otimes \bx] - \sum\limits_{j = 1}^p (\bM_1 \otimes \be_j \otimes \be_j + \be_j \otimes \bM_1 \otimes \be_j + \be_j \otimes \be_j \otimes \bM_1) \in \RR ^ {p \times p \times p}.
        \end{align*}
    Then we have
    \beq
    	\label{eq:moment}
    	\bM_1 = \sum\limits_{k = 1}^K w_k \sigma_k^2 \bmu_k,\quad \bM_2 = \sum\limits_{k = 1}^K w_k \bmu_k \otimes \bmu_k,\quad \bM_3 = \sum\limits_{k = 1} ^ K w_k \bmu_k \otimes \bmu_k \otimes \bmu_k,
    \eeq
    where the notation $\otimes$ represents the tensor product.
\end{thm}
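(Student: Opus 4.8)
\noindent The plan is to realize the mixture through a latent label: let $h\in[K]$ be random with $\PP(h=k)=w_k$ and write $\bx=\bmu_h+\sigma_h\bz$, where $\bz\sim\cN(\bzero,\bI_p)$ is independent of $h$. Every quantity in the statement can then be evaluated by first conditioning on $h=k$ --- in which case $\bx$ is simply $\cN(\bmu_k,\sigma_k^2\bI_p)$ --- and then averaging over $h$. The only Gaussian facts needed are that odd-order moments of $\bz$ vanish and $\E[\bz\otimes\bz]=\bI_p=\sum_{j=1}^p\be_j\otimes\be_j$. Expanding $(\bmu_k+\sigma_k\bz)^{\otimes 2}$ and $(\bmu_k+\sigma_k\bz)^{\otimes 3}$ and discarding the terms odd in $\bz$ gives
\begin{equation*}
\E[\bx\otimes\bx\mid h=k]=\bmu_k\otimes\bmu_k+\sigma_k^2\bI_p,
\end{equation*}
\begin{equation*}
\E[\bx\otimes\bx\otimes\bx\mid h=k]=\bmu_k^{\otimes 3}+\sigma_k^2\sum_{j=1}^p\bigl(\be_j\otimes\be_j\otimes\bmu_k+\be_j\otimes\bmu_k\otimes\be_j+\bmu_k\otimes\be_j\otimes\be_j\bigr).
\end{equation*}
Averaging the first identity over $h$ yields $\E[\bx\otimes\bx]=\sum_k w_k\bmu_k\otimes\bmu_k+\sigma_{\mathrm{ave}}^2\bI_p$, which already gives $\bM_2=\sum_k w_k\bmu_k\otimes\bmu_k$.

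\noindent Next I would settle the spectral claim. Conditioning as above, $\bSigma=\bSigma_\mu+\sigma_{\mathrm{ave}}^2\bI_p$ with $\bSigma_\mu:=\sum_{k=1}^K w_k(\bmu_k-\bar\bmu)(\bmu_k-\bar\bmu)^\top\succeq 0$, where $\bar\bmu=\E\bx=\sum_k w_k\bmu_k$. The vectors $\{\bmu_k-\bar\bmu\}_{k=1}^K$ satisfy $\sum_k w_k(\bmu_k-\bar\bmu)=\bzero$, and linear independence of $\{\bmu_k\}$ forces this to be the only linear relation among them; hence they span a subspace of dimension exactly $K-1$, so $\rank(\bSigma_\mu)=K-1<p$. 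Therefore $\sigma_{\mathrm{ave}}^2$ is the smallest eigenvalue of $\bSigma$, and a unit vector $\bv$ lies in the associated eigenspace if and only if $\bSigma_\mu\bv=\bzero$, i.e.\ $\bv^\top(\bmu_k-\bar\bmu)=0$ for every $k$, equivalently $\bv^\top\bmu_k$ is constant in $k$.

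\noindent It remains to compute $\bM_1$ and $\bM_3$. For $\bM_1$, fix such an eigenvector $\bv$ of unit norm. Conditional on $h=k$ we then have $\bv^\top(\bx-\bar\bmu)=\sigma_k\,\bv^\top\bz$ with $\bv^\top\bz\sim\cN(0,1)$, so
\begin{equation*}
\E\bigl[(\bv^\top(\bx-\bar\bmu))^2\,\bx\,\big|\,h=k\bigr]=\sigma_k^2\,\E\bigl[(\bv^\top\bz)^2(\bmu_k+\sigma_k\bz)\bigr]=\sigma_k^2\bmu_k,
\end{equation*}
using $\E[(\bv^\top\bz)^2]=1$ and that $\E[(\bv^\top\bz)^2\bz]=\bzero$ is an odd moment of a centered Gaussian; averaging over $h$ gives $\bM_1=\sum_k w_k\sigma_k^2\bmu_k$. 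For $\bM_3$, averaging the conditional third-moment identity over $h$ turns each correction term $\sum_j\be_j\otimes\be_j\otimes\bigl(\sum_k w_k\sigma_k^2\bmu_k\bigr)$ and its two cyclic variants into $\sum_j\be_j\otimes\be_j\otimes\bM_1$, etc., by the formula just derived; hence subtracting exactly the prescribed sum $\sum_j(\bM_1\otimes\be_j\otimes\be_j+\be_j\otimes\bM_1\otimes\be_j+\be_j\otimes\be_j\otimes\bM_1)$ leaves $\bM_3=\sum_k w_k\bmu_k^{\otimes 3}$.

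\noindent The step I expect to require the most care is the spectral one: proving that $\sigma_{\mathrm{ave}}^2$ really is the \emph{smallest} eigenvalue of $\bSigma$ --- this is where linear independence enters, through the $\rank(\bSigma_\mu)=K-1$ count --- and extracting the characterization $\bv\perp\{\bmu_k-\bar\bmu\}$ that makes the definition of $\bM_1$ collapse so cleanly. The rest is elementary Gaussian moment algebra, though the bookkeeping of the expansion terms for $\bM_3$ should be carried out carefully. One should also check that the degenerate cases $K=1$ (trivial) and $K=p$ (smallest eigenvalue of multiplicity one) are handled by the same argument.
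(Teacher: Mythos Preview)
The paper does not supply its own proof of this statement: it is quoted as Theorem~1 of \cite{HKa13} and left unproved in both the main text and the appendix. Your argument is correct and is essentially the standard one --- represent the mixture through a latent label, condition, and use only parity of Gaussian moments together with $\E[\bz\otimes\bz]=\sum_j\be_j\otimes\be_j$. The rank count $\rank(\bSigma_\mu)=K-1$ and the resulting characterization $\bv\perp\{\bmu_k-\bar\bmu\}$ are exactly what is needed to make the $\bM_1$ computation collapse, and your handling of the $\bM_3$ bookkeeping is right.

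One point worth flagging: your computation correctly identifies the smallest eigenvalue of $\bSigma$ as the \emph{weighted} average $\sum_k w_k\sigma_k^2$, whereas the statement as printed in the paper defines $\sigma_{\mathrm{ave}}^2:=K^{-1}\sum_k\sigma_k^2$. This is a slip in the paper --- in \cite{HKa13} the quantity is indeed the $w_k$-weighted average --- and your derivation is the correct one; you should make that discrepancy explicit when writing the proof up. A second minor point: the formula for $\bM_1$ requires $\|\bv\|_2=1$ (otherwise one picks up a factor of $\|\bv\|_2^2$), which you do assume but the statement leaves implicit.
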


%, which can be understood as a way to generalize matrices to multi-dimensional arrays called tensors. Recall that the general routine of the MoM goes as follows.
%\begin{enumerate}
%	\item Construct moment equations to relate parameters of interest and data moments. \\
%	\item Use empirical moments to replace population moments. \\
%	\item Estimate unknown parameters by solving moment equations (ideally with fast computation).
%\end{enumerate}
Theorem \ref{thm:4.1} gives the relationship between the moments of the first three orders of $\bx$ and the parameters of interest. With  $\{\bM_i\}_{i=1}^3$ replaced by their empirical versions, the remaining task is to solve for all the parameters of interest via \eqref{eq:moment}. \cite{HKa13} and \cite{AGH14} proposed a fast method called \textit{robust tensor power method} to compute the estimators. The crux therein is to construct an estimable third-order tensor $\tilde \bM_3$ that can be decomposed as the sum of orthogonal tensors based on $\bmu_k$. This orthogonal tensor decomposition can be regarded as an extension of spectral decomposition to third-order tensors (simply speaking, three-dimensional arrays). Then the power iteration method is applied to the estimate of $\tilde \bM_3$ to recover each $\bmu_k$, as well as other parameters.

Specifically, consider first the following linear transformation of $\bmu_k$:
\beq
	\label{eq:4.3}
	\tilde \bmu_k:= \sqrt{\omega_k}\, \bW^\top\bmu_k
\eeq
for $k \in [K]$, where $\bW \in \RR^{p \times K}$. The key is to use the \textit{whitening} transformation by setting $\bW$ to be a square root of $\bfm M_2$.
This ensures that $\{\tilde \bmu_k\}_{k=1}^K$ are orthogonal to each other. Denoting $\ba^{\otimes 3} = \ba \otimes \ba \otimes \ba$,
\beq
	\label{eq:4.4}
	\tilde \bM_3 := \sum\limits_{k = 1}^K \omega_k(\bW^\top \bmu_k)^{\otimes 3} = \sum\limits_{k=1}^K \frac{1}{\sqrt{\omega_k}} \tilde \bmu_k^{\otimes 3} \in \R^{K \times K \times K}
\eeq
is an orthogonal tensor decomposition; that is, it satisfies orthogonality of $\{\tilde \bmu_k\}_{k=1}^K$. The following theorem from \cite{AGH14} summarizes the above argument, and more importantly, it shows how to obtain $\bmu_k$ back from $\tilde \bmu_k$.
\begin{thm}
	\label{thm:4.2}
	Suppose the vectors $\{\bmu_k\}_{k=1}^K$ are linearly independent, and the scalars $\{\omega_k\}_{k=1}^K$ are strictly positive. Let $\bM_2 = \bU\bD\bU^\top$ be the spectral decomposition of $\bM_2$ and let $\bW = \bU\bD^{-1/2}$. Then $\{\tilde \bmu_k\}_{k=1}^K$ in \eqref{eq:4.3} are orthogonal to each other. Furthermore, the Moore-Penrose pseudo-inverse of $\bW$ is $\bW^{\dag} := \bD^{1/2} \bU^\top \in \R^{K \times p}$, and we have $\bmu_k = (\bW^{\dag})^\top \tilde \bmu_k / \sqrt{\omega_k}$ for $k \in [K]$.
\end{thm}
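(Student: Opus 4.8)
The plan is to reduce everything to direct manipulation of the (reduced) spectral decomposition $\bM_2 = \bU\bD\bU^\top$, where we take $\bU \in \R^{p\times K}$ with $\bU^\top\bU = \bI_K$ and $\bD \in \R^{K\times K}$ diagonal with strictly positive diagonal entries. The first thing to pin down is that such a decomposition with exactly $K$ positive eigenvalues exists, i.e.\ that $\rank(\bM_2) = K$ and that the column space of $\bM_2$ is $\mathrm{span}\{\bmu_1,\ldots,\bmu_K\}$. Since $\bM_2 = \sum_{k=1}^K \omega_k \bmu_k\bmu_k^\top$ with each $\omega_k > 0$, for any $\bx$ we have $\bx^\top\bM_2\bx = \sum_{k=1}^K \omega_k(\bmu_k^\top\bx)^2$, so $\bM_2\bx = \bzero$ iff $\bmu_k^\top\bx = 0$ for all $k$; hence $\ker\bM_2 = \mathrm{span}\{\bmu_1,\ldots,\bmu_K\}^\perp$, and by symmetry of $\bM_2$ its column space is $\mathrm{span}\{\bmu_1,\ldots,\bmu_K\}$, which has dimension $K$ by linear independence. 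In particular $\bU\bU^\top$ is the orthogonal projector onto $\mathrm{span}\{\bmu_1,\ldots,\bmu_K\}$, so $\bU\bU^\top\bmu_k = \bmu_k$ for each $k$; this identity is the workhorse of the last step.

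For orthogonality of $\{\tilde\bmu_k\}_{k=1}^K$, I would stack them into $\tilde\bM := (\tilde\bmu_1,\ldots,\tilde\bmu_K) \in \R^{K\times K}$ and compute $\tilde\bM\tilde\bM^\top = \sum_{k=1}^K \omega_k \bW^\top\bmu_k\bmu_k^\top\bW = \bW^\top\bM_2\bW$. Substituting $\bW = \bU\bD^{-1/2}$ and $\bM_2 = \bU\bD\bU^\top$ and using $\bU^\top\bU = \bI_K$ collapses this to $\bD^{-1/2}\bD\bD^{-1/2} = \bI_K$. Thus $\tilde\bM$ is an orthogonal matrix, so its columns are orthonormal, and in particular mutually orthogonal.

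For the pseudo-inverse claim, note that $\bW = \bU\bD^{-1/2}$ has full column rank $K$, being the product of a matrix with orthonormal columns and an invertible matrix, so $\bW^\dagger = (\bW^\top\bW)^{-1}\bW^\top$; since $\bW^\top\bW = \bD^{-1/2}\bU^\top\bU\bD^{-1/2} = \bD^{-1}$, this gives $\bW^\dagger = \bD\bW^\top = \bD^{1/2}\bU^\top$. Finally, $(\bW^\dagger)^\top\tilde\bmu_k/\sqrt{\omega_k} = \bU\bD^{1/2}\bW^\top\bmu_k = \bU\bD^{1/2}\bD^{-1/2}\bU^\top\bmu_k = \bU\bU^\top\bmu_k = \bmu_k$, using the projector identity from the first paragraph.

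There is essentially no serious obstacle here: the argument is linear-algebraic bookkeeping, each step being a matter of substituting definitions and cancelling the factors $\bU^\top\bU = \bI_K$ and $\bD^{\pm 1/2}$. The one point that genuinely deserves care, and that I would state explicitly, is the rank and column-space identification of $\bM_2$ in the first step: without it one cannot assert that $\bU$ has exactly $K$ columns, nor that $\bU\bU^\top$ fixes each $\bmu_k$, and both facts are used crucially in the recovery formula.
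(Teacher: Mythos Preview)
Your argument is correct and complete. The paper does not actually include a proof of this theorem; it is stated there as a result taken from \cite{AGH14}, with only the informal remark preceding it that the whitening matrix $\bW$ ``ensures that $\{\tilde \bmu_k\}_{k=1}^K$ are orthogonal to each other.'' Your write-up supplies exactly the missing details behind that remark, and the approach you take---computing $\bW^\top \bM_2 \bW = \bI_K$ to get orthonormality, using the full-column-rank formula $\bW^\dagger = (\bW^\top\bW)^{-1}\bW^\top$, and invoking the projector identity $\bU\bU^\top\bmu_k = \bmu_k$ for the recovery formula---is the natural and standard one. Your emphasis on first pinning down $\rank(\bM_2)=K$ and $\mathrm{col}(\bM_2)=\mathrm{span}\{\bmu_1,\ldots,\bmu_K\}$ is well placed, since without it neither the shape of $\bU$ nor the final projector step would be justified.
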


%Now that the population structure is clear, we can construct the sample moments by replacing expectation with sample averages.
As promised, the orthogonal tensor $\tilde \bM_3$ can be estimated from empirical moments. We will make use of the following identity, which is similar to Theorem~\ref{thm:4.1}.
%through empirical moments of $\bW^\top\bx$. Indeed, similar to $\bM_3$,

\beq
	\label{eq:4.5}
		\tilde \bM_3 = \E [(\bW^\top \bx)^{\otimes 3} ] - \sum\limits_{j = 1}^p \sum\limits_{\textnormal{cyc}} (\bW^\top \bM_1) \otimes (\bW^\top \be_j) \otimes (\bW^\top \be_j),
\eeq
%\[
%	\begin{aligned}
%		\widetilde \bM_3 = \E [(\bW^\top \bx)^{\otimes 3} ] & - \sum_{j = 1}^p \bigl( (\bW^\top \bM_1) \otimes \bW_j \otimes \bW_j + \bW_j \otimes (\bW^\top \bM_1) \otimes \bW_j) \\
%		& - \bW_j \otimes \bW_j \otimes (\bW^\top \bM_1) \bigr),
%	\end{aligned}
%\]
where
%$\bW_j$ is the $j$th row of $\bW$ and
we used the cyclic sum notation $$\sum\limits_{\textnormal{cyc}} \ba \otimes \bb \otimes \bc := \ba \otimes \bb \otimes \bc + \bb \otimes \bc \otimes \ba + \bc \otimes \ba \otimes \bb.$$
Note that $\bW^\top \be_j \in \R^K$ is simply the $j$th row of $\bW$. To obtain an estimate of $\tilde \bM_3$, we replace the expectation $\E$ by the empirical average, and substitute $\bW$ and $\bM_1$ by their plug-in estimates. It is worth mentioning that, because $\tilde \bM_3$ has a smaller size than $\bM_3$, computations involving $\tilde \bM_3$ can be implemented more efficiently.

Once we obtain an estimate of $\tilde \bM_3$, which we denote by $\overline \bM_3$,
to recover $\{\bmu_k\}_{k=1}^K, \{\omega_k\}_{k=1}^K$ and $\{\sigma_k^2\}_{k=1}^K$, the only task left is computing the orthogonal tensor decomposition \eqref{eq:4.4} for $\overline \bM_3$. The tensor power method in \cite{AGH14} is shown to solve this problem with provable computational guarantees. We omit the details of the algorithm here. Interested readers are referred to Section~5 of \cite{AGH14} for the introduction and analysis of this algorithm.

To conclude this subsection, we summarize the entire procedure of estimating $\{\bmu_k, \sigma_k, \omega_k\}_{k = 1}^K$ as below.

\vspace{0.1in}

{\it Step 1.} Calculate the sample covariance matrix $\widehat\bSigma := n^{-1} \sum\limits_{i=1}^n (\bx_i - \overline\bx) (\bx_i - \overline \bx)^\top$, its minimum eigenvalue ${\widehat{\sigma}_{\textnormal{ave}}}^2$ and its associated eigenvector $\widehat\bv$.

\vspace{0.1in}

{\it Step 2.} Derive the estimators $\widehat\bM_1, \widehat\bM_2, \widehat\bM_3$ based on Theorem \ref{thm:4.1} by plug-in of empirical moments of $\bx$, $\widehat\bv$ and $\widehat\sigma^2_{\textnormal{ave}}$.

\vspace{0.1in}

{\it Step 3.} Calculate the spectral decomposition $\widehat\bM_2 = \widehat\bU \widehat\bD \widehat \bU^\top$. Let $\widehat \bW = \widehat \bU \widehat \bD^{-1/2}$. Construct an estimator of $\tilde \bM_3$, denoted by $\overline \bM_3$, based on \eqref{eq:4.5} by plug-in of empirical moments of $\widehat\bW^\top \bx$, $\widehat\bW$ and $\widehat\bM_1$. Apply the robust tensor power method in \cite{AGH14} to $\overline \bM_3$ and obtain $\{ \overline{\bmu}_k\}_{k=1}^K$ and $\{\widehat\omega_k\}_{k=1}^K$.

\vspace{0.1in}

{\it Step 4.} Set $\hat{\bW}^{\dag} = \hat \bD^{1/2} \hat \bU^\top$ and $\widehat \bmu_k = (\hat{\bW}^{\dag})^\top \overline{\bmu}_k / \sqrt{\widehat\omega_k}$. Solve the linear equation $\widehat \bM_1 = \sum\limits_{k= 1}^K \widehat\omega_k\widehat\sigma^2_k\widehat \bmu_k$ for $\{\widehat\sigma^2_k\}_{k=1}^K$.

\subsection{Community detection}\label{sec:comdet}
In statistical modeling of networks, the stochastic block model (SBM), first proposed by \cite{Hol83}, has gained much attention in recent years (see \citealp{Abb17} for a recent survey). Suppose our observation is a graph of $n$ vertices, each of which belongs to one of $K$ communities (or blocks). Let the vertices be indexed by $[n]$, and the community that vertex $i$ belongs to is indicated by an unknown $\bfm \theta_i \in [K]$. In SBM, the probability of an edge between two vertices depends entirely on the membership of the communities. To be specific, let $\bfm W \in \R^{K \times K}$ be a symmetric matrix where each entry takes value in $[0,1]$, and let $\bfm A \in \R^{n \times n}$ be the adjacency matrix, i.e., $ A_{ij} = 1$ if there is an edge between vertex $i$ and $j$, and $ A_{ij} = 0$ otherwise. Then, the SBM assumes
\begin{equation*}
\mathbb{P} (A_{ij} = 1) = W_{k \ell} \quad \text{with} \quad \theta_i = k, \theta_j = \ell
\end{equation*}
and $\{A_{ij}\}_{i > j}$ are independent.
Here, for ease of presentation, we allow self-connecting edges.  Figure~\ref{fig:heatmap} gives one realization of the network with two communities.

\begin{figure}[t]
\centering
\includegraphics[scale=0.4]{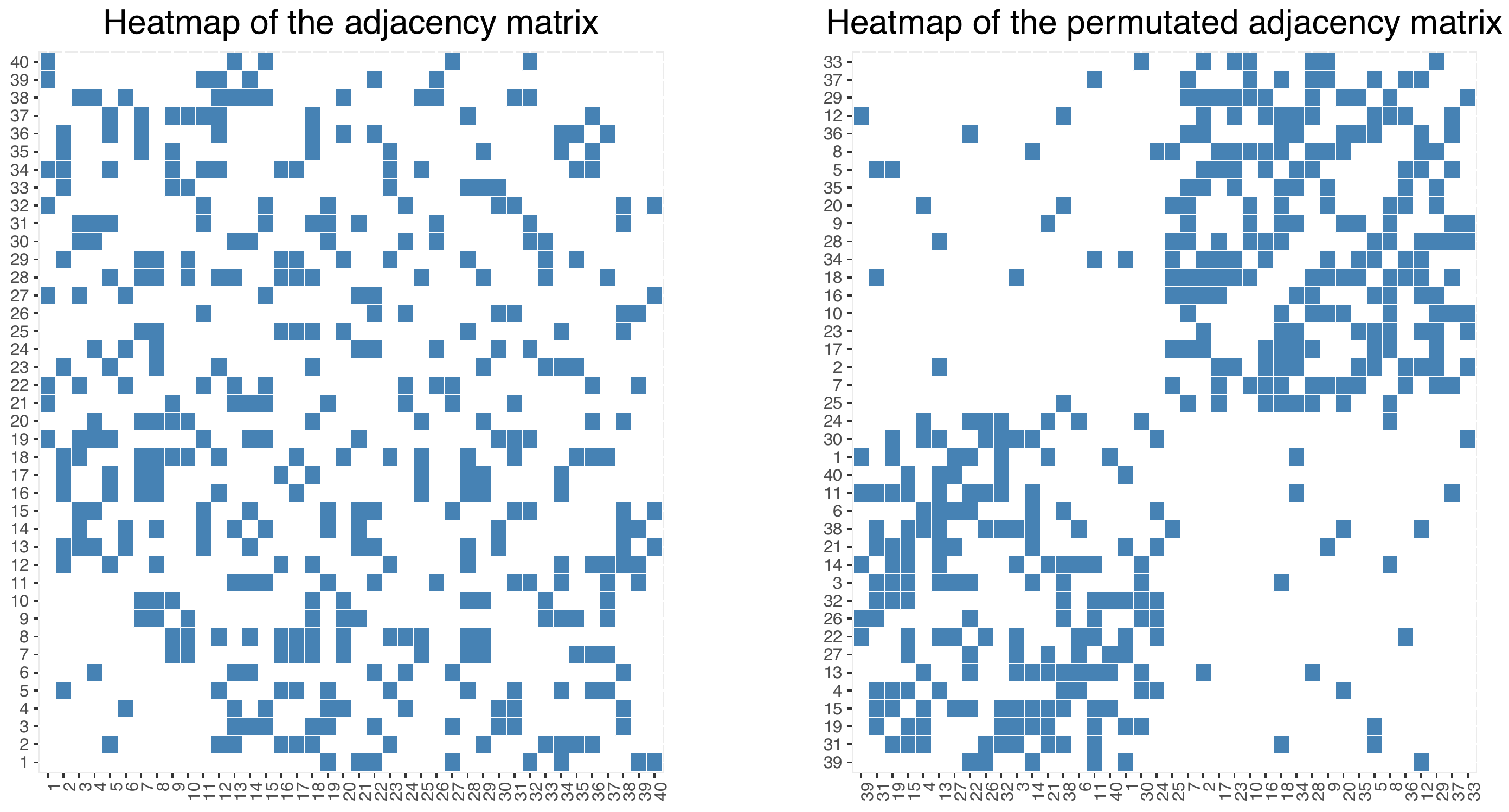}
\caption{In both heatmaps, a dark pixel represents an entry with value $1$ in a matrix, and a white pixel represents an entry with value $0$. The \textbf{left heatmap} shows the (observed) adjacency matrix $\bfm A$ of size $n = 40$ generated from the SBM with two equal-sized blocks ($K=2$), with edge probabilities $5\log n/n$ (within blocks) and $\log n/(4n)$ (between blocks). The \textbf{right heatmap} shows the same matrix with its row indices and column indices suitably permutated based on unobserved $z_i$. Clearly, we observe an approximate rank-$2$ structure in the right heatmap. This motivates estimating $z_i$ via the second eigenvector.}\label{fig:heatmap}
\end{figure}

Though seemingly different, this problem shares a close connection with PCA and spectral methods. Let $\bfm z_i = \bfm e_k$ (namely, the $k$th canonical basis in $\R^K$) if $\theta_i = k$, indicating the membership of $i$th node, and define $\bfm Z = [\bz_1,\ldots,\bz_n]^\top \in \R^{n \times K}$. The expectation of $\bfm A$ has a low-rank decomposition $\E \bfm A = \bfm Z \bfm W \bfm Z^\top$ and
\begin{equation}\label{eq:comdet}
\bfm A = \bfm Z \bfm W \bfm Z^\top + (\bfm A - \E \bfm A) .
\end{equation}
Loosely speaking, the matrix $\bfm Z$ plays a similar role as factors or  loading matrices (unnormalized), and $\bfm A - \E \bfm A$ is similar to the noise (idiosyncratic component). In the ideal situation, the adjacency matrix $\bfm A$ and its expectation are close, and naturally we expect the eigenvectors of $\bfm A$ to be useful for estimating $\theta_i$. Indeed, this observation is the underpinning of many methods \citep{RohChaYu11, Gao15, ASa15}. The vanilla spectral method for network/graph data is as follows:

\vspace{0.1in}

{\it Step 1.} Construct the adjacency matrix $\bfm A$ or other similarity-based matrices;

\vspace{0.1in}

{\it Step 2.} Compute eigenvectors $\bv_1,\ldots,\bv_L$  corresponding to the largest  eigenvalues, and form a matrix $\bfm V = [\bv_1,\ldots,\bv_\ell] \in \R^{n \times L}$;

\vspace{0.1in}

{\it Step 3.} Run a clustering algorithm on the row vectors of $\bfm V$.

\vspace{0.1in}

There are many variants and improvements of this vanilla spectral method. For example, in Step~1, very often the graph Laplacian $\bfm D - \bfm A$ or normalized Laplacian $\bfm D^{-1/2}(\bfm D - \bfm A) \bfm D^{-1/2}$ is used in place of the adjacency matrix, where $\bfm D = \diag(d_1,\ldots,d_n)$, and $d_i = \sum_{j} A_{ij}$ is the degree of vertex $i$. If real-valued similarities or distances between vertices are available, weighted graphs are usually constructed. In Step~2, there are many other refinements over raw eigenvectors in the construction of $\bV$, for example, projecting row vectors of $\bV$ onto the unit sphere \citep{Ng02}, and calculating scores based on eigenvector ratios \citep{Jin15}, etc. In Step~3, a very popular algorithm for clustering is the $K$-means algorithm.

We will look at the vanilla spectral algorithm in its simplest form. Our goal is \textit{exact recovery}, which means finding an estimator $\widehat{\btheta}$ of $\btheta = (\theta_1,\ldots,\theta_n)^\top$ such that as $n \to \infty$,
\[
\mathbb{P}(
\text{there exists a permutation }\pi \text{ of } [K] \text{ s.t. } \widehat{\theta}_i = \pi(\theta_i), \forall\, i \in[n]) = 1-o(1).
\]
Note that we can only determine $\btheta$ up to a permutation since the distribution of our observation is invariant to permutations of $[K]$. There are nice theoretical results, including information limits for exact recovery  in \cite{ABH16}.

Despite its simplicity, spectral methods can be quite sharp for exact recovery in SBM, which succeed in a regime that matches the information limit. The next theorem from \cite{AFWZ17} will make this point clear. Consider the SBM with two balanced blocks, i.e., $K=2$ and $\{i: \theta_i = 1\} = \{i: \theta_i = 2\} = n/2$, and suppose $W_{11} = W_{22} = a\log n/n$, $W_{12} = b\log n/n$ where $a >  b >0$. In this case, one can easily see that the second eigenvector of $\E \bfm A$ is given by $\bv^*_2 $ whose $i$th entry is given by $1/\sqrt{n}$ if
$\theta_i = 1$ and $-1$ otherwise. In other words, $\sgn(\bv^*_2)$ classifies the two communities, where $\sgn(\cdot)$ is the sign function applied to each entry of a vector.
This is shown in Figure~\ref{fig:SBMeig} for the case that $\# \{i: \theta_i=1\} = 2500$ (red curve, left panel), where the second eigenvector $\bfm v_2$ of $\bfm A$ is also depicted (blue curve).
The entrywise closeness between these two quantities is guaranteed by the perturbation theory under $\ell_\infty$-norm \citep{AFWZ17}.

\begin{thm}\label{thm:SBM}
Let $\bfm v_2$ be the normalized second eigenvector of $\bfm A$. If $\sqrt{a} - \sqrt{b} < \sqrt{2}$, then no estimator achieves exact recovery; if $\sqrt{a} - \sqrt{b} > \sqrt{2}$, then both the maximum likelihood estimator and the eigenvector estimator $\sgn(\bfm v_2)$ achieves exact recovery.
\end{thm}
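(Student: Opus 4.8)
Both thresholds are driven by the same one-vertex test --- whether a vertex $v$ has strictly fewer neighbours in its own block than in the other, i.e.\ the event $F_v:=\{X_v<Y_v\}$ where $X_v,Y_v$ count the neighbours of $v$ inside and outside its block. These are, up to the single pair they may share, independent $\mathrm{Poisson}(\tfrac a2\log n)$ and $\mathrm{Poisson}(\tfrac b2\log n)$ variables, so exponential tilting gives the two-sided estimate $\PP(F_v)=n^{-(\sqrt a-\sqrt b)^2/2+o(1)}$, the exponent crossing $1$ exactly at $\sqrt a-\sqrt b=\sqrt2$. I would analyse this test once and feed it into both an information-theoretic converse and the spectral/likelihood analyses.

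\textbf{Converse ($\sqrt a-\sqrt b<\sqrt2$).} Since the MLE is the Bayes rule for exact recovery under a uniform prior on balanced labellings, it is enough to show the MLE fails. But the MLE recovers $\btheta$ only if no single-vertex move increases the likelihood, i.e.\ only if $\#\{v:F_v\}=0$ (moving $v$ changes the log-likelihood by a positive multiple of $Y_v-X_v$). When $\sqrt a-\sqrt b<\sqrt2$ we get $\E\,\#\{v:F_v\}=n^{1-(\sqrt a-\sqrt b)^2/2+o(1)}\to\infty$; because distinct $F_u,F_v$ share only one of the $\asymp\log n$ edges that matter for each, they are weakly correlated, and a second-moment argument gives $\PP(\#\{v:F_v\}=0)\to0$. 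Hence the optimal probability of exact recovery tends to $0$, and no estimator --- in particular not $\sgn(\bfm v_2)$ --- succeeds.

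\textbf{Spectral estimator ($\sqrt a-\sqrt b>\sqrt2$).} Write $\bfm A=\bfm A^*+\bfm W$ with $\bfm A^*=\E\bfm A$ (within-block entries $\tfrac{a\log n}{n}$, between-block entries $\tfrac{b\log n}{n}$) and $\bfm W=\bfm A-\E\bfm A$. A direct computation gives the two nonzero eigenvalues of $\bfm A^*$, $\lambda_1=\tfrac{a+b}{2}\log n$ and $\lambda_2=\tfrac{a-b}{2}\log n$ (so $|\lambda_2|\asymp|\lambda_1|$), with $\bfm v_2^*$ the normalized $\pm1$ block indicator, whence $\delta_2\asymp\log n$; and $\|\bfm W\|_2\lesssim\sqrt{\log n}$ with high probability by standard spectral-norm bounds for sparse random graphs, so $\delta_2\ge5\|\bfm W\|_2$ and the machinery of Theorem~\ref{thm:pert-sym} is available with $\ell=2$. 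The plan is to invoke the first-order expansion underlying that result, $[\bfm v_2]_m=[\bfm v_2^*]_m+\lambda_2^{-1}\langle\bfm w_m,\bfm v_2^*\rangle+r_m$, where $\bfm w_m$ is the $m$th column of $\bfm W$ and $r_m$ a second-order remainder. Since $\langle\bfm w_m,\bfm v_2^*\rangle=n^{-1/2}\big[(X_m-Y_m)-\tfrac{a-b}{2}\log n\big]$, the leading part equals $n^{-1/2}(X_m-Y_m)\big/\big(\tfrac{a-b}{2}\log n\big)$, whose sign is that of $X_m-Y_m$; thus, provided $|r_m|=o(n^{-1/2})$ uniformly in $m$ with high probability, the sign-flip event $\{[\bfm v_2]_m[\bfm v_2^*]_m<0\}$ is contained in $\{X_m<Y_m+o(\log n)\}$, which still has probability $n^{-(\sqrt a-\sqrt b)^2/2+o(1)}$, and a union bound over $m\in[n]$ makes the failure probability $\le n^{1-(\sqrt a-\sqrt b)^2/2+o(1)}\to0$. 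Bounding $r_m$ uses that $\bfm w_m$ is independent of the leave-one-out eigenvector $\bfm v_2^{(m)}$ and that $\|\bfm v_2^{(m)}-\bfm v_2^*\|$ is small (a nested leave-one-out estimate), so $\langle\bfm w_m,\bfm v_2^{(m)}-\bfm v_2^*\rangle$ is a sub-exponential sum of tiny conditional variance, handled by Bernstein.

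\textbf{The MLE ($\sqrt a-\sqrt b>\sqrt2$), and the main obstacle.} For the MLE, union-bound over the set $S$ of misclassified vertices: the truth loses to a labelling differing on $S$ only if a sum of $\asymp|S|n$ independent bounded edge-wise log-likelihood increments falls to $0$, an event of probability $n^{-|S|(\sqrt a-\sqrt b)^2/2+o(|S|)}$ by Chernoff, which beats the $\binom n{|S|}\le n^{|S|}$ candidate labellings once $\sqrt a-\sqrt b>\sqrt2$; the $|S|=1$ term binds, and a peeling argument reduces general $S$ to one-vertex swaps. The hard part throughout is pinning the constant $(\sqrt a-\sqrt b)^2/2$ \emph{exactly}: the converse needs a matching \emph{lower} bound on the Poisson large-deviation probability, and the spectral argument cannot use Theorem~\ref{thm:pert-sym} as a black box, since dividing the perturbation by the eigen-gap $\delta_2$ rather than by $\lambda_2$ would distort the exponent --- one must establish the first-order expansion with a genuinely second-order $r_m$, controlled uniformly in $m$ via a second round of leave-one-out. (The $\ell_2$ Davis--Kahan bound of Corollary~\ref{cor:dk}, giving only $\|\bfm v_2-\bfm v_2^*\|_2\lesssim1/\sqrt{\log n}$, is hopelessly crude entrywise.) Matching the information limit is precisely what makes this delicate.
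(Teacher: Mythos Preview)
Your proposal is correct and follows essentially the same approach the paper indicates: the paper does not give a full proof but states that it ``is based on entry-wise analysis of eigenvectors in a spirit similar to Theorem~\ref{thm:pert-sym}, together with a probability tail bound for differences of binomial variables,'' and refers to \cite{AFWZ17} for details --- precisely the first-order expansion $\bfm v_2 \approx \bfm v_2^* + \lambda_2^{-1}\bfm W\bfm v_2^*$ with leave-one-out control of the remainder, combined with the binomial tail $\PP(X_m<Y_m)=n^{-(\sqrt a-\sqrt b)^2/2+o(1)}$, exactly as you outline. Your observation that Theorem~\ref{thm:pert-sym} as a black-box bound (dividing by $\delta_\ell$) would lose the sharp constant, and that one instead needs the underlying first-order approximation, is the key technical point in \cite{AFWZ17} as well. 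Two minor remarks: $X_v$ and $Y_v$ are sums over disjoint edge sets and hence genuinely independent (no shared pair); and in the converse, since the MLE is constrained to balanced partitions, the relevant local move is a two-vertex swap rather than a single flip, but this only changes the argument cosmetically.
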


The proof of this result is based on entry-wise analysis of eigenvectors in a spirit similar to Theorem~\ref{thm:pert-sym}, together with a probability tail bound for differences of binomial variables.

\subsection{Matrix completion}\label{sec:mc}

In recommendation systems, an important problem is to estimate users' preferences based on history data. Usually, the available data per user is very small compared with the total number of items (each user sees only a small number of movies and buys only a small fraction of books, comparing to the total). Matrix completion is one formulation of such problem.

The goal of (noisy) matrix completion is to estimate a low-rank matrix $\bfm M^* \in \R^{n_1 \times n_2}$ from noisy observations of some entries ($n_1$ users and $n_2$ items). Suppose we know $\rank(\bfm M^*) = K$. For each $i \in [n_1]$ and $j \in [n_2]$, let $I_{ij}$ be i.i.d.\ Bernoulli variable with $\mathbb{P}(I_{ij} = 1) = p$ that indicates if we have observed information about the entry $M^*_{ij}$, i.e., $I_{ij} = 1$ if and only if it is observed. Also suppose that our observation is $M_{ij} = M_{ij}^* + \varepsilon_{ij}$ if $I_{ij} = 1$, where $\varepsilon_{ij}$ is i.i.d.\ $\cN(0,\sigma^2)$ jointly independent of $I_{ij}$.

One natural way to estimate $\bfm M^*$ is to solve
\begin{equation*}
\min_{\bfm X \in \R^{n_1 \times n_2}}\frac{1}{2} \left\| \mathcal{P}_\Omega(\bfm M) - \mathcal{P}_\Omega(\bfm X) \right\|^2 \quad \text{subject to} \quad \rank(\bfm X) = K,
\end{equation*}
where $\mathcal{P}_\Omega : \R^{n_1 \times n_2} \to \R^{n_1 \times n_2}$ is the sampling operator defined by $[\mathcal{P}_\Omega(\bfm X)]_{ij} = I_{ij} X_{ij},\; \forall i,j$. The minimizer of this problem is essentially the MLE for $M^*$. Due to the nonconvex constraint $\rank(\bfm X) = K$, it is desirable to relax this optimization into a convex program. A popular way to achieve that is to transform the rank constraint into a penalty term $\lambda \| \bfm X \|_*$ that is added to the quadratic objective function, where $\lambda$ is a tuning parameter and $\| \cdot \|_*$ is the nuclear norm (that is, the $\ell_1$ norm of the vector of all its singular values), which encourages a solution with low rank (number of nonzero components in that vector). A rather surprising conclusion from \cite{CRe09} is that in the noiseless setting, solving the relaxed problem yields the same solution as the nonconvex problem with high probability.

We can view this problem from the perspective of factor models. The assumption that $\bfm M^*$ has low rank can be justified by interpreting each $M_{ij}$ as the linear combination of a few latent factors. Indeed, if $M_{ij}$ is the preference score of user $i$ for item $j$, then it is reasonable to posit $M_{ij} = \bfm b_i^\top \bfm f_j$, where $\bfm f_j \in \R^K$ is the features item $j$ possesses and $\bfm b_i \in \R^K$ is the tendency of user $i$ towards the features. In this regard, $\bfm M^* = \bfm B \bfm F^\top$ can be viewed as the part explained by the factors in the factor models.

This discussion motivates us to write our observation as
\begin{equation*}
\mathcal{P}_\Omega(\bfm M) = p \bfm M^* + (\mathcal{P}_\Omega(\bfm M^*) - \E\mathcal{P}_\Omega(\bfm M^*)  + \mathcal{P}_\Omega(\bfm E)), \quad \text{where} ~ \bfm E := (\varepsilon_{ij})_{i,j} \in \R^{n_1 \times n_2},
\end{equation*}
since $\E\mathcal{P}_\Omega(\bfm M^*) = p \bfm M^*$.  This decomposition gives the familiar ``low-rank plus noise" structure. It is natural to conduct PCA on $\mathcal{P}_\Omega(\bfm M)$ to extract the low-rank part.

Let the best rank-$K$ approximation of $\mathcal{P}_\Omega(\bfm M)$ be given by $\bfm U \diag( \sigma_1,\ldots, \sigma_K) \bfm V^\top$, where $\{\sigma_k\}_{k=1}^K$ are the largest $K$ singular values in descending order, and columns of $\bfm U\in \R^{n_1 \times K}, \bfm V \in \R^{n_2 \times K}$ correspond to their normalized left and right singular vectors, respectively. Similarly, we have singular value decomposition $\bfm M^* = \bfm U^* \diag( \sigma_1^*,\ldots, \sigma_K^*) (\bfm V^*)^\top$. The following result from \cite{AFWZ17} provides entry-wise bounds for our estimates. For a matrix, denote by $\| \cdot \|_{\max}$ the largest absolute value of all entries, and $\| \cdot \|_{2\to \infty}$ the largest $\ell_2$ norm of all row vectors.

\begin{thm}\label{thm:MC}
Let $n = n_1 + n_2$, $\eta = \max \{ \| \bfm U \|_{2\to \infty}, \| \bfm V \|_{2\to \infty} \}$ and $\kappa = \sigma_1^*  / \sigma_K^*$. There exist constants $C,C'>0$ and an orthogonal matrix $\bfm R \in \R^{K \times K}$ such that the following holds. If $p \ge 6\log n/n$ and $\kappa \frac{n (\| \bfm M^* \|_{\max} + \sigma )}{\sigma_r^*} \sqrt{\frac{\log n}{np}} \le 1/C$, then with at least probability $1 - C/n$,
\begin{align*}
\max \{ \| \bfm U \bfm R - \bfm U^* \|_{\max} , \| \bfm V \bfm R - \bfm V^* \|_{\max} \} &\le C'\eta \kappa \frac{n (\| \bfm M^* \|_{\max} + \sigma )}{\sigma_r^*} \sqrt{\frac{\log n}{np}}, \\
\| \bfm U \diag\{ \sigma_1,\ldots, \sigma_K\} \bfm V^\top - \bfm M^* \|_{\max} & \le C' \eta^2 \kappa^4( \| \bfm M^* \|_{\max} + \sigma)  \sqrt{\frac{n\log n}{p}}.
\end{align*}
\end{thm}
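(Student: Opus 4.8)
The plan is to reduce the statement to the entry-wise perturbation machinery of Theorem~\ref{thm:pert-sym} and Corollary~\ref{cor:pert-assym} (in the eigenspace form of \cite{AFWZ17}), applied to a rescaled observation matrix. Set $\widehat{\bfm M} := p^{-1}\mathcal{P}_\Omega(\bfm M)$ and work throughout with the SVD of $\widehat{\bfm M}$, which shares the singular vectors $\bfm U,\bfm V$ with $\mathcal{P}_\Omega(\bfm M)$ and whose rank-$K$ truncation is $\bfm U\diag(\sigma_1,\ldots,\sigma_K)\bfm V^\top$ (after the obvious $p^{-1}$ rescaling of the $\sigma_k$). Since $\E\widehat{\bfm M} = \bfm M^*$, we have $\widehat{\bfm M} = \bfm M^* + \bfm E'$ with $\bfm E' = p^{-1}\bigl(\mathcal{P}_\Omega(\bfm M^*) - p\bfm M^*\bigr) + p^{-1}\mathcal{P}_\Omega(\bfm E)$, whose entries are independent and mean zero: the first (sampling) part is bounded in magnitude by $\|\bfm M^*\|_{\max}/p$, the second (measurement) part is Gaussian on the sampled entries with scale $\sigma/p$, and each coordinate has second moment $\lesssim(\|\bfm M^*\|_{\max}^2+\sigma^2)/p$. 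Passing to the symmetric dilation of $\widehat{\bfm M}$ (the $(n_1+n_2)\times(n_1+n_2)$ symmetric matrix whose off-diagonal blocks are $\widehat{\bfm M}$ and $\widehat{\bfm M}^\top$) converts the singular-subspace problem into a symmetric eigenspace problem whose perturbation is the dilation of $\bfm E'$; because $\rank(\bfm M^*)=K$, the relevant eigen-gap is $\sigma_K^*$.

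First I would control $\opnorm{\bfm E'}$ by matrix Bernstein --- truncating the Gaussian part of $\bfm E'$ at level $\asymp\sigma\sqrt{\log n}$ so that the summands are bounded --- to get, on an event of probability at least $1-C/n$ and under $p\gtrsim\log n/n$, that $\opnorm{\bfm E'}\lesssim(\|\bfm M^*\|_{\max}+\sigma)\sqrt{n\log n/p}$ with $n=n_1+n_2$. The hypothesis $\kappa\,\frac{n(\|\bfm M^*\|_{\max}+\sigma)}{\sigma_K^*}\sqrt{\frac{\log n}{np}}\le 1/C$ then reads $\kappa\,\opnorm{\bfm E'}/\sigma_K^*\le 1/C$, so in particular $\sigma_K^*\ge 5\,\opnorm{\bfm E'}$ and the eigen-gap condition of the perturbation results is in force.

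Next I would invoke the eigenspace version of Theorem~\ref{thm:pert-sym}/Corollary~\ref{cor:pert-assym}: it produces an orthogonal $\bfm R\in\R^{K\times K}$ with, schematically,
\[
\|\bfm U\bfm R-\bfm U^*\|_{2\to\infty}\ \lesssim\ \kappa\,\frac{\opnorm{\bfm E'}}{\sigma_K^*}\,\|\bfm U^*\|_{2\to\infty}\ +\ \frac{1}{\sigma_K^*}\max_{i\in[n_1]}\bigl|\langle(\bfm E')_{i,\cdot},\,\widetilde{\bfm v}^{(i)}\rangle\bigr| ,
\]
and the analogue for $\bfm V$, where $\|\bfm C\|_{2\to\infty}$ is the largest row $\ell_2$-norm, $(\bfm E')_{i,\cdot}$ is the $i$th row of $\bfm E'$, and $\widetilde{\bfm v}^{(i)}$ denotes a basis for the right top-$K$ singular subspace of $\widehat{\bfm M}$ with the $i$th row of $\bfm E'$ zeroed out. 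The first term is already of the advertised size once $\|\bfm U^*\|_{2\to\infty}$ is replaced by $\eta$; this replacement is legitimate by a short bootstrap, since the bound (proved first with $\|\bfm U^*\|_{2\to\infty}$ on the right) forces $\|\bfm U\|_{2\to\infty}\asymp\|\bfm U^*\|_{2\to\infty}$. For the leave-one-out term, by construction $(\bfm E')_{i,\cdot}$ is independent of $\widetilde{\bfm v}^{(i)}$; its coordinates are independent, mean zero, bounded/sub-Gaussian with second moment $\lesssim(\|\bfm M^*\|_{\max}^2+\sigma^2)/p$, so Bernstein gives $|\langle(\bfm E')_{i,\cdot},\widetilde{\bfm v}^{(i)}\rangle|\lesssim(\|\bfm M^*\|_{\max}+\sigma)\sqrt{(\log n)/p}$ uniformly in $i$ on a high-probability event. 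Dividing by $\sigma_K^*$ shows this term is dominated by the first (because $\eta\sqrt n\ge 1$), and since $\|\cdot\|_{\max}\le\|\cdot\|_{2\to\infty}$ this gives the first displayed inequality of the theorem; the bound for $\bfm V$ is identical with rows replaced by columns.

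For the reconstruction bound I would write $\bfm U\diag(\sigma_1,\ldots,\sigma_K)\bfm V^\top=\bfm U\bfm U^\top\widehat{\bfm M}\bfm V\bfm V^\top$ (up to the $p^{-1}$ rescaling) and $\bfm M^*=\bfm U^*(\bfm U^*)^\top\bfm M^*\bfm V^*(\bfm V^*)^\top$, and telescope the difference into three pieces: one carrying the factor $\bfm U\bfm U^\top-\bfm U^*(\bfm U^*)^\top$, a middle piece with $\widehat{\bfm M}-\bfm M^*=\bfm E'$ sandwiched between projectors, and one carrying $\bfm V\bfm V^\top-\bfm V^*(\bfm V^*)^\top$. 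The outer pieces are bounded in $\|\cdot\|_{\max}$ via the $\|\cdot\|_{2\to\infty}$ estimates above (using $\|\bfm U\bfm U^\top-\bfm U^*(\bfm U^*)^\top\|_{2\to\infty}\lesssim\|\bfm U\bfm R-\bfm U^*\|_{2\to\infty}$ and $\opnorm{\widehat{\bfm M}}\lesssim\sigma_1^*$), each costing additional factors of $\kappa$; the middle piece reduces to $\|(\bfm U^*)^\top\bfm E'\bfm V^*\|_{\max}$-type quantities, again small by the independence/Bernstein argument. Carrying the largest accumulated power of $\kappa=\sigma_1^*/\sigma_K^*$ and of $\eta$ yields the stated $\kappa^4$ and $\eta^2$. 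The main obstacle throughout is the leave-one-out estimate in the third paragraph (and its counterpart in the middle piece above): one must rigorously decouple the perturbed singular subspace from the deleted row/column of $\bfm E'$ while keeping the noise concentration sharp --- because the $1/p$ rescaling inflates the ranges of the entries of $\bfm E'$ relative to their variances, one must use Bernstein-type (rather than Hoeffding-type) inequalities and truncate the Gaussian part --- and then verify that the bootstrap for $\eta$ closes and that the bookkeeping in the last step never pushes the $\kappa$-exponent past $4$.
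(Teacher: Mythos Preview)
The paper does not give its own proof of this theorem: it is quoted verbatim from \cite{AFWZ17} (see the sentence preceding the statement in Section~\ref{sec:mc}), and the appendix contains proofs only of Corollary~\ref{cor:dk}, Theorem~\ref{thm:pert-sym}, Corollary~\ref{cor:pert-assym}, and Lemma~\ref{lem:3.1}. So there is nothing in the paper to compare your attempt against.

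That said, your proposal is essentially the argument of \cite{AFWZ17} itself, and the broad strokes are right: rescale $\mathcal{P}_\Omega(\bfm M)$ by $p^{-1}$ to center the noise, dilate to a symmetric problem, bound $\opnorm{\bfm E'}$ by matrix Bernstein (with Gaussian truncation) to activate the eigen-gap hypothesis, and then run the leave-one-out entry-wise analysis. Two cautions. First, Theorem~\ref{thm:pert-sym} and Corollary~\ref{cor:pert-assym} as stated here handle a \emph{single} eigenvector/singular vector under the condition $|\lambda_\ell|\asymp\max_k|\lambda_k|$; to obtain the $2\to\infty$ bound for the full $K$-dimensional subspace and the orthogonal aligner $\bfm R$ you must invoke the genuine eigenspace version in \cite{AFWZ17}, not the simplified single-vector statements proved in this paper's appendix. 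Second, the quantity $\eta$ in the statement is built from the \emph{empirical} $\bfm U,\bfm V$, whereas the natural output of the perturbation argument involves $\|\bfm U^*\|_{2\to\infty},\|\bfm V^*\|_{2\to\infty}$; your ``short bootstrap'' is the right way to close this, but it is where the condition number $\kappa$ first enters on the right-hand side, and you should check that the self-bounding inequality actually closes (it does, under the smallness hypothesis $\kappa\opnorm{\bfm E'}/\sigma_K^*\le 1/C$). The $\kappa^4$ bookkeeping in the reconstruction step is delicate but your telescoping outline matches what is done in \cite{AFWZ17}.
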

We can simplify the bounds with a few additional assumptions. If $n_1 \asymp n_2$, then $\eta$ is of order $O(\sqrt{K/n})$ assuming a bounded coherence number. In addition, if $\kappa$ is also bounded, then
\begin{equation*}
\| \bfm U \diag\{ \sigma_1,\ldots, \sigma_K\} \bfm V^\top - \bfm M^* \|_{\max}  \lesssim ( \| \bfm M^* \|_{\max} + \sigma)  \sqrt{\frac{\log n}{np}}.
\end{equation*}
We remark that the requirement on the sample ratio $p \gtrsim \log n/n$ is the weakest condition necessary for matrix completion, which ensures each row and column and sampled with high probability. Also, the entry-wise bound above can recover the Frobenius bound \citep{KMO101} up to a log factor. It is more precise than the Frobenius bound, because the latter only provides control on average error.

\subsection{Synchronization problems}

Synchronization problems are a class of problems in which one estimates signals from their pairwise comparisons. Consider the phase synchronization problem as an example, that is, estimating $n$ angles $\theta_1,\ldots,\theta_n$ from noisy measurements of their differences. We can express an angle $\theta_\ell$ in the equivalent form of a unit-modulus complex number $z_\ell = \exp(i \theta_\ell)$, and thus, the task is to estimate a complex vector $\bfm z = (z_1,\ldots, z_n)^\top \in \mathbb{C}^n$. Suppose our measurements have the form $C_{\ell k} = \bar z_\ell z_k + \sigma w_{\ell k}$, where $\bar z_\ell$ denotes the conjugate of $z_\ell$, and for all $\ell > k$, $w_{\ell k} \in \mathbb{C}$ is i.i.d.\ complex Gaussian variable (namely, the real part and imaginary part of $w_{\ell k}$ are $\cN(0,1/2)$ and independent). Then, the phase of $C_{\ell k}$ (namely $\mathrm{arg}(C_{\ell k})$) encodes the noisy difference $\theta_k - \theta_{\ell}$.

More generally, the goal of a synchronization problem is to estimate $n$ signals from their pairwise measurements, where each signal is an element from a group, e.g., the group of rotations in three dimensions. Synchronization problems are motivated from imaging problems such as cryo-EM \citep{shkolnisky2012viewing}, camera calibration \citep{tron2009distributed}, etc.

Synchronization problems also admit the ``low-rank plus noise" structure. Consider our phase synchronization problem again. If we let $w_{k \ell} = w_{\ell k}$ ($\ell > k$) and $w_{\ell \ell} = 0$, and write $\bfm W = (w_{\ell k})_{\ell,k=1}^n $, then our measurement matrix $\bfm C = (C_{\ell k})_{\ell, k=1}^n$ has the structure
\begin{equation*}
\bfm C  = \bfm z \bfm z^* + \sigma \bfm W,
\end{equation*}
where $^*$ denotes the conjugate transpose. This decomposition has a similar form to \eqref{eq:comdet} in community detection. Note that $\bfm z \bfm z^*$ is a complex matrix with a single nonzero eigenvalue $n$, and $\| \sigma \bfm W \|_2 $ is of order $\sigma \sqrt{n}$ with high probability (which is a basic result in random matrix theory). Therefore, we expect that no estimators can do well if $\sigma \gtrsim \sqrt{n}$. Indeed, the information-theoretic limit is established in \cite{lelarge2016fundamental}. Our next result from \cite{ZhoBou18} gives estimation guarantees if the reverse inequality is true (up to a log factor).

\begin{thm}\label{thm:synch}
Let $\bfm v \in \mathbb{C}^n$ be the leading eigenvector of $\bfm C$ such that $\| \bfm v \|_2 = \sqrt{n}$ and $\bfm v^* \bfm z = |\bfm v^* \bfm z|$. Then, if $\sigma \lesssim \sqrt{n / \log n}$, then with probability $1 - O(n^{-2})$, the relative errors satisfy
\begin{equation*}
n^{-1/2} \| \bfm v - \bfm z \|_2 \lesssim \sigma/\sqrt{n}, \quad \text{and} \quad \| \bfm v - \bfm z \|_\infty \lesssim \sigma\sqrt{\log n/n}.
\end{equation*}
Moreover, the above two inequalities also hold for the maximum likelihood estimator.
\end{thm}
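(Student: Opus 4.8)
The plan is to view $\bfm C = \bfm z\bfm z^* + \sigma\bfm W$ as a rank-one signal plus perturbation, push the leading eigenvector through the $\ell_2$ and $\ell_\infty$ perturbation results of Section~\ref{sec:2}, and then transfer the conclusions to the maximum likelihood estimator by a separate leave-one-out argument. The only probabilistic inputs needed are the standard random matrix facts that, for the Hermitian Wigner matrix $\bfm W$ with unit-variance complex Gaussian entries, $\|\sigma\bfm W\|_2 \le C\sigma\sqrt n$ with probability $1 - O(n^{-2})$, and that a mean-zero scalar complex Gaussian of variance $O(1)$ is $O(\sqrt{\log n})$ with the same probability. Under the hypothesis $\sigma \lesssim \sqrt{n/\log n}$ the first bound is $o(n)$, so the eigen-gap of $\bfm z\bfm z^*$ (namely $\lambda_1 = \|\bfm z\|_2^2 = n$ against $0$) dominates $\|\sigma\bfm W\|_2$. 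The perturbation results of Section~\ref{sec:2} are stated for real symmetric matrices, but their Hermitian analogues hold by the same proofs (or by realification to $2n\times 2n$ matrices), with identical bounds.

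For the $\ell_2$ bound I would apply Corollary~\ref{cor:dk} with $\bfm A = \bfm z\bfm z^*$, $\widetilde{\bfm A} = \bfm C$, $K = 1$ and gap $\delta_0 = n$, which gives $\sin\theta(\bfm v, \bfm z) \lesssim \|\sigma\bfm W\|_2 / n \lesssim \sigma/\sqrt n$. Since the normalization $\bfm v^*\bfm z = |\bfm v^*\bfm z|$ is exactly the global phase that minimizes $\|\bfm v - e^{i\phi}\bfm z\|_2$ over $\phi$, passing from the principal angle to the vectors yields $n^{-1/2}\|\bfm v - \bfm z\|_2 = \|\bfm v/\sqrt n - \bfm z/\sqrt n\|_2 \le \sqrt 2\,\sin\theta(\bfm v,\bfm z) \lesssim \sigma/\sqrt n$, which is the first claimed inequality.

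For the $\ell_\infty$ bound I would apply Theorem~\ref{thm:pert-sym} with $\ell = 1$, $\lambda_1 = n$, $\delta_1 = n$ and perturbation $\sigma\bfm W$; the gap condition $\delta_1 \ge 5\|\sigma\bfm W^{(m)}\|_2$ holds simultaneously for all $m$ on the above event because $\|\bfm W^{(m)}\|_2 \le \|\bfm W\|_2$ by Weyl's inequality~\eqref{eq.weyl}. Since $|[\bfm z/\sqrt n]_m| = 1/\sqrt n$, the first term of \eqref{ineq:pert-sym} is $O\bigl((\sigma\sqrt n/n)(1/\sqrt n)\bigr) = O(\sigma/n)$. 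In the second term, $\sigma\langle\bfm w_m, \widetilde{\bfm v}_1^{(m)}\rangle/n$, the leave-one-out vector $\widetilde{\bfm v}_1^{(m)}$ depends only on $\bfm W^{(m)}$ and is hence independent of the $m$th column $\bfm w_m$; conditionally on $\widetilde{\bfm v}_1^{(m)}$ (a unit vector), $\langle\bfm w_m, \widetilde{\bfm v}_1^{(m)}\rangle$ is a scalar complex Gaussian of variance at most $\|\widetilde{\bfm v}_1^{(m)}\|_2^2 = 1$, so it is $O(\sqrt{\log n})$ uniformly over $m\in[n]$ after a union bound. Thus $|[\widetilde{\bfm v}_1 - \bfm z/\sqrt n]_m| \lesssim \sigma\sqrt{\log n}/n$ for every $m$, and rescaling by $\sqrt n$ (so that $\|\bfm v\|_2 = \sqrt n$) gives $\|\bfm v - \bfm z\|_\infty \lesssim \sigma\sqrt{\log n/n}$.

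It remains to obtain the same two bounds for the MLE $\widehat{\bfm z} = \argmax_{|x_\ell| = 1}\bfm x^*\bfm C\bfm x$, following \cite{ZhoBou18}. Since $\widehat{\bfm z}$ has no closed spectral form, the plan is to run a leave-one-out argument on its first-order optimality condition, written coordinatewise as $\widehat z_m = (\bfm C\widehat{\bfm z})_m / |(\bfm C\widehat{\bfm z})_m|$: introduce $\widehat{\bfm z}^{(m)}$, the MLE of the problem with row and column $m$ of $\sigma\bfm W$ zeroed out; show $\widehat{\bfm z}^{(m)}$ is independent of $\bfm w_m$ and already $\ell_2$-close to $\bfm z$ (an a priori bound, obtained from $\bfm x^*\bfm C\bfm x \ge \bfm z^*\bfm C\bfm z$ at the optimum forcing $|\widehat{\bfm z}^*\bfm z|$ near $n$, together with a Davis--Kahan-type estimate); establish a stability estimate showing $\widehat{\bfm z}$ and each $\widehat{\bfm z}^{(m)}$ are $\ell_2$-close; and finally bound $|(\widehat{\bfm z} - \bfm z)_m|$ by the now-Gaussian quantity $\sigma|\langle\bfm w_m, \widehat{\bfm z}^{(m)}\rangle|/|(\bfm C\widehat{\bfm z})_m|$ plus lower-order terms. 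I expect this last part to be the main obstacle: unlike the eigenvector the MLE is only implicitly defined, so one must first seed the bootstrap with an a priori $\ell_2$ error bound, prove a quantitative stability estimate relating $\widehat{\bfm z}$ to every $\widehat{\bfm z}^{(m)}$, and control the unit-modulus rounding step, all while keeping the total failure probability at $O(n^{-2})$.
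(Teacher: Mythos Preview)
The paper does not supply its own proof of this theorem: it is quoted from \cite{ZhoBou18} (``Our next result from \cite{ZhoBou18} gives estimation guarantees\ldots''), and no argument for it appears in the appendix. So there is nothing in the paper to compare your proposal against line by line.

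That said, your plan is exactly the intended application of the paper's toolkit and matches the strategy of the cited source. The $\ell_2$ bound via Corollary~\ref{cor:dk} and the $\ell_\infty$ bound via the Hermitian version of Theorem~\ref{thm:pert-sym}, exploiting independence of $\bfm w_m$ and $\widetilde{\bfm v}_1^{(m)}$ to make $\langle \bfm w_m,\widetilde{\bfm v}_1^{(m)}\rangle$ conditionally Gaussian, are precisely what Section~\ref{sec:2} is set up to deliver; the paper even remarks after Theorem~\ref{thm:pert-sym} that this inner product is $O_{\mathbb P}(1)$ under Gaussian noise. Your outline for the MLE via a separate leave-one-out on the first-order condition is also the route taken in \cite{ZhoBou18}, and you have correctly identified the nontrivial steps (a priori $\ell_2$ bound, stability of $\widehat{\bfm z}$ to $\widehat{\bfm z}^{(m)}$, control of the unit-modulus rounding). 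One small point to tidy: Theorem~\ref{thm:pert-sym} delivers the entrywise bound ``up to sign'' (up to a global phase in the Hermitian case), so you should remark that the normalization $\bfm v^*\bfm z = |\bfm v^*\bfm z|$ selects this optimal phase, just as you did for the $\ell_2$ bound.
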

Note that the eigenvector of a complex matrix is not unique: for any $\alpha \in \R$, the vector $e^{i \alpha} \bfm v$ is also an eigenvector, so we fix the global phase $e^{i \alpha}$ by restricting $\bfm v^* \bfm z = |\bfm v^* \bfm z|$. Note also that the maximum likelihood estimator is different from $\bfm v$, because the MLE must satisfy the entry-wise constraint $|z_\ell| = 1$ for any $\ell \in [n]$.
This result implies consistency of $\bfm v$ in terms of both the $\ell_2$ norm and the $\ell_\infty$ norm if $\sigma \ll \sqrt{n / \log n}$, and thus, provides good evidence that spectral methods (or PCA) are simple, generic, yet powerful.

\appendix

\section{Proofs}

\begin{proof}[Proof of Corollary~\ref{cor:dk}]
Notice that the result is trivial if $\delta_0 \le 2\| \tilde{\bA} - \bA \|_2$, since $\| (\tilde \bA - \bA)\bV \|_2 \le \| \tilde \bA - \bA \|_2$ and $\| \sin \bTheta(\tilde \bA, \bA) \|_2 \le 1$ always hold. If $\delta_0 > 2\| \tilde{\bA} - \bA \|_2$, then by Weyl's inequality,
$$
\mathcal{L}(\tilde{\bV}^\bot) \subset (-\infty, \alpha - \delta_0 + \| \tilde \bA - \bA \|_2] \cup [\beta+ \delta_0 -\| \tilde \bA - \bA \|_2, +\infty   ).
$$
Thus, we can set $\delta = \delta_0 - \| \tilde \bA - \bA \|_2$ in Theorem~\ref{thm:dk} and derive
$$
\| \sin \bTheta(\tilde \bV, \bV) \|_2 \le \frac{\| \tilde \bA - \bA \|_2}{\delta_0 - \| \tilde \bA - \bA \|_2} \le \frac{\| \tilde \bA - \bA \|_2}{\delta_0 - \delta_0/2} = 2\delta_0^{-1} \| \tilde \bA - \bA \|_2.
$$
This proves the spectral norm case.
\end{proof}

\begin{proof}[Proof of Theorem~\ref{thm:pert-sym}]
\textbf{Step 1:} First, we derive a few elementary inequalities: for any $m \in [n]$,
\begin{equation}\label{ineq:elementary}
\| \bfm W^{(m)} \|_2 \le \| \bfm W \|_2, \quad \| \bfm w_m \|_2 \le \| \bfm W \|_2, \quad \| \bfm W - \bfm W^{(m)} \|_2 \le 2\| \bfm W \|_2.
\end{equation}
To prove these inequalities, recall the (equivalent) definition of spectral norm for symmetric matrices:
\begin{equation*}
\| \bfm W \|_2 = \max_{\xx, \yy \in S^{n-1}}\xx^\top \bfm W \yy = \max_{\xx \in S^{n-1}} \| \bfm W \xx \|_2,
\end{equation*}
where $S^{n-1}$ is the unit sphere in $\R^{n}$, and $\xx = (x_1,\ldots, x_n)^\top, \yy = (y_1,\ldots, y_n)^\top$. The first and second inequalities follow from
\begin{align*}
\| \bfm W \|_2 &\ge \max\{ \xx^\top \bfm W \yy : \xx, \yy \in S^{n-1}, x_m = y_m = 0 \} = \| \bfm W^{(m)} \|_2, \quad \text{and}\\
 \| \bfm W \|_2 &\ge \max_{\xx \in S^{n-1}} | \langle \bfm w_m, \xx \rangle | = \| \bfm w_m \|_2.
\end{align*}
The third inequality follows from the first one and the triangle inequality.

\textbf{Step 2:} Next, by the definition of eigenvectors,
\begin{equation}\label{eq:key1}
\widetilde{\bfm v}_\ell - \bfm v_\ell = \frac{\widetilde{\bfm A} \widetilde{\bfm v}_\ell  }{\widetilde \lambda_\ell} - \bfm v_\ell =  \left(\frac{\bfm A \widetilde{\bfm v}_\ell  }{\widetilde \lambda_\ell} - \bfm v_\ell \right) + \frac{\bfm W \widetilde{\bfm v}_\ell  }{\widetilde \lambda_\ell}.
\end{equation}
We first control the entries of the first term on the right-hand side. Using the decomposition \eqref{eq:decomp}, we have
\begin{equation}\label{ineq:Avbnd1}
\left[\frac{\bfm A \widetilde{\bfm v}_\ell  }{\widetilde \lambda_\ell} - \bfm v_\ell \right]_m = \left( \frac{\lambda_\ell}{\widetilde \lambda_\ell} \langle \bfm v_\ell, \widetilde{\bfm v}_\ell \rangle - 1\right) [\bfm v_\ell]_m + \sum_{k \neq \ell, k \le K} \frac{\lambda_k}{\widetilde \lambda_\ell}  \langle \bfm v_k, \widetilde{\bfm v}_\ell \rangle [\bfm v_k]_m, \quad \forall \,m \in [n].
\end{equation}
Using the triangle inequality, we have
\begin{equation*}
\left| \frac{\lambda_\ell}{\widetilde \lambda_\ell}\langle \bfm v_\ell, \widetilde{\bfm v}_\ell \rangle - 1 \right| \le \left| \frac{\lambda_\ell}{\widetilde \lambda_\ell}\langle \bfm v_\ell, \widetilde{\bfm v}_\ell \rangle - \langle \bfm v_\ell, \widetilde{\bfm v}_\ell \rangle \right| + \left| \langle \bfm v_\ell, \widetilde{\bfm v}_\ell \rangle - 1 \right| \le \frac{ | \lambda_\ell - \widetilde \lambda_\ell|}{| \widetilde \lambda_\ell |} + \frac{1}{2} \left\|  \widetilde{\bfm v}_\ell -  \bfm v_\ell  \right\|^2.
\end{equation*}
By Weyl's inequality, $| \widetilde \lambda_\ell - \lambda_\ell | \le \| \bfm W \|_2$, and thus $| \widetilde \lambda_\ell |  \ge | \lambda_\ell | - \| \bfm W \|_2 \ge \delta_\ell - \| \bfm W \|_2$. Also, by Corollary~\ref{cor:dk} (simplified Davis-Kahan's theorem) and its following remark, $\| \widetilde{\bfm v}_\ell -  \bfm v_\ell \|_2 \le 2\sqrt{2}\, \| \bfm W \|_2 / \delta_\ell$. Therefore, under the condition $\delta_\ell \ge 2\| \bfm W \|_2$,
\begin{align*}
\left| \frac{\lambda_\ell}{\widetilde \lambda_\ell}\langle \bfm v_\ell, \widetilde{\bfm v}_\ell \rangle - 1 \right| \le \frac{ \| \bfm W \|_2 }{\delta_\ell - \| \bfm W \|_2} + \frac{ 4 \| \bfm W \|_2^2}{\delta_\ell^2} \le \frac{2 \| \bfm W \|_2 }{\delta_\ell} + \frac{2 \| \bfm W \|_2 }{\delta_\ell}= \frac{4 \| \bfm W \|_2 }{\delta_\ell}.
\end{align*}
Using Corollary~\ref{cor:dk} again, we obtain
\begin{equation*}
\sum_{k \neq \ell, k \le K} \frac{\lambda_k^2}{\widetilde \lambda_\ell^2}  \langle \bfm v_k, \widetilde{\bfm v}_\ell \rangle^2 \lesssim \sum_{k \neq \ell, k \le K} \langle \bfm v_k, \widetilde{\bfm v}_\ell \rangle^2 \le 1 - \langle \bfm v_\ell, \widetilde{\bfm v}_\ell \rangle^2 = \sin^2\theta(\bfm v_\ell, \widetilde{\bfm v}_\ell) \le \frac{ 4\| \bfm W \|_2^2 }{\delta_{\ell}^2},
\end{equation*}
where the first inequality is due to $|\widetilde \lambda_\ell| \ge |\lambda_\ell| - \| \bfm W \|_2 \ge 4|\lambda_\ell|/5$ and the condition $|\lambda_\ell| \asymp \max_{k \in [K]} | \lambda_k|$, and the second inequality is due to the fact that $\{\bfm v_k\}_{k=1}^K$ is a subset of orthonormal basis. Now we use the Cauchy-Schwarz inequality to bound the second term on the right-hand side of \eqref{ineq:Avbnd1} and get
\begin{equation}\label{ineq:bnd1}
\left| \Big[\frac{\bfm A \widetilde{\bfm v}_\ell  }{\widetilde \lambda_\ell} - \bfm v_\ell \Big]_m \right|  \lesssim \frac{ \| \bfm W \|_2}{\delta_\ell} \left( \sum_{k=1}^K [\bfm v_k]_m^2 \right)^{1/2}.
\end{equation}

\textbf{Step 3:} To bound the entries of the second term in \eqref{eq:key1}, we use the leave-one-out idea as follows.
\begin{equation}\label{eq:Wv}
[\bfm W \widetilde{\bfm v}_\ell]_m = [\bfm W \widetilde{\bfm v}_\ell^{(m)}]_m +  [\bfm W (  \widetilde{\bfm v}_\ell - \widetilde{\bfm v}_\ell^{(m)}) ]_m = \langle \bfm w_m,  \widetilde{\bfm v}_\ell^{(m)} \rangle + \langle \bfm w_m, \widetilde{\bfm v}_\ell - \widetilde{\bfm v}_\ell^{(m)} \rangle, \quad \forall\, m \in [n].
\end{equation}
%where $\widetilde{\bfm v}_\ell^{(m)}$ is defined in \eqref{dec:vm}, and $\bfm w_m$ is the $m$th column of $\bfm W$.
%Note that $\langle \bfm w_m,  \widetilde{\bfm v}_\ell^{(m)} \rangle $ is the inner products of two independent vectors, so it can be well controlled if $\bfm w_m$ is a subgaussian vector.
We can bound the second term using the Cauchy-Schwarz inequality: $|\langle \bfm w_m, \widetilde{\bfm v}_\ell - \widetilde{\bfm v}_\ell^{(m)} \rangle| \le  \| \bfm w_m \|_2 \| \widetilde{\bfm v}_\ell - \widetilde{\bfm v}_\ell^{(m)}  \|_2$. The crucial observation is that, if we view $\widetilde{\bfm v}_\ell$ as the perturbed version of $\widetilde{\bfm v}_\ell^{(m)}$, then by Theorem~\ref{thm:dk} (Davis-Kahan's theorem) and Weyl's inequality, for any $\ell \in [K]$,
\begin{equation*}
\| \widetilde{\bfm v}_\ell - \widetilde{\bfm v}_\ell^{(m)}  \|_2 \le \frac{\sqrt{2} \| \bfm \Delta^{(m)} \widetilde{\bfm v}_\ell^{(m)} \|_2}{\widetilde \delta_\ell^{(m)} - \| \bfm \Delta^{(m)} \|_2}, \quad \text{where} ~ \bfm \Delta^{(m)} := \bfm W - \bfm W^{(m)}.
\end{equation*}
%Here, $\widetilde \delta_\ell^{(m)} := \min\{ \widetilde \lambda_{\ell-1}^{(m)} - \widetilde \lambda_\ell^{(m)}, \widetilde \lambda_\ell^{(m)} - \widetilde \lambda_{\ell+1}^{(m)} \}$ is the eigen-gap of $\bfm A + \bfm W^{(m)}$, where $ \widetilde \lambda_\ell^{(m)}$ is the eigenvalue corresponding to $ \widetilde{\bfm v}_\ell^{(m)}$ (assuming $\widetilde \lambda_0^{(m)} = +\infty$ and $\widetilde \lambda_{K+1}^{(m)} = -\infty$ as before).
Here, $\widetilde \delta_\ell^{(m)} $ is the eigen-gap of $\bfm A + \bfm W^{(m)}$, and it satisfies $\widetilde \delta_\ell^{(m)} \ge \delta_\ell - 2 \| \bfm W^{(m)} \|_2$ since $| \widetilde \lambda_i^{(m)} - \lambda_i | \le \| \bfm W^{(m)} \|_2 $ for all $i \in [n]$, by Weyl's inequality. By \eqref{ineq:elementary}, we have $\widetilde \delta_\ell^{(m)} - \| \bfm \Delta^{(m)} \|_2 \ge \delta_\ell - 4 \| \bfm W \|_2$. Thus, under the condition $\delta_\ell \ge 5\| \bfm W \|_2$, we have
\begin{equation*}
\| \widetilde{\bfm v}_\ell - \widetilde{\bfm v}_\ell^{(m)}  \|_2 \lesssim \frac{\| \bfm \Delta^{(m)} \widetilde{\bfm v}_\ell^{(m)} \|_2}{\delta_\ell}.
%\le \frac{|\langle \bfm w_m,  \widetilde{\bfm v}_\ell^{(m)} \rangle |  }{\delta_\ell}
\end{equation*}
%where, the first inequality is due to $\delta_\ell - 4 \| \bfm W \|_2 \ge \delta_\ell -  4 \delta_\ell / 5 = \delta_\ell / 5 $, and the second inequality follows from the observation that the $m$th entry of the vector $\bfm \Delta^{(m)} \widetilde{\bfm v}_\ell^{(m)}$ is exactly $\langle \bfm w_m,  \widetilde{\bfm v}_\ell^{(m)} \rangle$, and other
Note that the $m$th entry of the vector $\bfm \Delta^{(m)} \widetilde{\bfm v}_\ell^{(m)}$ is exactly $\langle \bfm w_m,  \widetilde{\bfm v}_\ell^{(m)} \rangle$, and other entries are $W_{im} [\widetilde{\bfm v}_\ell^{(m)}]_m$ where $i \neq m$. Thus,
\begin{equation*}
\| \widetilde{\bfm v}_\ell - \widetilde{\bfm v}_\ell^{(m)}  \|_2 \lesssim \frac{1}{\delta_\ell} \left( \langle \bfm w_m,  \widetilde{\bfm v}_\ell^{(m)} \rangle^2 + \sum_{i\neq m} W_{im}^2 [\widetilde{\bfm v}_\ell^{(m)}]_m^2  \right)^{1/2} \le \frac{1}{\delta_\ell}  \left( | \langle \bfm w_m,  \widetilde{\bfm v}_\ell^{(m)}  \rangle | + \| \bfm w_m \|_2 | [\widetilde{\bfm v}_\ell^{(m)}]_m | \right),
\end{equation*}
where we used $\sqrt{a+b} \le \sqrt{a} + \sqrt{b}$ ($a,b \ge 0$). The above inequality, together with $|\langle \bfm w_m, \widetilde{\bfm v}_\ell - \widetilde{\bfm v}_\ell^{(m)} \rangle| \le  \| \bfm w_m \|_2 \| \widetilde{\bfm v}_\ell - \widetilde{\bfm v}_\ell^{(m)}  \|_2$, leads to a bound on $[\bfm W \widetilde{\bfm v}_\ell]_m$ in \eqref{eq:Wv}.
\begin{align}
\left| [\bfm W \widetilde{\bfm v}_\ell]_m \right| &\lesssim |\langle \bfm w_m,  \widetilde{\bfm v}_\ell^{(m)} \rangle| + \frac{ \| \bfm w_m \|_2 }{\delta_\ell} \left( | \langle \bfm w_m,  \widetilde{\bfm v}_\ell^{(m)}  \rangle | + \| \bfm w_m \|_2 | [\widetilde{\bfm v}_\ell^{(m)}]_m | \right) \notag \\
& \lesssim |\langle \bfm w_m,  \widetilde{\bfm v}_\ell^{(m)} \rangle| + \| \bfm w_m \|_2 | [\widetilde{\bfm v}_\ell^{(m)}]_m |  \label{ineq:Wv}
\end{align}
where we used $\delta_\ell^{-1} \| \bfm w_m \|_2 \le \delta_\ell^{-1} \| \bfm W \|_2  < 1$. We claim that $| [\widetilde{\bfm v}_\ell^{(m)}]_m |  \lesssim  ( \sum_{k=1}^K [ \bfm v_k]_m^2 )^{1/2}$. Once this is proved, combining it with \eqref{ineq:bnd1} and \eqref{ineq:Wv} yields the desired bound on the entries of $\widetilde{\bfm v}_\ell - \bfm v_\ell$ in \eqref{eq:key1}:
\begin{align*}
\left| [ \widetilde{\bfm v}_\ell - \bfm v_\ell ]_m \right| &\lesssim \frac{ \| \bfm W \|_2}{\delta_\ell} \left( \sum_{k=1}^K [\bfm v_k]_m^2 \right)^{1/2} + \frac{1}{\delta_\ell} \left(  |\langle \bfm w_m,  \widetilde{\bfm v}_\ell^{(m)} \rangle| + \| \bfm w_m \|_2 | [\widetilde{\bfm v}_\ell^{(m)}]_m |  \right) \\
&\lesssim \frac{ \| \bfm W \|_2}{\delta_\ell} \left( \sum_{k=1}^K [\bfm v_k]_m^2 \right)^{1/2} + \frac{|\langle \bfm w_m,  \widetilde{\bfm v}_\ell^{(m)} \rangle|}{\delta_\ell},
\end{align*}
where, in the first inequality, we used $|\widetilde \lambda_\ell| \ge |\lambda_\ell| - \| \bfm W \|_2 \ge  \delta_\ell - \delta_\ell / 5 = 4 \delta_\ell / 5$, and in the second inequality, we used $\| \bfm w_m \|_2 \le \| \bfm W \|_2$ and the claim.

\textbf{Step 4:} Finally, we prove our claim that $| [\widetilde{\bfm v}_\ell^{(m)}]_m |  \lesssim  ( \sum_{k=1}^K [ \bfm v_k]_m^2 )^{1/2}$. By definition, $\widetilde \lambda_\ell^{(m)} \widetilde{\bfm v}_\ell^{(m)} = (\bfm A + \bfm W^{(m)}) \widetilde{\bfm v}_\ell^{(m)} $. Note that the $m$th row of $ \bfm W^{(m)} \widetilde{\bfm v}_\ell^{(m)}$ is $0$, since $\bfm W^{(m)}$ has only zeros in its $m$th row. Thus,
\begin{equation*}
 [\widetilde{\bfm v}_\ell^{(m)}]_m = \left( [\widetilde{\bfm v}_\ell^{(m)}]_m- [\bfm v_\ell ]_m \right) + [\bfm v_\ell]_m = \Big[\frac{\bfm A \widetilde{\bfm v}_\ell^{(m)}  }{\widetilde \lambda_\ell^{(m)}} - \bfm v_\ell \Big]_m + [\bfm v_\ell]_m.
\end{equation*}
With an argument similar to the one that leads to \eqref{ineq:bnd1}, we can bound the first term on the right-hand side.
\begin{equation*}
\left| \Big[\frac{\bfm A \widetilde{\bfm v}_\ell^{(m)}  }{\widetilde \lambda_\ell^{(m)}} - \bfm v_\ell \Big]_m \right| \lesssim \frac{ \| \bfm W^{(m)} \|_2}{\delta_\ell} \left( \sum_{k=1}^K [\bfm v_k]_m^2 \right)^{1/2} \le \left( \sum_{k=1}^K [\bfm v_k]_m^2 \right)^{1/2}.
\end{equation*}
Clearly, $| [\bfm v_\ell]_m |$ is also upper bounded by the right-hand side above. This proves our claim and concludes the proof.
\end{proof}

\begin{proof}[Proof of Corollary~\ref{cor:pert-assym}]
Let us construct symmetric matrices $\bfm A, \bfm W, \widetilde{\bfm W}$ of size $n + p$ via a standard \textit{dilation} technique \citep{paulsen2002completely}. Define
\begin{equation*}
\bfm A = \left( \begin{array}{cc}
\mathbf{0} &  \bfm L \\
\bfm L^\top & \mathbf{0}
\end{array}\right), \quad
\bfm W = \left( \begin{array}{cc}
\mathbf{0} &  \bfm E \\
\bfm E^\top & \mathbf{0}
\end{array}\right), ~~ \text{and} ~~ \widetilde{\bfm A} = \bfm A + \bfm W.
\end{equation*}
It can be checked that $\rank(\bfm A) = 2K$, and importantly,
\begin{equation}\label{eq:dilation-decomp}
\bfm A  = \frac{1}{2} \sum_{k=1}^K \sigma_k \left( \begin{array}{c} \bfm u_k \\ \bfm v_k \end{array} \right) \left( \begin{array}{cc} \bfm u_k^\top & \bfm v_k^\top \end{array} \right) - \frac{1}{2} \sum_{k=1}^K \sigma_k \left( \begin{array}{c} \bfm u_k \\ -\bfm v_k \end{array} \right) \left( \begin{array}{cc} \bfm u_k^\top & -\bfm v_k^\top \end{array} \right).
\end{equation}

\textbf{Step 1:} Check the conditions of Theorem~\ref{thm:pert-sym}. The nonzero eigenvalues of $\bfm A$ are $\pm \sigma_k$, ($k \in [K]$), and the corresponding eigenvectors are $(\bfm u_k^\top, \pm \bfm v_k^\top)^\top / \sqrt{2} \in \R^{n + p}$. It is clear that the eigenvalue condition $|\lambda_\ell| \asymp \max_{k \in [K]} | \lambda_k|$ in Theorem~\ref{thm:pert-sym} is satisfied, and the eigen-gap $\delta_\ell$ of $\bfm A$ is exactly $\gamma_\ell$. Since the identity \eqref{eq:dilation-decomp} holds for any matrix constructed from dilation, by applying it to $\bfm W$ we get $\| \bfm W \|_2 = \| \bfm E \|_2$.

\textbf{Step 2:} Apply the conclusion of Theorem~\ref{thm:pert-sym}. Similarly as before, we write $\bfm W^{(m)}$ as the matrix obtained by setting $m$th row and $m$th column of $\bfm W$ to zero, where $m \in [n+p]$. We also denote $\widetilde{\bfm A}^{(m)} = \bfm A + \bfm W^{(m)}$.
Using a similar argument as Step 1, we find
\begin{enumerate}
\item[(1)] the eigenvectors of $\widetilde{\bfm A}$ are $ \displaystyle\binom{\widetilde{\bfm u}_k}{\pm \widetilde{\bfm v}_k}/ \sqrt{2}$,
\item[(2)] the eigenvectors of $\widetilde{\bfm A}^{(i)}$ are $ \displaystyle\binom{*}{\pm \widetilde{\bfm v}^{(i)}_k}/ \sqrt{2}$, $\forall \, i \in[n]$, and
\item[(3)] the eigenvectors of $\widetilde{\bfm A}^{(n+j)}$ are $  \displaystyle\binom{\widetilde{\bfm u}^{(j)}_k}{*}/ \sqrt{2}$, $\forall \, j \in[p]$,
\end{enumerate}
where $*$ means some appropriate vectors we do not need in the proof (we do not bother introducing notations for them). We also observe that
\begin{equation*}
\bfm w_m = \begin{cases}
(\begin{array}{cc} \mathbf{0} & \bfm e_i^\row \end{array})^\top, & m = i \in [n] \\
(\begin{array}{cc} (\bfm e_j^\coln)^\top & \mathbf{0} \end{array})^\top, & m = n+ j, ~ j \in [p]
\end{cases}
\end{equation*}
Note that the inner product between $\bfm w_m$ and the eigenvector of $\widetilde{\bfm A}^{(m)}$ is $\langle (\bfm e_i^\row)^\top,  \pm \widetilde{\bfm v}_k^{(i)} \rangle$ if $m = i \in [n]$, or $\langle \bfm e_j^\coln,  \widetilde{\bfm u}_k^{(j)} \rangle$ if $m = n+j, ~j \in [p]$. Therefore, applying Theorem~\ref{thm:pert-sym} to the first $n$ entries of
\begin{equation*}
\frac{1}{\sqrt{2}} \left( \begin{array}{c} \widetilde{\bfm u}_\ell - \bfm u_\ell \\ \widetilde{\bfm v}_\ell - \bfm v_\ell
\end{array} \right),
\end{equation*}
we obtain the first inequality of Corollary~\ref{cor:pert-assym}, and applying Theorem~\ref{thm:pert-sym} to the last $p$ entries leads to the second inequality.
\end{proof}

\begin{proof}[Proof of Lemma \ref{lem:3.1}]
\begin{align*}
	\E_{\bvarepsilon} [\ltwonorm{\bX^\top \widehat \bbeta_K - \bX^\top \bbeta^*}^2 / n] & = \E_{\bvarepsilon} [\ltwonorm{\bQ_K \bSigma_K \bP_K^\top \bbeta^* + \bQ_K\bQ_K^\top \bvarepsilon - \bX^\top \bbeta^*}^2 / n] \\
    & = \E_{\bvarepsilon} [\ltwonorm{\bQ_K\bQ_K^\top \bvarepsilon - \bQ_{K+}\bSigma_{K+}\bP_{K+}^\top\bbeta^*}^2/n] \\
    & = \frac{K \sigma^2}{n} + \underbrace{{\bbeta^*}^\top \bP_{K+}}_{\balpha ^\top} \bSigma^2_{K+} \underbrace{\bP^\top_{K+} \bbeta^*}_{\balpha}. \\
    & = \frac{K \sigma^2}{n} + \sum\limits_{j=K+1}^d \lambda^2_{j} \alpha^2_j.
\end{align*}
\end{proof}

\bibliography{factor}
\bibliographystyle{ims}

\end{document}